
\documentclass[sigconf, nonacm]{acmart}

\usepackage{balance}  

\settopmatter{printacmref=false}

\usepackage{amsmath}
\usepackage{bm}
\usepackage{listings}
\usepackage{xcolor}
\usepackage{colortbl}
\usepackage{multirow}
\usepackage{hhline}
\usepackage{float}
\usepackage{times}  
\usepackage{textcomp}
\usepackage{adjustbox}
\usepackage[disable]{todonotes}
\newcounter{todocounter}
\usepackage{subcaption}
\usepackage[]{algorithm2e}

\newcommand\vldbdoi{XX.XX/XXX.XX}
\newcommand\vldbpages{XXX-XXX}
\newcommand\vldbvolume{14}
\newcommand\vldbissue{1}
\newcommand\vldbyear{2020}
\newcommand\vldbauthors{\authors}
\newcommand\vldbtitle{\shorttitle}
\newcommand\vldbavailabilityurl{http://vldb.org/pvldb/format_vol14.html}
\newcommand\vldbpagestyle{plain}

\newtheorem{Theorem}{Theorem}
\newtheorem{Definition}{Definition}
\newtheorem{Lemma}{Lemma}

\newtheorem{Example}{Example}

\newenvironment{proofsketch}
{
\noindent \textsc{Proof Sketch.}%
}%
{\qedsymbol}

\usepackage{hyperref}
\usepackage{cleveref}
\usepackage{xspace}

\crefname{Example}{ex.}{ex.}
\Crefname{Example}{Ex.}{Ex.}
\Crefname{figure}{Fig.}{Fig.}
\Crefname{section}{Sec.}{Sec.}
\Crefname{Definition}{Def.}{Def.}
\Crefname{Theorem}{Thm.}{Thm.}
\Crefname{Lemma}{Lem.}{Lem.}

\newcommand{\BG}[1]{\todo[fancyline,inline]{\textbf{Boris says:$\,$} #1}}
\newcommand{\XN}[1]{\todo[color=green!40,fancyline,inline]{\textbf{Xing says:$\,$} #1}}



\newcommand{\pk}{PK\xspace}
\newcommand{\nor}{No-PS\xspace}
\newcommand{\termPBDS}{provenance-based data skipping\xspace}
\newcommand{\upcasePBDS}{Provenance-based Data Skipping\xspace}
\newcommand{\pbds}{PBDS\xspace}
\newcommand{\ps}{PS\xspace}

\newcommand{\pss}[1]{{\ps}{#1}}
\newcommand{\runs}{n_{runs}}

\newcommand{\defas}{\coloneqq}

\newcommand{\card}[1]{\vert{#1}\vert}

\newcommand{\mathtext}[1]{\thickspace\text{#1}\thickspace}

\newcommand{\oNotation}[1]{O({#1})}

\newcommand{\relSchema}{{\bf R}}
\newcommand{\schemaOf}[1]{\textsc{SCH}({#1})}
\newcommand{\dbSchema}{\bf \db}
\newcommand{\arity}{arity}
\newcommand{\att}{a}
\newcommand{\attset}{A}
\newcommand{\sa}{X}
\newcommand{\rel}{R}
\newcommand{\rela}[1]{\textsc{#1}}
\newcommand{\db}{D}
\newcommand{\query}{Q}
\newcommand{\tup}{t}

\newcommand{\udom}{\mathbb{U}}
\newcommand{\mult}[2]{{#1}^{#2}}
\newcommand{\concat}{\circ}
\newcommand{\lbbar}{\{\kern-0.5ex|}
\newcommand{\rbbar}{|\kern-0.5ex\}}
\newcommand{\bag}[1]{\lbbar {#1} \rbbar}

\newcommand{\qPopden}{\ensuremath{\query_1}}
\newcommand{\qAvgden}{\ensuremath{\query_2}}

\newcommand{\parti}{F}
\newcommand{\rparti}{F}
\newcommand{\hparti}{F}
\newcommand{\dbpart}{\mathcal{F}}
\newcommand{\frag}{f}

\newcommand{\domain}[1]{\mathcal{D}(#1)}
\newcommand{\range}{r}
\newcommand{\ranges}{\mathcal{R}}
\newcommand{\rlu}[2]{\ensuremath{\text{\upshape{[{#1},{#2}]}}}}

\newcommand{\prov}[2]{P(#1,#2)}

\newcommand{\provSketch}{\mathcal{P}}

\newcommand{\psSet}{\mathcal{PS}}
\newcommand{\provranges}[3]{\ranges(#1,#2,#3)}
\newcommand{\instOf}[1]{\db_{#1}}
\newcommand{\relInst}[1]{\rel_{#1}}

\newcommand{\schema}[1]{\textsc{Sch}(#1)}
\newcommand{\propagate}[1]{\textsc{Prop}(#1)}
\newcommand{\initialize}[2]{\textsc{Init}_{#2}(#1)}
\newcommand{\instrument}[1]{\textsc{Instr}(#1)}

\newcommand{\quse}[2]{{#1}[{#2}]}
\newcommand{\bv}[1]{\texttt{#1}}
\newcommand{\bvsng}[1]{\textsc{sng}({#1})}
\newcommand{\psatt}{\ensuremath{\lambda}}
\newcommand{\psallatt}{\ensuremath{\Lambda}}
\newcommand{\psa}[1]{\psatt_{{#1}}}


\newcommand{\cthead}{\cellcolor{tableHeadGray}}
\colorlet{LightBlue}{blue!30!}
\colorlet{LightRed}{red!20!}
\colorlet{LightGreen}{green!30!}

\newcommand{\ycell}[1]{{\cellcolor{lightyellow}{#1}}}

\colorlet{LRed}{red!60!}
\colorlet{LGreen}{green!70!}
\colorlet{LBlue}{blue!70!}

\definecolor{white}{rgb}{1,1,1}
\definecolor{black}{rgb}{0,0,0}
\definecolor{lgrey}{rgb}{0.9,0.9,0.9}
\definecolor{llgrey}{rgb}{0.93,0.93,0.93}
\definecolor{lllgrey}{rgb}{0.96,0.96,0.96}
\definecolor{grey}{rgb}{0.7,0.7,0.7}
\definecolor{dgrey}{rgb}{0.5,0.5,0.5}
\definecolor{red}{rgb}{1,0,0}
\definecolor{green}{rgb}{0,1,0}
\definecolor{yellow}{rgb}{1.0, 1.0, 0.0}
\definecolor{darkgreen}{rgb}{0,0.5,0}
\definecolor{darkblue}{rgb}{0,0,0.5}
\definecolor{darkpurple}{rgb}{0.5,0,0.5}
\definecolor{darkdarkpurple}{rgb}{0.3,0,0.3}
\definecolor{blue}{rgb}{0,0,1}
\definecolor{shadegreen}{rgb}{0.95,1,0.95}
\definecolor{shadeblue}{rgb}{0.95,0.95,1}
\definecolor{shadered}{rgb}{1,0.85,0.85}
\definecolor{oddRowGrey}{rgb}{0.95,0.95,0.95}
\definecolor{evenRowGrey}{rgb}{0.85,0.85,0.85}
\definecolor{tableHeadGray}{rgb}{0.85,0.85,0.85}
\definecolor{lightgrey}{rgb}{0.88,0.88,0.88}
\definecolor{lightyellow}{rgb}{1.0, 1.0, 0.88}
\definecolor{selectiveyellow}{rgb}{1.0, 0.73, 0.0}

\newcommand{\parttitle}[1]{\smallskip\noindent\textbf{#1.}}


\newcommand{\projection}{\ensuremath{\Pi}}
\newcommand{\selection}{\ensuremath{\sigma}}
\newcommand{\aggregation}{\ensuremath{\gamma}}
\newcommand{\Aggregation}[2]{%
 \IfEqCase{#1}{%
   {}{\gamma_{{#2}}}%
   }[\gamma_{{#1};\,{#2}}]%
 }
\newcommand{\aggf}{f}
\newcommand{\agga}{b}
\newcommand{\grpatts}{G}
\newcommand{\grps}[2]{\textsc{grps}({#1},{#2})}
\newcommand{\gdata}[2]{{#1}_{#2}}

\newcommand{\bitor}{\textsc{bitor}}

\newcommand{\union}{\ensuremath{\cup}}
\newcommand{\intersection}{\ensuremath{\cap}}
\newcommand{\difference}{\ensuremath{-}}

\newcommand{\join}{\ensuremath{\bowtie}}
\newcommand{\crossprod}{\ensuremath{\times}}
\newcommand{\duprem}{\ensuremath{\delta}}

\newcommand{\ordlimit}[2]{\ensuremath{\tau_{#1,#2}}}

\newcommand{\win}{\ensuremath{\omega}}
\newcommand{\Win}[3]{\win_{#1,#2,#3}}

\newcommand{\sasc}[1]{{#1}\uparrow}
\newcommand{\sdesc}[1]{{#1}\downarrow}

\newcommand{\scase}[1]{\textsf{select}({#1})}
\newcommand{\swhen}[2]{{#1} \mapsto {#2}}

\newcommand{\pq}{\mathcal{T}}
\newcommand{\pqa}{\pq_1}
\newcommand{\pqb}{\pq_2}
\newcommand{\apq}{\pq[\pvec]}
\newcommand{\pdom}{\mathbb{P}}
\newcommand{\p}{p}
\newcommand{\pvec}{\vec{\p}}
\newcommand{\pv}{v}
\newcommand{\vvec}{\vec{\pv}}
\newcommand{\pinst}[2]{{#1}[#2]}

\newcommand{\attrs}[1]{\textsc{attrs}(#1)}

\newcommand{\pred}{pred}
\newcommand{\allcond}{conds}
\newcommand{\ucond}{uconds}

\newcommand{\expr}{expr}

\newcommand{\ngpred}{\text{non-grp-pred}}

\newcommand{\true}{\ensuremath{\mathbf{true}}\xspace}

\newcommand{\querya}{Q_1}
\newcommand{\queryb}{Q_2}

\newcommand{\gpe}{ge}
\newcommand{\gc}{gc}
\newcommand{\aComp}{\Psi}
\newcommand{\matchContains}[1]{\precsim_{#1}}
\newcommand{\matchMap}{\mathcal{M}}

\DeclareRobustCommand\fignumref[1]{\raisebox{.5pt}{\textcircled{\raisebox{-.75pt} {\small #1}}}}

\newcommand{\trimfigspace}[1][-3mm]{\vspace{#1}}

\newbool{Techreport}
\newrobustcmd{\ifnottechreport}[1]{\ifbool{Techreport}{}{#1}}
\newrobustcmd{\iftechreport}[1]{\ifbool{Techreport}{#1}{}}


\booltrue{Techreport}

\begin{document}
\ifnottechreport{\title{Provenance-based Data Skipping}}
\iftechreport{\title{Provenance-based Data Skipping (TechReport)}}


\author{Xing Niu$^\ast$, Ziyu Liu$^\ast$, Pengyuan Li$^\ast$, Boris Glavic$^\ast$, Dieter Gawlick$^{\alpha}$, Vasudha Krishnaswamy$^{\alpha}$, Zhen H. Liu$^{\alpha}$, Danica Porobic$^{\alpha}$}
\affiliation{\institution{Illinois Institute of Technology$^\ast$, Oracle$^{\alpha}$}}
\email{{xniu7,zliu102, pli26}@hawk.iit.edu, bglavic@iit.edu}
\email{{dieter.gawlick, vasudha.krishnaswamy, zhen.liu, danica.porobic}@oracle.com}







\begin{abstract}
  Database systems analyze queries to determine upfront which data is needed for answering them and use
   indexes and other physical design techniques to speed-up access to that data.
 However, for important classes of queries, e.g., HAVING and top-k queries, it is impossible to determine up-front what data is \emph{relevant}.
To overcome this limitation, we develop provenance-based data skipping (PBDS), a
novel approach that generates provenance sketches to concisely encode
what data is relevant for a query. Once a provenance sketch has been captured it 
is used to speed up subsequent queries. 
PBDS can exploit
  physical design artifacts such as indexes and zone maps. 
Our approach significantly improves performance for
both disk-based and main-memory database systems.
\end{abstract}

\definecolor{lstpurple}{rgb}{0.5,0,0.5}
\definecolor{lstred}{rgb}{1,0,0}
\definecolor{lstreddark}{rgb}{0.7,0,0}
\definecolor{lstredl}{rgb}{0.64,0.08,0.08}
\definecolor{lstmildblue}{rgb}{0.66,0.72,0.78}
\definecolor{lstblue}{rgb}{0,0,1}
\definecolor{lstmildgreen}{rgb}{0.42,0.53,0.39}
\definecolor{lstgreen}{rgb}{0,0.5,0}
\definecolor{lstorangedark}{rgb}{0.6,0.3,0}
\definecolor{lstorange}{rgb}{0.75,0.52,0.005}
\definecolor{lstorangelight}{rgb}{0.89,0.81,0.67}
\definecolor{lstbeige}{rgb}{0.90,0.86,0.45}

\DeclareFontShape{OT1}{cmtt}{bx}{n}{<5><6><7><8><9><10><10.95><12><14.4><17.28><20.74><24.88>cmttb10}{}

\lstdefinestyle{psql}
{
tabsize=2,
basicstyle=\footnotesize\upshape\ttfamily,
language=SQL,
morekeywords={PROVENANCE,BASERELATION,INFLUENCE,COPY,ON,TRANSPROV,TRANSSQL,TRANSXML,CONTRIBUTION,COMPLETE,TRANSITIVE,NONTRANSITIVE,EXPLAIN,SQLTEXT,GRAPH,IS,ANNOT,THIS,XSLT,MAPPROV,cxpath,OF,TRANSACTION,SERIALIZABLE,COMMITTED,INSERT,INTO,WITH,SCN,UPDATED,WINDOW},
extendedchars=false,
keywordstyle=\bfseries,
mathescape=true,
escapechar=@,
sensitive=true
}

\lstdefinestyle{psqlcolor}
{
tabsize=2,
basicstyle=\footnotesize\upshape\ttfamily,
language=SQL,
morekeywords={PROVENANCE,BASERELATION,INFLUENCE,COPY,ON,TRANSPROV,TRANSSQL,TRANSXML,CONTRIBUTION,COMPLETE,TRANSITIVE,NONTRANSITIVE,EXPLAIN,SQLTEXT,GRAPH,IS,ANNOT,THIS,XSLT,MAPPROV,cxpath,OF,TRANSACTION,SERIALIZABLE,COMMITTED,INSERT,INTO,WITH,SCN,UPDATED,FOLLOWING,RANGE,UNBOUNDED,PRECEDING,OVER,PARTITION,WINDOW},
extendedchars=false,
keywordstyle=\bfseries\color{lstpurple},
deletekeywords={count,min,max,avg,sum,lag,first_value,last_value},
keywords=[2]{count,min,max,avg,sum,lag,first_value,last_value,lead,row_number},
keywordstyle=[2]\color{lstblue},
stringstyle=\color{lstreddark},
commentstyle=\color{lstgreen},
mathescape=true,
escapechar=@,
sensitive=true
}

\lstdefinestyle{datalog}
{
basicstyle=\footnotesize\upshape\ttfamily,
language=prolog
}

\lstdefinestyle{pseudocode}
{
  tabsize=3,
  basicstyle=\small,
  language=c,
  morekeywords={if,else,foreach,case,return,in,or},
  extendedchars=true,
  mathescape=true,
  literate={:=}{{$\gets$}}1 {<=}{{$\leq$}}1 {!=}{{$\neq$}}1 {append}{{$\listconcat$}}1 {calP}{{$\cal P$}}{2},
  keywordstyle=\color{lstpurple},
  escapechar=&,
  numbers=left,
  numberstyle=\color{lstgreen}\small\bfseries,
  stepnumber=1,
  numbersep=5pt,
}

\lstdefinestyle{xmlstyle}
{
  tabsize=3,
  basicstyle=\small,
  language=xml,
  extendedchars=true,
  mathescape=true,
  escapechar=£,
  tagstyle=\color{keywordpurple},
  usekeywordsintag=true,
  morekeywords={alias,name,id},
  keywordstyle=\color{lstred}
}

\lstset{style=psqlcolor}

\maketitle

\pagestyle{\vldbpagestyle}
\begingroup\small\noindent\raggedright\textbf{PVLDB Reference Format:}\\
\vldbauthors. \vldbtitle. PVLDB, \vldbvolume(\vldbissue): \vldbpages, \vldbyear.\\
\href{https://doi.org/\vldbdoi}{doi:\vldbdoi}
\endgroup
\begingroup
\renewcommand\thefootnote{}\footnote{\noindent
This work is licensed under the Creative Commons BY-NC-ND 4.0 International License. Visit \url{https://creativecommons.org/licenses/by-nc-nd/4.0/} to view a copy of this license. For any use beyond those covered by this license, obtain permission by emailing \href{mailto:info@vldb.org}{info@vldb.org}. Copyright is held by the owner/author(s). Publication rights licensed to the VLDB Endowment. \\
\raggedright Proceedings of the VLDB Endowment, Vol. \vldbvolume, No. \vldbissue\ %
ISSN 2150-8097. \\
\href{https://doi.org/\vldbdoi}{doi:\vldbdoi} \\
}\addtocounter{footnote}{-1}\endgroup

\ifdefempty{\vldbavailabilityurl}{}{
\vspace{.3cm}
\begingroup\small\noindent\raggedright\textbf{PVLDB Artifact Availability:}\\
The source code, data, and/or other artifacts have been made available at \url{\vldbavailabilityurl}.
\endgroup
}

\section{Introduction}
\label{sec:introduction}
Physical design techniques such as index structures, zone maps, and horizontal partitioning have been used to provide fast access to data based on its characteristics. To use any such data structure to answer a query,  database systems statically analyze the query to determine what data is \textit{relevant} for answering it. Based on this information the database optimizer
(i) optimizes the query to filter out irrelevant data as early as possible (e.g., using techniques like selection-pushdown) and (ii) determines 
how to execute this filtering step efficiently. 
\begin{Example}\label{ex:static-analysis-effective}
\ifnottechreport{Consider query $Q_1:$ \lstinline!SELECT city, popden! \\
\lstinline!FROM cities WHERE state = 'CA'! which returns the population density (\texttt{popden}) of cities in California.}
\iftechreport{Query $\qPopden$ shown in \Cref{tab:queries} returns the population density (\texttt{popden}) of cities in California.}
The database system may use an index on column \texttt{state}, if it exists,  to identify cities in California, reducing the I/O cost of the query. 
\end{Example}

In \Cref{ex:static-analysis-effective},  the \lstinline!WHERE! clause condition of the query implies  that only rows fulfilling the condition \lstinline!state = 'CA'! are \textbf{relevant}.
While this approach of statically analyzing a query to determine a declarative description of what data is relevant is effective for some queries, 
it is often not possible to determine relevance statically.
\ifnottechreport{\begin{figure}[t]
\centering
\begin{minipage}{0.95\linewidth}
\centering
   \begin{subfigure}{1\linewidth}
\centering
\begin{adjustbox}{max width=0.8\linewidth}
\centering
\begin{tabular}{|lc|} \hline
  \cthead $\qAvgden$ &
\begin{minipage}{1\linewidth}
   \begin{subfigure}{1\linewidth}
   		\begin{lstlisting}
SELECT state, avg(popden) AS avgden
FROM cities
GROUP BY state
ORDER BY avgden DESC
LIMIT 1;
		\end{lstlisting}
   \end{subfigure}
 \end{minipage} \\ \hline
  \cthead $\quse{\qAvgden}{\provSketch_{state}}$ &
\begin{minipage}{1\linewidth}
   \begin{subfigure}{1\linewidth}
   		\begin{lstlisting}
SELECT state, avg(popden) AS avgden
FROM cities
WHERE state BETWEEN 'AL' AND 'DE'
GROUP BY state
ORDER BY avgden DESC
LIMIT 1;
		\end{lstlisting}
   \end{subfigure}
 \end{minipage} \\ \hline
\end{tabular}
\end{adjustbox}
   	\caption{Queries}
     \label{tab:queries}
   \end{subfigure}
  \end{minipage}

  \vspace{3pt}
  \begin{minipage}{0.5\linewidth}
    \begin{subfigure}{1.0\linewidth}
{\footnotesize
      \begin{tabular}{c|c|c|c|r}
        \cline{2-4}
 & \cthead{popden}       & \cthead{city}     & \cthead{state} &                        \\ \cline{2-4}
$t_1$ & 4200               & Anchorage         & AK             & \multirow{3}{*}{$f_1$} \\
$t_2$ & \ycell        6000 & \ycell San Diego  & \ycell CA      &                        \\
$t_3$ & \ycell        5000 & \ycell Sacramento & \ycell CA      &                        \\ \cline{2-4}
$t_4$ & 7000               & New York          & NY             & \multirow{2}{*}{$f_3$} \\
$t_5$ & 2000               & Buffalo           & NY             &                        \\ \cline{2-4}
$t_6$ & 3700               & Austin            & TX             & \multirow{2}{*}{$f_4$} \\
$t_7$ & 2500               & Houston           & TX             &                        \\ \cline{2-4}
  \end{tabular}
}
      \caption{cities relation}
      \label{tab:cities}
    \end{subfigure}

  \end{minipage}
\begin{minipage}{0.485\linewidth}
 \begin{subfigure}{1\linewidth}
\centering
{\footnotesize
\begin{tabular}{|c|c|} 
    \hline
  \cthead city	& \cthead popden                            \\ \hline
  San Diego & 6000 \\
  Sacramento & 5000 \\ \hline
\end{tabular}
}
     \caption{Result of $\qPopden$}
     \label{tab:result-q1}
   \end{subfigure}
 \begin{subfigure}{1\linewidth}
\centering
{\footnotesize
\begin{tabular}{|c|c|} 
    \hline
  \cthead state	& \cthead avgden                            \\ \hline
  CA & 5500 \\ \hline
\end{tabular}
}
     \caption{Result of $\qAvgden$ (and $\quse{\qAvgden}{\provSketch_{state}}$)}
     \label{tab:result}
   \end{subfigure}

 \end{minipage}
\begin{subfigure}{1.0\linewidth}
  \centering
  $f_1 = \rlu{AL}{DE}$, $f_2 = \rlu{FL}{MI}$, $f_3 = \rlu{MN}{OK}$, $f_4 = \rlu{OR}{WY}$\\[2mm]
  $g_1 = \rlu{1000}{4000}$, $g_2 = \rlu{4001}{9000}$\\
  \caption{A range partition of relation cities on attribute \texttt{state} ($\parti_{state}$, top) and \texttt{popden} ($\parti_{popden}$, bottom)}
  \label{fig:example-range-part}
\end{subfigure}
\trimfigspace
\caption{Running Example}
\label{fig:eg-db}
\end{figure}}
\iftechreport{\begin{figure}[t]
\centering
\begin{minipage}{0.95\linewidth}
\centering
   \begin{subfigure}{1\linewidth}
\centering
\begin{adjustbox}{max width=0.9\linewidth}
\centering
\begin{tabular}{|lc|} \hline
 \cthead $\qPopden$ &
\begin{minipage}{1\linewidth}
   \begin{subfigure}{1\linewidth}
   		\begin{lstlisting}
SELECT city, popden
FROM cities
WHERE state = 'CA';
		\end{lstlisting}
   \end{subfigure}
   \end{minipage} \\ \hline
  \cthead $\qAvgden$ &
\begin{minipage}{1\linewidth}
   \begin{subfigure}{1\linewidth}
   		\begin{lstlisting}
SELECT state, avg(popden) AS avgden
FROM cities
GROUP BY state
ORDER BY avgden DESC
LIMIT 1;
		\end{lstlisting}
   \end{subfigure}
 \end{minipage} \\ \hline
  \cthead $\quse{\qAvgden}{\provSketch_{state}}$ &
\begin{minipage}{1\linewidth}
   \begin{subfigure}{1\linewidth}
   		\begin{lstlisting}
SELECT state, avg(popden) AS avgden
FROM cities
WHERE state BETWEEN 'AL' AND 'DE'
GROUP BY state
ORDER BY avgden DESC
LIMIT 1;
		\end{lstlisting}
   \end{subfigure}
 \end{minipage} \\ \hline
\end{tabular}
\end{adjustbox}
   	\caption{Queries}
     \label{tab:queries}
   \end{subfigure}
  \end{minipage}

  \vspace{3pt}
  \begin{minipage}{0.5\linewidth}
    \begin{subfigure}{1.0\linewidth}
{\footnotesize
      \begin{tabular}{c|c|c|c|r}
        \cline{2-4}
 & \cthead{popden}       & \cthead{city}     & \cthead{state} &                        \\ \cline{2-4}
$t_1$ & 4200               & Anchorage         & AK             & \multirow{3}{*}{$f_1$} \\
$t_2$ & \ycell        6000 & \ycell San Diego  & \ycell CA      &                        \\
$t_3$ & \ycell        5000 & \ycell Sacramento & \ycell CA      &                        \\ \cline{2-4}
$t_4$ & 7000               & New York          & NY             & \multirow{2}{*}{$f_3$} \\
$t_5$ & 2000               & Buffalo           & NY             &                        \\ \cline{2-4}
$t_6$ & 3700               & Austin            & TX             & \multirow{2}{*}{$f_4$} \\
$t_7$ & 2500               & Houston           & TX             &                        \\ \cline{2-4}
  \end{tabular}
}
      \caption{cities relation}
      \label{tab:cities}
    \end{subfigure}

  \end{minipage}
\begin{minipage}{0.485\linewidth}
 \begin{subfigure}{1\linewidth}
\centering
{\footnotesize
\begin{tabular}{|c|c|} 
    \hline
  \cthead city	& \cthead popden                            \\ \hline
  San Diego & 6000 \\
  Sacramento & 5000 \\ \hline
\end{tabular}
}
     \caption{Result of $\qPopden$}
     \label{tab:result-q1}
   \end{subfigure}
 \begin{subfigure}{1\linewidth}
\centering
{\footnotesize
\begin{tabular}{|c|c|} 
    \hline
  \cthead state	& \cthead avgden                            \\ \hline
  CA & 5500 \\ \hline
\end{tabular}
}
     \caption{Result of $\qAvgden$ (and $\quse{\qAvgden}{\provSketch_{state}}$)}
     \label{tab:result}
   \end{subfigure}

 \end{minipage}
\begin{subfigure}{1.0\linewidth}
  \centering
  $f_1 = \rlu{AL}{DE}$, $f_2 = \rlu{FL}{MI}$, $f_3 = \rlu{MN}{OK}$, $f_4 = \rlu{OR}{WY}$\\[2mm]
  $g_1 = \rlu{1000}{4000}$, $g_2 = \rlu{4001}{9000}$\\
  \caption{A range partition of relation cities on attribute \texttt{state} ($\parti_{state}$, top) and \texttt{popden} ($\parti_{popden}$, bottom)}
  \label{fig:example-range-part}
\end{subfigure}
\caption{Running Example}
\label{fig:eg-db}
\end{figure}}
\begin{Example}\label{ex:static-analysis-ineffective}
Query $\qAvgden$ shown in \Cref{tab:queries} returns the state with the highest average population density per city. Consider the result of this query over an example database (\Cref{tab:result}) shown in \Cref{tab:cities}. California is the state with the highest average population density. 
Thus, only the second and third tuple are needed to produce the query result. One possible declarative description  of the relevant input is \lstinline!state = 'CA'!. However, unlike the previous example, this description is \emph{data-dependent}. For instance, if we delete the fifth row, then New York state has the highest average density and \lstinline!state = 'CA'! is no longer a correct description of  what data is needed. 
\end{Example}

Even though the query in the example above is selective, state-of-the-art systems are incapable of exploiting this selectivity since it is impossible to determine a declarative condition capturing what is relevant by static analysis. In fact, there are many important classes of queries including top-k queries, aggregation queries with \lstinline!HAVING!, and certain types of nested and correlated subqueries, for which static analysis is insufficient to determine relevance. The net effect is that while these queries may be quite selective,  
databases fail to exploit physical design artifacts to speed up their execution.

To overcome this shortcoming of current systems and better utilize existing physical design artifacts, we propose to analyze queries at runtime to determine concise and declarative  descriptions of what data is relevant (sufficient)  for answering a query. We use the provenance of a query to determine such descriptions since
for most provenance models, the provenance \BG{cite} of a query is \emph{sufficient} for answering the query. That is, if we evaluate a query over its provenance 
this yields the same result as evaluating the query over the full database. Specifically, we introduce provenance sketches which are concise descriptions of supersets of the provenance of a query.
Given a horizontal partition of an input table, a provenance sketch records which fragments of the partition contain provenance.
\iftechreport{Since only some tuples of a fragment may be provenance, any provenance sketch encodes a superset of the provenance of a query.}
Similar to query answering with views, a sketch captured for one query is used to speed-up the subsequent evaluation of the same or other queries.

\begin{Example}\label{ex:create-sketch}
Using the Lineage provenance model~\cite{CC09,CW00b}, the provenance of query $\qAvgden$ from \Cref{ex:static-analysis-ineffective} is the set of the tuples highlighted in yellow in~\Cref{tab:cities}. Consider the construction of a provenance sketch based on a range-partition of the input  on attribute \texttt{state} 
which is shown in \Cref{fig:example-range-part}. We assume that states are ordered lexicographically, e.g., \texttt{CA} belongs to the  interval $\rlu{AL}{DE}$. In \cref{tab:cities} we show the fragment that each tuple belongs to on the right, e.g., the first three tuples belong to fragment $f_1$. The provenance sketch $\provSketch_{state}$ of $\qAvgden$ according to this partition is $\{f_1\}$, i.e., 
 $f_1$ is the only fragment that contains provenance.
\end{Example}

\parttitle{Creating Provenance Sketches}
We present techniques 
for instrumenting an input query to compute a provenance sketch that are based on annotation propagation techniques developed for provenance capture (e.g.,~\cite{AF18}\cite{glavic2013using,bhagwat2005annotation}). Our approach has significantly lower runtime and storage overhead than such techniques.
Given a partition for one or more of the relations accessed by a query, we annotate each input row with the fragment it belongs to. These annotations are then propagated through operations ensuring that each intermediate result 
is annotated with a set of fragments that is a superset of its provenance.
\BG{Importantly, the annotation propagation rules are independent of what input partitions are used. Thus, extending our approach to support new types of partitioning input tables is simple: implement the instrumentation needed to annotate tuples with the fragment they belong to.} We  
 use bitvectors to compactly encode sets of fragments and develop fast methods for creating and merging these bitvectors. These optimizations can be implemented using common database extensibility mechanisms.

\parttitle{Using Provenance Sketches}
Once a provenance sketch for a query $\query$ has been created, we would like to use it to speed-up the subsequent execution of $\query$ or queries similar to $\query$. For that we need to be able to instrument a query to restrict its execution to data described by the provenance sketch. How this works specifically depends on what type of partition the provenance sketch is based on.
For example, for range partitioning, we can filter out data not belonging to the sketch using a disjunction of range restrictions.

\begin{Example}\label{ex:use-sketch}
Consider the provenance sketch $\provSketch_{state}$ from \Cref{ex:create-sketch}. It describes a superset of what data is relevant for answering query $\qAvgden$. Thus, we can use it to instrument the query to  filter out irrelevant data early-on. For a query $\query$ and provenance sketch $\provSketch$ we use $\quse{\query}{\provSketch}$ to denote the result of instrumenting $\query$ to filter out data that does not belong to $\provSketch$. To generate $\quse{\qAvgden}{\provSketch_{state}}$ (see \Cref{tab:queries}), we construct a condition over relation \texttt{cities} according to the  provenance sketch. $\provSketch_{state}$ consists of a single fragment $f_1 = [AL,DE]$. The data contained in this fragment can be retrieved using the condition \lstinline!WHERE state BETWEEN 'AL' AND 'DE'!. 
\end{Example}

By translating the sketch into a selection condition, we expose to the database system what data is relevant. Databases are already well-equipped to deal with such conditions and exploit existing physical design, e.g., use an index on attribute \texttt{state}.
\BG{Thus, queries using provenance sketches benefit from index structures, zone maps, etc.} However, provenance sketches with a large number of fragments can result in conditions with a large number of disjunctions that are expensive to evaluate. We present optimizations to speed up 
such expressions.

\parttitle{Sketch Safety}
So far we have assumed that if the provenance for a query is sufficient then so is the superset of the provenance encoded by a sketch. However, this is not always the case.

\begin{Example}\label{ex:not-safe}
Consider the partition of relation  cities on attribute \texttt{popden} shown on the bottom of \Cref{fig:example-range-part}. For convenience we show an identifier for each tuple  on the left (\Cref{tab:cities}). Partitioning the table in this way, the first four tuples belong to $g_2$ since their population density is between $4001$ and $9000$. The remaining three tuples belong to fragment $g_1$: $g_1 = \{t_5, t_6, t_7\}$ and $g_2 = \{t_1, t_2, t_3, t_4\}$. Fragment $g_2$ contains all tuples from the provenance of query $\qAvgden$.  Hence, the provenance sketch corresponding to this  partition is $\provSketch_{popden} = \{ g_2 \}$. If we evaluate $\qAvgden$ over $g_2$, we get a  result \texttt{(NY, 7000)} that is different from the result of the query over the full input relation. The reason is that $g_2$ contains only one tuple with state \texttt{NY} resulting in an average for this state that is higher than the one for 
\texttt{CA}.
\end{Example}

We call a  sketch \emph{safe} if evaluating the query over the data encoded by the sketch yields the same result as evaluating the query over the full input.
We demonstrate that in general it is not possible to determine safety statically at query compile time since safety can be data dependent. In spite of this negative result, we present a \emph{sound} technique that determines safety statically.
\BG{To ensure that only safe provenance sketches are produced by our approach, we need to be able to determine statically which sketch types are safe for a given query.
  As a first result, we prove that any sketch is safe for monotone queries. However, top-k queries and other types of queries that benefit from our approach frequently use non-monotone operators such as aggregation or negation in subqueries. We demonstrate that in general it is not possible to determine statically what sketch types are safe since safety can be data dependent. In spite of this negative result, we present a \emph{sound} technique that determines safety statically.  
  }
\BG{That is, our algorithm may fail to detect that a sketch type is safe for a query, but will never incorrectly label a sketch type as safe when it is not.}

\parttitle{Reusing Sketches for Parameterized Queries}
Parameterized que-\\ries are used to avoid repeated optimization of queries that only difference in constants used in selections. Such queries are frequently used in applications written on top of databases. 
Typically, an application uses a small number of parameterized queries, but executes many instances of each parameterized query. We develop a method that can determine statically whether a provenance sketch captured for one instance of a parameterized query can be used to answer another instance of this query.
Determining whether this is the case requires solving a problem that is closely related to query containment and, thus, unsurprisingly is undecidable in general. We present a sound, but not complete method.
\BG{Example?}

\parttitle{\upcasePBDS}
We develop a framework for creating and using sketches that we refer to as \textit{\termPBDS} (\pbds).
\pbds is used in a self-tuning fashion similar to automated materialized view selection: we 
decide 
when to create and when to use 
provenance sketches 
with the goal to optimize overall query performance. In this work, we assume read-only workloads which is common in OLAP and DISC systems.
We leave maintenance of provenance sketches under updates for future work.
\BG{That being said, to demonstrate the feasibility of a self-tuning solution for provenance sketches, we experimentally evaluate the performance of a simplified version of a self-tuning solution  which operators on  a fixed set of query templates to demonstrate that significant performance improvements are possible when the database system automatically selects when to create and when to use sketches.}


\parttitle{Contributions}
Our main technical contributions are:
\begin{itemize}
\item We introduce \termPBDS, a novel method for analyzing at runtime what data is relevant for a query and introduce provenance sketches as a concise encoding of what subset of the input is relevant for a query. 
\item We develop techniques for 
  capturing provenance sketches by
  instrumenting queries to propagate sketch annotations.
\item We 
  speed up evaluation of queries by instrumenting them
  to filter data based on provenance sketches. By exposing relevance information as  selection conditions to the DBMS,  existing
  physical design can be exploited.
\item We present a sound technique for determining what provenance sketch
  types are safe for which queries.
\item We develop a sound method to determine whether a provenance sketch for an instance $\query_1$ of a parameterized query  can be used to answer an instance $\query_2$ of this query.
\item Using DBMS extensibility mechanisms, we
  implement using query instrumentation.  We
  demonstrate experimentally that it leads to significant performance
  improvements for important query classes  including \lstinline!HAVING! and top-k queries.
\end{itemize}
The remainder of this paper is organized as follows. We review related work in \Cref{sec:related-work} and cover 
background in \Cref{sec:background}. Afterwards, we define provenance sketches in \Cref{sec:prov-sketch}.
In \Cref{sec:safety-check}, we present techniques for determining provenance sketch safety. We investigate how to resue sketches across queries in \Cref{sec:reuse-different}. 
 Our approach for capturing and using sketches is discussed in \Cref{sec:ps-capture,sec:ps-reuse}.  We present experimental results in \Cref{sec:exp} and conclude 
 in \Cref{sec:conclusion}.

\section{Related Work}
\label{sec:related-work}

Our work is related to provenance capture techniques, compression and summarization of provenance,  maintaining query results using provenance, and physical design and self-tuning techniques. 


\parttitle{Physical Design and Self-tuning}
Physical design has been studied intensively 
including index structures~\cite{AD05,AG08,AS13a,BB00a,CD04,G06,GK10a,HN08,IM11,LK13,LL13,LL14b,SR87,SR87,yu2016two}, horizontal and vertical partitioning techniques~\cite{RJ17,SF16,D13,JD12,ZL10,PA04a,AN04,NR89,ceri1982horizontal}, zone maps~\cite{moerkotte1998small,clarke2013storage}, materialized views~\cite{AK12,halevy2001answering,GL01,AC00,GM99,AD98a,chaudhuri1995optimizing,LM95a}, join indexes~\cite{AD05,LR99,OG95,V87}, and many more.
However, databases fail to exploit existing physical design artifacts for important classes of selective queries such as top-k queries, because
it  is not possible to determine statically for such queries what data is relevant. Our work closes this gap by capturing relevance information at runtime and by translating it into selection conditions that database optimizers are well-equipped to handle.
Self-tuning techniques have a long tradition in database research~\cite{D13,GK10a,CN07}. Most closely related to our work are techniques for automated selection of and query answering with materialized views~\cite{DG17,PJ14,AD13c,GL01,halevy2001answering,AC00,GM99,chaudhuri1995optimizing,LM95a}. In contrast to materialized views, which result in storage overhead proportional to result size of queries, our technique  has negligible storage requirements and can exploit existing physical design artifacts.

\parttitle{Provenance Capture}
How to efficiently capture provenance has been studied extensively.
Many approaches encode provenance as annotations on data
and
 propagate such annotations through queries~\cite{KG12,GA12,XN18,XN17}.  
\BG{This can either be achieved by instrumenting queries to capture provenance and executing these queries on a unmodified database system or by extending an existing database engine. 
The first approach has the advantage that it can fully exploit existing database systems for querying of provenance and for optimizing provenance capture queries~\cite{XN18,XN17}. Building a system from scratch enables novel optimizations that are not possible withing the constraints of a classical relational system.}
A plethora of system that capture provenance for database queries have been introduced in recent years, including Perm~\cite{glavic2013using}, GProM~\cite{AF18}, DBNotes~\cite{bhagwat2005annotation}, LogicBlox~\cite{GA12}, Smoke~\cite{PW18}, Trio~\cite{aggarwal2009trio}, declarative Datalog debugging~\cite{KL12}, ExSPAN~\cite{ZS10}, ProvSQL~\cite{SJ18}, M\"uller et al.'s approach~\cite{MD18}, and  Links~\cite{FC18a}.
Like many approaches from related work, we use query instrumentation to capture provenance.
However, we exploit the fact that we only have to generate a single provenance sketch as the output to significantly simplify the instrumented query.


\parttitle{Compressing, Sketching, and Summarizing Provenance}
Early work on compressing provenance graphs such as~\cite{CJ08a} and~\cite{AB09} avoid storing  common substructures in a provenance graph more than once. Heinis et al.~\cite{HA08} studies how to trade space for computation when querying provenance graphs recursively. Malik et al.~\cite{MN10} use bloom filters to create compact sketches that over-approximate provenance. 
Chen et al.~\cite{CL17} apply similar ideas. 
Olteanu et al.~\cite{OS16,OZ11} study factorized representations of provenance and data. 
This line of work has lead to worst-case optimal algorithms for factorizing data wrt. to queries.
Closely related are techniques for provenance
summarization~\cite{AB15a,LGG18,LX16,GK15,DM19,LL20}, 
 intervention-based methods for explaining aggregate query results~\cite{WM13,RS14,RO15} and other approaches for generating explanations of outcomes~\cite{EA14,EF18}. Some of these techniques use declarative descriptions such as selection-patterns~\cite{EA14,EF18,RS14,LL20}.
 However, the 
summaries produced by these techniques 
are typically not sufficient for our purpose, e.g,. they may not encode a superset of the provenance. 
Our approach uses very compact (10s or 100s of bytes) declarative descriptions of a superset of the provenance which
is sufficient for reproducing a query result.

\parttitle{Optimizing Operations with Provenance}
Early work on data provenance already recognized the potential of using provenance for optimizing performance.
Pandas~\cite{IW10,IS10} uses provenance to selectively update data items in the result of a workflow to reflect changes to the workflow's inputs. Provenance has been used to provision for answering what-if queries~\cite{DI13,AK16a,DM13,DM19}.
Smoke~\cite{PW18}  uses provenance to speed-up queries in interactive visualization.
Assadi et al.~\cite{AK16a} create sketches over provenance to provision for
approximate answering of what-if queries.
In contrast to previous work which 
uses provenance or sketches of provenance instead of the original input, we use sketches as a light-weight index structure that allows us to efficiently access a relevant input data. 
\BG{Discuss Smoke in more detail since it also optimizes follow-up queries}


\section{Background}
\label{sec:background}

In  this section we introduce necessary background provenance and introduce relational algebra used in this work.

\subsection{The Relational Data Model and Algebra}
We use bold face (non-bold) to denote relation and database schemas (instances) respectively.
The arity $\arity(\relSchema)$ of 
$\relSchema$ is the number of attributes in $\relSchema$.
Here we use bag semantics and for simplicity will sometimes assume  a universal domain $\udom$.  That is, a relation $\rel$ for schema $\relSchema$ is a bag of tuples (elements of $\udom^{\arity(\relSchema)}$.
We denote bags as $\bag{\cdot}$). 
We use $\mult{\tup}{n} \in \rel$ to denote that tuple $\tup$ appears with multiplicity $n$ in relation $\rel$. 
Figure~\ref{fig:ex-relational-algebra} shows the bag semantics version of relational algebra used in this work.  
$\schemaOf{\query}$ is used to denote the schema of the result of query $\query$. We use $t.A$ to denote the projection of a tuple on a list of scalar expressions and $\concat$ to denote concatenation of tuples. For convenience, we  define $\tup^0 \in R$ to mean that the tuple $\tup$ is not in $\rel$.
The definitions of selection, projection, cross product, duplicate elimination, and set operations are standard.
\BG{Probably at least a comment set operations}
\BG{ADD BACK IF WE USE THIS SOMEWHERE: Union $R \union S $ returns the bag union relations $R$ and relation $S$. Intersection $R \intersection S$ returns the tuples which both exist in relation $R$ and relation $S$ with the smaller of the two multiplicities.   Difference $R \difference S$ return each tuple $\tup$ with a multiplicity $n - m$ where $\tup^{n} \in R$ and $\tup^m \in S$.}
Aggregation $\Aggregation{f(\att)}{G}(\rel)$ groups the input tuples according to their values in attributes $G$ and then computes the aggregation function $f$ over the bag of values of attribute $\att$ for each group.
\XN{The window operator $\Win{f(a) \to x}{G}{O}$ takes as  parameters an aggregation function $f$, a result attribute name $x$, a list of partition attributes $G$, and a list of order expressions $O$ of the form $\sdesc{\att}$ (sort descending on attribute $\att$) or $\sasc{\att}$ (sort ascending on $\att$) with $G \cap O = \emptyset$. This operator extends each input tuple $\tup \in \rel$ with an additional attribute $x$ whose value is computed by applying the aggregation function $f$ to the window for $\tup$. The window for $\tup$ contains the bag of $\att$-values for all tuples $\tup'$ from $\rel$ which agree with $\tup$ on attributes $G$ and for which $\tup.O \leq \tup'.O$. 
For that we assume 
the domains of $O$ are totally ordered.}
Let $<_O$ denote a total order over the tuples of a relation $R$ sorting a set of attributes $O$ breaking ties using the remaining attributes.
The top-k operator $\ordlimit{O}{C}(\rel)$ returns the $C$ smallest tuples from $\rel$ according to $<_O$.


 \begin{figure}[t]
   \begin{align*}
 \selection_\theta(R)               & = \bag{t^n \mid t^n \in \rel \wedge t \models \theta}
                                    & \projection_A(R) & = \bag{ t^n \mid n = \sum_{u.A = t \wedge u^m \in R} m }
   \end{align*}\\[-3mm]
   \begin{align*}
 \duprem (R)                        & = \bag{ t^1 \mid t^n \in R }
                                    & R \crossprod S   & = \bag{ t \concat s^{n*m} \mid t^n \in R \wedge s^m \in S }
   \end{align*}\\[-5mm]
 \begin{align*}
     R \union S                     & = \bag{ t^{n+m} \mid t^n \in R \wedge t^m \in S }
   \end{align*}\\[-5mm]
   \begin{align*}
     \Aggregation{f(\att)}{G}(\rel) & = \bag{ g \concat f(\gdata{R}{g})^1 \mid g \in \grps{R}{G} } \\
   \end{align*}\\[-9mm]
   \begin{align*}
     \grps{R}{G}                    & = \{ \tup.G \mid t^n \in R \}
                                    & \gdata{R}{g}     & = \bag{ (c)^n \mid n  = \hspace{-0.5cm} \sum_{\mult{t}{m} \in \rel \wedge t.G = g \wedge t.a = c } m }
   \end{align*}\\[-4mm]
   \begin{align*}
   \ordlimit{O}{C}(R) &=  \bag{ t^n \mid t^m \in \rel \land n= max(0,min(m,C - pos(R,O,t))) } \\
   pos(R,O,t) &= \card{ \bag{ t_1^n \mid t_1^n \in \rel \land t_1 <_O t }} 
   \end{align*}\\[-4mm]
 \caption{Bag Relational Algebra}
\label{fig:ex-relational-algebra}
\end{figure}
\subsection{Provenance and Sufficient Inputs}

In the following, we are interested in finding subsets $\db'$ of an input database $\db$ that are sufficient for answering a query $\query$. That is, for which $\query(\db') = \query(\db)$. We refer to such subsets as sufficient inputs.

\begin{Definition}[Sufficient Input]\label{def:sufficient}
\ifnottechreport{Given a query $\query$ and database $\db$, we call $\db' \subseteq \db$ \emph{sufficient} for $\query$ wrt. $\db$ if $Q(\db) = Q(\db')$.}
\iftechreport{Given a query $\query$ and database $\db$, we call $\db' \subseteq \db$ \emph{sufficient} for $\query$ wrt. $\db$ if\\
    $$Q(\db) = Q(\db')$$}
\end{Definition}

Several provenance models for relational queries have been proposed in the literature~\cite{CC09}. Most of these models have been proven to be instances of the semiring provenance model~\cite{GK07,GT17} and its extensions for difference/negation~\cite{GP10} and aggregation~\cite{AD11d} where each tuple is annotated with an arithmetic expression over variables representing input tuples. 
Our main interest in provenance is to be able to extract a sufficient subset of the input database from the provenance. That means, even a simple 
model like Lineage which encodes provenance as a subset of the input database is expressive enough. We use $\prov{\query}{\db}$ to denote the provenance of a query $\query$ over database $\db$ encoded as a set of tuples and assume that $\prov{\query}{\db}$ is sufficient for $\query$ wrt. $\db$. For instance, we may construct $\prov{\query}{\db}$ as the union of the Lineage for all  tuples  $\tup \in \query(\db)$.
Our formal results hold for any provenance model that guarantees sufficiency.

\BG{REMOVED: Data provenance is information about the origin of data and queries, i.e., which inputs contribute to a particular query result. For the sake of this paper, we do not limit ourselves to a particular provenance model. We only require that we should be able to extract a subset of a query's input that is sufficient for producing the result from the provenance of the query. Most provenance models proposed in the literature fulfill this property. For instance, we can use the Lineage or provenance polynomial  model.
Of cours

Here we define the query provenance as the minimal subset of the input tuples is sufficient for query Q over database I. That is running the query on its provenance still returns the correct result and without any tuple in the provenance the result becomes wrong.
}



\BG{The following discussion is not needed I think or at least out of place in this section.}
\BG{Node here we defined is the limitation provenance model which can not represent the complete query provenance. For example, the duplicate removal operator, under our model any duplicated input tuple could represent the provenance of this output tuple. However, this does not affect the query result. The de facto standard for computing query provenance~\cite{KG12,GA12} is to model provenance as annotations on data and define a query semantics to determine how annotations propagate. Under this semantics, each output tuple t of a query Q is annotated with its provenance that a combination of input tuple annotations that explains how these inputs were used by Q to derive t. Many database provenance systems apply this technique such as Perm~\cite{glavic2013using} and GProM~\cite{AF18} which compile queries with annotated semantics into
relational queries that is annotating each input tuple and propagating each annotation by applying a set of generic rewrite rules on each query operator in a relational encoding of the query. For example, after such a transformation, query Q1 used in the Sec.~\ref{sec:introduction} would be rewirted to $Q1^{prov}$.
}

\BG{In every table, each tuple is annotated by all columns named with `prov' prefix and propagate to the top of the query. The result of $Q1^{prov}$ is shown Fig.~\ref{fig:result-prov-q1} which indicates that q1's result tuple (Peter, 1) is computed from the tuple (Peter, 1) in student table and tuple (1, CS) in major table.
}

\section{Provenance Sketches}
\label{sec:prov-sketch}

As discussed in \Cref{sec:introduction}, we propose provenance sketches to concisely represent a superset of the provenance of a query (a sufficient subset of the input) based on horizontal partitions of the input relations of the query. A sketch contains all fragments which contain at least one row from the provenance of the query. In this work, we limit the discussion to range-partitioning since it allows us to exploit  existing index structures when using a sketch to skip data. However, note that most of the techniques we introduce in this work are independent of the type of partitioning.

\subsection{Range Partitioning}
\label{tab:ps_def}
Given a set of intervals over the domains of a set of partitioning attributes $A \subset \relSchema$, range partitioning determines membership of tuples to fragments based on which interval their values belong to.
For simplicity, we define range partitioning for a single attribute $\att$.

\begin{Definition}[Range partition] \label{def:range}
Consider a relation $\rel$ and $\att \in \relSchema$. Let $\domain{a}$ denote the domain of $\att$. Let $\ranges = \{\range_1, \ldots, \range_n\}$ be a set of intervals $[l,u] \subseteq \domain{\att}$ such that $\bigcup_{i=0}^{n} \range_i = \domain{a}$ and $r_i \cap r_j = \emptyset$ for $i \neq j$. The \emph{range-partition} of $\rel$ on $\att$ according to $\ranges$ denoted as $\rparti_{\ranges,\att}(R)$ is defined as:
  \begin{align*}
    \parti_{\ranges,a}(R)  = \{ \rel_{{r}_{1}}, \ldots, \rel_{{r}_{n}} \} \hspace{2mm}\mathtext{\textbf{where}}\hspace{2mm}
    \rel_{\range}            = \bag{t^n \mid t^n \in \rel \wedge t.a \in  \range}
  \end{align*}
\end{Definition}
\Cref{fig:example-range-part} shows  two range partitions for our running example. 


\iftechreport{

A range partition of a relation divides the tuples of the relation into disjoint groups called \emph{fragments}.
}
\BG{REMOVED: For instance, a partitioning scheme that requires us to record for each tuple which fragment it belongs to would be of little use since the size of this representation would be linear in the size of the input database.}

\BG{To compute a provenance sketch $\provSketch$ for a query $\query$ and
database $\db$ according to $\dbpart$,
we might get different $\dbpart$ which based on the different parititioning methods we used for each table in the database $I$ such that we might get different provenance sketches. Currently, we are supporting range-based, hash-based and hash-page-based partitions. However, range-based partition could correspond directly to the physical design of the database such as index, pyhsical partitioning, zone-map and others which might be a king among other partition methods. Thus in this paper, we mainly focus on discussing range-based provenance sketch which is defined as follows.

\textbf{Range-based}: We define a set of ranges over the value of an attribute a which covers all values of a, the tuples in the same range belong to the same fragment.

\begin{Definition}[Range-partition] 
  Given a table $R$, a domain $\domain{a}: a \in \schema{R}$, and a set of ranges $r_1, \ldots, r_n$ covering all the values in $\domain{a}$ and $r_i \cap r_j = \emptyset: i \neq j$. Then the partitioning $$\parti_{r,a}(R) = \{ S_1, \ldots, S_n \}$$
$$S_{i} = \{ t \mid t \in R \wedge t.a \in  r_i \}$$
\end{Definition}

However, hash-based partition is to compute a hash over a set of attributes, thus the tuples with the same hash value belong to the same fragment and different with hash-based partition, hash-page-based is to hash over the page number which is extracted from the row number.
}

\subsection{Provenance Sketches}\label{sec:prov-sketch-sketches}

Consider a database $\db$,  query $\query$, and a range partition $\parti_{\ranges,\att}$ of $\rel$.
A provenance sketch $\provSketch$ for $\query$  according to $\parti_{\ranges,\att}$ is a subset of the ranges $\ranges$ of $\parti_{\ranges,\att}$ such that the fragments corresponding to the ranges in $\provSketch$ fully cover $\query$'s provenance within $\rel$, i.e., $\prov{\query}{\db} \cap \rel$. We use $\provranges{\db}{\parti_{\ranges,a}(R)}{\query} \subseteq \ranges$ to denote the set of ranges whose fragment contains at least one tuple from $\prov{\query}{\db}$:
   \begin{align*}
      \provranges{\db}{\parti_{\ranges,a}(R)}{\query} &= \{ \range \mid \range \in \ranges \wedge \exists t \in \prov{\query}{\db}: t \in R_{\range} \}
    \end{align*}

\ifnottechreport{
\begin{Definition}[Provenance Sketch]\label{def:provenance-sketch}
  Let $\query$ be a query, $\db$ a database,  $\rel$ a relation accessed by $\query$, and $\parti_{\ranges,a}(R)$ a range partition of $R$.
We call a subset $\provSketch$ of $\ranges$ a \textbf{provenance sketch} iff
$\provSketch \supseteq \provranges{\db}{\parti_{\ranges,a}(R)}{\query}$.
We call a provenance sketch \textbf{accurate} if
$\provSketch =  \provranges{\db}{\parti_{\ranges,a}(R)}{\query}$. 
We use $\relInst{\provSketch}$, called the \textbf{instance} of $\provSketch$, to denote $\bigcup_{\range \in \provSketch} \rel_{\range}$. 
\end{Definition}
}
\iftechreport{
\begin{Definition}[Provenance Sketch]\label{def:provenance-sketch}
  Let $\query$ be a query, $\db$ a database,  $\rel$ a relation accessed by $\query$, and $\parti_{\ranges,a}(R)$ a range partition of $R$.
%
  We call a subset $\provSketch$ of $\ranges$ a \textbf{provenance sketch} iff:
$$\provSketch \supseteq \provranges{\db}{\parti_{\ranges,a}(R)}{\query}$$

We call a provenance sketch \textbf{accurate} if
$$\provSketch =  \provranges{\db}{\parti_{\ranges,a}(R)}{\query}$$ 
%
We use $\relInst{\provSketch}$, called the \textbf{instance} of $\provSketch$, to denote $\bigcup_{\range \in \provSketch} \rel_{\range}$. 
\end{Definition}
}

A provenance sketch $\provSketch$ is a compact and declarative description of a superset of the provenance of a query (the instance $\instOf{\provSketch}$ of $\provSketch$). We call a sketch \textit{accurate} if it only contains ranges whose fragments contain provenance. We use $\psSet$ to denote a set of provenance sketches for a subset of the relations in the database. Consider such a set  $\psSet = \{ \provSketch_1, \ldots, \provSketch_m \}$ where $\provSketch_i$ is a sketch for relation $\rel_i$ and $\rel_i \neq \rel_j$ for $i \neq j$. We use $\instOf{\psSet}$ to denote the database derived from database $\db$ by replacing each relation $\rel_i$ for $i \in \{1, \ldots, n\}$ with $\relInst{\provSketch_i}$. Note that we do not require that all relations of $\db$ are associated with a sketch. Abusing notation, we will use $\instOf{\provSketch}$ to denote $\instOf{\{\provSketch\}}$.
Reconsider the running example in \cref{fig:eg-db}, let $\provSketch$ be the accurate provenance sketch of $\qAvgden$ using the range partition $\hparti_{\ranges,state}(citites)$. Recall that $\prov{\qAvgden}{cities} = \{ t_2, t_3 \}$ and tuples $t_2$ and $t_3$ both belong to fragment $f_1$ since $CA \in [AL,DE]$. Thus, $\provSketch = \{ f_1 \}$.
\XN{old content: As an example, let us create an accurate provenance sketch $\provSketch$ for the running example database and query $\qAvgden$ using the range partition $\hparti_{\ranges,state}(citites)$ from the previous example. Recall that $\prov{\qAvgden}{cities} = \{ t_2, t_3 \}$, i.e., the provenance of query $\qAvgden$ consists of the two tuples highlighted in \Cref{tab:cities}. Tuples $t_2$ and $t_3$ both belong to fragment $f_1$ since $CA \in [AL,DE]$.
Thus, $\provSketch = \{ f_1 \}$. 
In addition to $t_2$ and $t_3$, the sketch also contains $t_1$, the other tuple belonging to $f_1$. Furthermore,  any superset of $\provSketch$, e.g., $\provSketch' = \{f_1, f_2\}$, is also a provenance sketch, albeit not an accurate one.}  


\subsection{Sketch Safety}\label{sec:sketch-safety}
By construction we have $\prov{\query}{\db} \subseteq \instOf{\provSketch} \subseteq \db$. Recall that we assume that $\prov{\query}{\db}$ is sufficient, i.e., $\query(\prov{\query}{\db}) = \query(\db)$. However, in general as shown in \Cref{sec:introduction} this does not guarantee that $\query(\instOf{\provSketch}) = \query(\db)$, even when $\provSketch$ is accurate.
We call a provenance sketch \textbf{safe} for a query $\query$ and database $\db$ if evaluating $\query$ over the data described by the sketch returns the same result  as evaluating it over $\db$. 
\begin{Definition}[Safety]\label{def:safe}
\ifnottechreport{Let $\query$ be a query, $\db$ a database, and $\psSet$ a set of sketches for $\query$. We call $\psSet$ \emph{safe} for $\query$ and $\db$ iff $Q(\instOf{\psSet}) = Q(\db)$. }
 \iftechreport{Let $\query$ be a query, $\db$ be a database, and $\psSet$ a set of sketches for $\query$. We call $\psSet$ \emph{safe} for $\query$ and $\db$ iff:
  \begin{align*}
    Q(\instOf{\psSet}) &= Q(\db)
  \end{align*}}
\end{Definition}

\XN{We call a single provenance sketch $\provSketch$ safe if $\{ \provSketch \}$ is safe.}
Obviously, we are only interested in safe provenance sketches. 
Next we will study how to determine safety statically (without accessing the database). For this we define attributes to be safe if for any database instance, sketches created over these attributes are safe.

\begin{Definition}[Attribute Safety]\label{def:attribute-safety}
  Consider a set of  attributes $\attset$ from a relation schema $\relSchema$ of a  database schema $\dbSchema$. We call $\attset$ \emph{safe} for a query $\query$ if for any instance $\db$ of $\dbSchema$ and range partition $\rparti_{\ranges,\attset}$ of $\rel$, any provenance sketch $\provSketch$ based on $\rparti_{\ranges,\attset}$ is safe.
\end{Definition}
\XN{
Next, in \Cref{sec:safety-check}, we discuss how to determine whether a set of
attributes is safe for a given query independent of the database instance. }
\XN{duplicate:
Next, in \Cref{sec:safety-check}, we discuss how to determine the provenance sketches safety.
In
\Cref{sec:reuse-different}, we investigate under which conditions a sketch can
be reused across multiple instances of a parametrized query.  We then discuss
how to capture accurate provenance sketches efficiently in~\Cref{sec:ps-capture}
and how to use a captured sketch to speed-up a query in \Cref{sec:ps-reuse}. }
\section{Testing Sketch Safety}
\label{sec:safety-check}

\XN{In \Cref{sec:introduction} we demonstrated (\Cref{ex:not-safe}) that even accurate sketches may not be sufficient for answering a query.}
In this section, we develop a sound method that determines based on the input query alone whether every sketches for $\query$ and $\db$ created based on some set of ranges $\ranges$ for a set of attributes $\sa$ are guaranteed to be safe. The rationale for avoiding to access $\db$ is that we want to determine upfront whether the sketches on a set of attributes $\sa$ are safe before paying the cost of creating 
such a sketch. 
Instead of accessing $\db$, we use database statistics.
Before presenting our approach we first state a negative result motivating the decision to develop an algorithm that is only sound, but not complete.\ifnottechreport{ \textbf{For space reason, we outsource the proofs of theorems and lemmas in this paper  to \cite{techreport}.}}

\begin{Theorem}\label{theo:sound-and-safe-impossible}
There cannot exist a sound and complete algorithm that determines safety of a set of attributes for a query $\query$ and $\db$ without accessing $\db$.
\end{Theorem}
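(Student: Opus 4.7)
The plan is an adversary argument. I assume for contradiction that a sound and complete algorithm $\mathcal{A}$ exists whose input is only the query $\query$, the schema $\dbSchema$, and a candidate attribute set $\sa$, and whose output correctly decides whether every sketch built from a range partition on $\sa$ is safe for $\query$ on the hidden instance $\db$. Because $\mathcal{A}$ is forbidden from inspecting $\db$, its verdict is a function of $\query$, $\dbSchema$, and $\sa$ alone, and therefore must be identical on any two databases of the same schema.

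The key step is to exhibit one query $\query^\star$, one attribute set $\sa^\star$, and two instances $\db_1$, $\db_2$ of the same schema such that $\sa^\star$ is safe for $(\query^\star,\db_1)$ but unsafe for $(\query^\star,\db_2)$. I would reuse the top-$1$ aggregation query $\qAvgden$ from \Cref{ex:not-safe} together with $\sa^\star = \{\texttt{popden}\}$ and the range partition $\parti_{\ranges,\texttt{popden}}$ shown in \Cref{fig:example-range-part}. For $\db_2$ I take the \texttt{cities} instance of \Cref{tab:cities}: the example already shows that the accurate sketch $\{g_2\}$ yields $(\texttt{NY},7000)$ instead of the correct $(\texttt{CA},5500)$, so $\sa^\star$ is not safe on $\db_2$. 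For $\db_1$ I pick a trivially safe witness, for instance the empty instance, or a single-fragment instance in which every accurate sketch is either empty or covers the whole relation, so that the restricted evaluation coincides with the full evaluation. With the two witnesses in hand, the contradiction is immediate: $\mathcal{A}(\query^\star,\dbSchema,\sa^\star)$ must agree on $\db_1$ and $\db_2$, so answering ``safe'' violates soundness on $\db_2$ while answering ``unsafe'' violates completeness on $\db_1$.

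The main subtlety I expect is pinning down the exact notion of safety being quantified over. \Cref{def:attribute-safety} phrases attribute safety as a universal statement over all instances, so read literally the property does not depend on $\db$ at all and the theorem would be vacuous. I would therefore open the proof by making explicit the per-instance relaxation implicit in ``for a query $\query$ and $\db$''---namely, that every sketch on $\sa$ is safe for the specific pair $(\query,\db)$---and only then invoke the adversary construction. No computability or reduction machinery is needed; the data dependence of safety for non-monotone queries is already witnessed by the small concrete example the paper displays, so the argument reduces to a one-line pigeonhole on a deterministic algorithm that ignores part of its input.
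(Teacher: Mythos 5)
Your proposal is correct and follows essentially the same route as the paper's proof: an indistinguishability argument using $\qAvgden$ with the $\parti_{popden}$ partition, taking the full \texttt{cities} instance as the unsafe witness and an instance on which the accurate sketch covers the whole relation (the paper uses $\{t_2,t_3\}$) as the safe witness. Your added remark about reading safety per-instance rather than via the universally quantified \Cref{def:attribute-safety} is a fair clarification but does not change the argument.
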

\iftechreport{
\begin{proof}
We prove this theorem by demonstrating that there exists an attribute $a$, two databases $\db$ and $\db'$, and a query $\query$ such that an accurate provenance sketch created for $\query$ according to some range partitioning of $a$ over $\db$ is safe while the sketch created for $\db'$ is unsafe. Since the algorithm is not allowed to inspect the database it cannot distinguish between $\db$ and $\db'$ and, thus cannot determine whether the attribute is safe for $\query$ for a given database. Consider query $\qAvgden$ from \Cref{tab:queries} and the range partition $\parti_{popden}$. As discussed in \Cref{sec:introduction}, the sketch $\provSketch_{popden}$ created based on this partition is unsafe for $\qAvgden$ and the instance of the cities relation shown in \Cref{tab:cities}. However, consider the database $\db'$  that only consists of tuples $t_2$ and $t_3$ from \Cref{tab:cities}. The provenance sketch for $\qAvgden$ using $\parti_{popden}$ contains the fragment $\frag$ corresponding to range $g_2$ (all tuples with a population density between 4001 and 9000). Since $\frag = \db'$, we have $\query(\frag) = \query(\db')$.
\end{proof}
}


We now introduce our sound, but not complete, algorithm for determining safety of attributes. For a set of attributes $\sa$ and query $\query$, this algorithm constructs a universally quantified logical formula without free variables such that if this formula evaluates to true,
then $\sa$ is safe for $Q$. Similar to recent work on query equivalence checking~\cite{zhou2019automated}, we utilize an SMT solver~\cite{de2008z3} to check whether the formula is true by rewriting it into negated existential form (a universally quantified formula is true if its negation is unsatisfiable). For example, to test $\forall a: a < 10$, we check whether $a \geq 10$ is unsatisfiable.

\subsection{Generalized Containment}\label{sec:gener-cont}
Our approach utilizes a generalization of the subset relationship between two relations to be able to express that, e.g., a count aggregation returns the same groups, but the counts produced by  $\query(\instOf{\psSet})$ (running the query over the provenance sketches) are smaller than the counts for $\query(\db)$.

\begin{Definition}[Generalized Containment]\label{def:generalized-containm}
Let $R(a_1, \ldots,$ \\
$a_n)$ and $R'(b_1, \ldots, b_n)$ be two relations with the same arity. 
Furthermore, let $\aComp = \bigwedge_{j=1}^{m} a_{i_j} \diamond b_{i_j}$ where $i_j \in [1,n]$, $m \leq n$, and 
 $\diamond \in \{ \leq, =, \geq \}$.   The generalized containment relationship $R \matchContains{\aComp} R'$ based on  $\aComp$ holds for $R$ and $R'$ if there exists a mapping $\matchMap \subseteq R \times R'$  that fulfills all of the following conditions:
\vspace{-4pt}  \iftechreport{\begin{align*}
    &\forall t \in R: \exists t' \in R': \matchMap(t,t')\\
    &\forall t_1, t_2, t_1', t_2': \matchMap(t_1, t_1') \land \matchMap(t_2, t_2') \land (t_1 = t_2 \lor t_1' = t_2')\\
      &\hspace{4cm}\rightarrow t_1 = t_2 \wedge t_1' = t_2'\\
    &\forall (t,t') \in \matchMap: (t,t') \models \aComp
  \end{align*}}
\ifnottechreport{
\begin{align*}
    &\forall t \in R: \exists t' \in R': \matchMap(t,t') \quad \forall (t,t') \in \matchMap: (t,t') \models \aComp \\
    &\forall t_1, t_2, t_1', t_2': \matchMap(t_1, t_1') \land \matchMap(t_2, t_2') \land (t_1 = t_2 \lor t_1' = t_2')\\
      &\hspace{4cm}\rightarrow t_1 = t_2 \wedge t_1' = t_2'
  \end{align*}\trimfigspace[-7mm]
}
\end{Definition}

The first and \ifnottechreport{third}\iftechreport{second} conditions ensure that every tuple from $R$ is ``matched'' to exactly one tuple from $R'$. The second condition ensures that all pairs of matched tuples fulfill condition $\aComp$.
Note that $R \subseteq R'$ is a special case of generalized containment where $\aComp = \bigwedge_{i=1}^{n} a_i = b_i$.
In the following, we will use generalized containment to model the relationship between (intermediate) results of a query over the full input database and over the instance of a set of provenance sketches. In this scenario, the two relations we are comparing have the same schema.
To avoid ambiguities in $\aComp$, if $a$ is an attribute of the LHS, we use $a'$ to refer to the corresponding attribute of the RHS.

For instance, reconsider our running example database from \Cref{fig:eg-db} and assume we are using the range partition based on ranges $\{g_1, g_2\}$ for attribute \texttt{popden} to compute an accurate sketch $\provSketch$ for query $\query_{popState} = \selection_{totden > 10000}(\Aggregation{state}{sum(popden) \to totden}(\rela{cities}))$. Let $\query_{agg}$ be the subquery of $\query_{popState}$ rooted at the aggregation operator. California is the only state that fulfills the selection condition. Thus, the provenance of this query is $\{t_2, t_3\}$ and the sketch consists of range $g_2 = [4001,9000]$. We get $\instOf{\provSketch} = \{t_1,t_2,t_3,t_4\}$. Evaluating the aggregation subquery over this instance, we get the original aggregation result ($11,000$) for California, but since $t_4$ and $t_1$ are also in this instance, a result for New York (Alaska) is produced. The result for NY  is lower  ($7000$)  than  the original result ($9000$) for this group. Generalized containment allows us to reason about such cases, e.g., $\query_{agg}(\instOf{\provSketch}) \matchContains{totden \leq totden'} \query_{agg}(\db)$.
\XN{However, this result did already not fulfill the selection condition over the full database and, thus, is filtered out. Generalizing the example, any accurate sketch on a non-group-by attribute for this query will exhibit the same behavior: the aggregation evaluated over the sketch returns the same result over the sketch as over the input database for groups that are in the final result of $\query_{popState}$ and possibly some of the remaining groups with a \texttt{totden} value that is smaller than or equal to the \texttt{totden} for the full input databases (these groups will be filtered by the selection). Generalized containment allows us to reason about such cases, e.g., we can state our observation for the example as the generalized containment relationship $\query_{agg}(\instOf{\provSketch}) \matchContains{totden \leq totden'} \query_{agg}(\db)$.}

\subsection{Inference Rules}\label{sec:inference-rules}
Given a query $\query$ and a set of attributes $\sa$ from the input relations of $\query$, we construct a logical formula $\gc(\query,\sa)$ to check whether $\query(\instOf{\psSet}) = \query(\db)$ for any database $\db$. Let $\db = \{R_1, \ldots, R_n\}$ and $\sa = X_1 \union \ldots \union X_m$, $\psSet$ is a set of provenance sketches $\{\provSketch_i\}$ build on
$R_i$ over attributes $X_i$ according to $\query$ and $\db$ and some range partitions $ \parti_{\ranges_i,X_i}(R_i)$.
This formula is computed based on a set of rules which is shown in \Cref{tab:subset}. Intuitively, $\gc(\query,\sa)$ does encode constraints that have to hold for attribute values of any tuple produced by $\query(\instOf{\psSet})$ and/or by $\query(\db)$. For instance, if the query contains a selection on a condition $a< 10$ then all result tuples of the selection are guaranteed to fulfill $a < 10$. In addition, $\gc(\query,\sa)$ encodes relationships between attribute values of $\query(\instOf{\psSet})$ and $\query(\db)$, e.g., because $\instOf{\psSet}$ is a subset of $\db$, any count aggregate computed over $\instOf{\psSet}$ is smaller than or equal to the same aggregate computed over $\db$.
\XN{Consider a query $\query$ and a set of attributes $\sa$ from these relations. Let $\{R_1, \ldots, R_n \}$ denote the set of relations that contain at least one attribute from $\sa$. We use $\sa_i$ to denote subset of $\sa_i$ that belongs to relation $R_i$. For convenience we will assume attribute names are not repeated across relations.
We now introduce a set of rules that given $\query$ and $\sa$, generate a formula $\gc(\query,\sa)$.

Intuitively, the formula $\gc(\query,\sa)$ we construct checks that for any
database $\db$ we have $\query(\instOf{\psSet_{\db}}) = \query(\db)$ where
$\psSet_{\db}$ is a set of provenance sketches $\{\provSketch_i\}$ build on
$R_i$ over attributes $\sa_i$ according to $\query$ and $\db$ and some range partitioning $ \parti_{\ranges_i,\sa_i}(R_i)$.
Variables in the formula represent attribute values from the LHS (the result of the query over the provenance sketch instance) or the RHS (the result of the query over the databases). We constrain variables based on known restrictions on attribute values (e.g., we know that for any tuple in the output of  a selection $\selection_{a < 10}(R)$ we have $a < 10$) and relationships between tuples from the LHS and RHS (e.g., for $\query \defas R$, we know that for any provenance sketch set $\query(\instOf{\psSet_{\db}}) \subseteq \query(\db)$). As we will prove in~\Cref{sec:correctness-proof}, $\gc(\query,\sa)$ implies that the generalized containment relationship $\query(\instOf{\psSet}) \matchContains{\aComp_{\query,\sa}} \query(\db)$ holds where $\aComp_{\query,\sa}$ is a condition constructed by our algorithm. We construct $\gc(\query,\sa)$ in a bottom-up traversal of query $\query$, incrementally
adding constraints to the formula that reason about how intermediate query
results for $\db$ and $\instOf{\psSet_{\db}}$ are related to each
other.

The rules we use to compute $\gc(\query,\sa)$ are shown in \Cref{tab:subset}.
}
In the following, we first introduce some 
auxiliary constructs which are used to define $\gc(\query,\sa)$ and then discuss the rules that define $\gc$ (\Cref{tab:subset}).
For simplicity of exposition we assume that attribute names are unique.\BG{Do we need to explain that this extends to intermediate results and how this works, e.g., $\projection_{a} (R(a,b))$ is fine, but $\Aggregation{b}{max(c) \to a}(R(a,b,c))$ is not. Furthermore, all variables that occur freely in the formulas are assumed to be universally quantified, e.g., $a < 10$ should be interpreted as $\forall a: a < 10$.}

\parttitle{${\bf \pred(\query)}$} We use $\pred$ to record conditions which have to be fulfilled by all tuples produced by query $\query$ and its subqueries. $\pred$ is defined recursively as shown below.
For instance, selection and join conditions are added to $\pred$, since all tuples produced by such operators have to fulfill these conditions. As an example consider the query $\query \defas \selection_{a = 5} (\projection_{a}(\selection_{b < 4}(\rel)))$, we get $\pred(\query) = (a = 5 \land b < 4)$.
Note that we are using database statistics to bound the values of tuples from input relations. $min(a)$ ($max(a)$) denotes the smallest (largest) value in attribute $a$.
\vspace{-1mm}
\resizebox{.9\linewidth}{!}{
  \begin{minipage}{\linewidth}
\begin{align*}
  \pred(\query)                                                   & =
             \begin{cases}
              \bigwedge_{ a \in SCH(R)} a \geq min(a) \land a \leq max(a)               & \mathbf{if}\,Q \text{ is a relation}     \\
             \pred(\querya) \wedge \pred(\queryb)               & \mathbf{if}\,Q = \querya \crossprod \queryb     \\
             \pred(\querya) \wedge \pred(\queryb) \wedge \theta & \mathbf{if}\,Q = \querya \join_{\theta} \queryb  \\
             \pred(\querya) \land \theta                         & \mathbf{if}\, Q = \selection_{\theta}(\querya)   \\
             \pred(\querya) \lor \pred(\queryb)               & \mathbf{if}\,Q = \querya \union \queryb   \\
             \pred(\querya)                                      & \mathbf{otherwise}                               \\ 
           \end{cases}
\end{align*}
  \end{minipage}
}
\BG{What about set operations?}

\begin{figure*}
\fbox{
\begin{minipage}{1\linewidth}
\begin{minipage}{1\linewidth}
 \begin{subfigure}{0.41\linewidth}
\vspace{-3mm}
\begin{adjustbox}{max width=1\linewidth}
{
\begin{tabular}{r@{ = }l}
 \multicolumn{2}{l}{\underline{\textbf{if $\sa = \emptyset$}}} 
  \\
 $\aComp_{\query,\sa}$    & $\bigwedge_{a \in \schemaOf{\query}} a = a'$                                                                       \\
 $\gc(\query,\sa)$ & $\true$ \\
\multicolumn{2}{l}{\underline{\textbf{otherwise}}}                                                                                                  \\
  $\aComp_{R,\sa} $                           & $\bigwedge_{a \in \schemaOf{R}} a = a'$                                                             \\
  $\aComp_{\selection_{\theta}(\querya),\sa}$ & $ \aComp_{\projection_A(\querya),\sa} =  \aComp_{\querya,\sa_1}$                                    \\
  $\aComp_{\duprem(\querya),\sa}$             & $ \aComp_{\ordlimit{O}{C}(\querya),\sa} = \aComp_{\querya,\sa_1}$                                   \\
  $\aComp_{\querya \crossprod \queryb,\sa} $   & $\aComp_{\querya \join_{\theta} \queryb,\sa} = \aComp_{\querya,\sa_1} \land \aComp_{\queryb,\sa_2}$ \\
  $\aComp_{\querya \union \queryb,\sa}$ & $\bigwedge_{i=1}^{n} (\aComp_{\querya,\sa_1} \rightarrow a_i = a_i' \land \aComp_{\querya,\sa_2} \rightarrow b_i = b_i') \rightarrow a_i = a_i'$\\
\multicolumn{2}{l}{\hspace{9mm}\textbf{where} $\schemaOf{\querya} = (a_1, \ldots, a_n)$ \textbf{and} $\schemaOf{\queryb} = (b_1, \ldots, b_n)$}\\
\end{tabular}
}
\end{adjustbox}
\end{subfigure}
 \begin{subfigure}{0.59\linewidth}
\centering
\begin{adjustbox}{max width=1\linewidth}
{
\begin{tabular}{|ll|}
\hline
\rowcolor{lightgrey}
Query $\query$ & $\gc(\query,\sa)$ \\ 
		$R$ &  $\true$  \\
  $\selection_{\theta}(\querya)$ &   $\gc(\querya,\sa_1) \land (\aComp_{\querya,\sa_1} \land \allcond(\querya') \land \allcond(\querya) \land \theta \rightarrow \theta') $ 
  \\
  $\Aggregation{\aggf(a) \rightarrow \agga}{\grpatts}(\querya)$ & $\gc(\querya,\sa_1) \land (\forall g \in G : \aComp_{\querya,\sa_1} \land \allcond(\querya') \land \allcond(\querya)  \rightarrow g = g')  $ 
  \\
  $\delta(\querya)$ & $\gc(\querya,\sa_1) \land (\forall a \in \schemaOf{\querya} :\aComp_{\querya,\sa_1} \land \allcond(\querya') \land \allcond(\querya) \rightarrow a = a') $ 
  \\
$\projection_{A}(\querya)$ & $\gc(\querya,\sa_1)$
  \\
  $\ordlimit{O}{C}(\querya)$ & $\gc(\querya,\sa_1) \land (\forall o \in O : \aComp_{\querya,\sa_1} \land \allcond(\querya') \land \allcond(\querya) \rightarrow o = o')
                               $   
  \\
    $\querya \union \queryb$ & $\gc(\querya,\sa_1) \land \gc(\queryb,\sa_2)$ \\
		$ \querya \crossprod \queryb$ & $\gc(\querya,\sa_1) \land \gc(\queryb,\sa_2)$
  \\ 
  $\querya \join_{a=b} \queryb$ & $\gc(\querya,\sa_1) \land \gc(\queryb,\sa_2) \land $ 
 $	\aComp_{\querya,\sa_1} \land \allcond(\querya') \land \allcond(\querya) \rightarrow a=a'  \land$ \\
&  $	 \aComp_{\queryb,\sa_2} \land \allcond(\queryb') \land \allcond(\queryb) \rightarrow b=b'$ 
  \\
		\hline
\end{tabular}
}
\end{adjustbox} \\[-1mm]
     \caption{$\gc(\query,\sa)$}
     \label{tab:gc}
\end{subfigure}
 \end{minipage}  \\[-5mm]
  \begin{subfigure}{1\linewidth}
 \resizebox{0.8\linewidth}{!}{
\begin{minipage}{1\linewidth}
\begin{align*}
\aComp_{\Aggregation{\aggf(a) \to \agga}{\grpatts}(\querya),\sa} &=
  \begin{cases}
     \aComp_{\querya,\sa_1} \land \agga = \agga' & \text{\textbf{if} } \forall x \in\sa_1 \exists g \in G: \allcond(\query_1) 
     \rightarrow x = g \\ 
    \aComp_{\querya,\sa_1} \land \agga \leq \agga' & \text{\textbf{if} } \exists x: x \in \sa_1 \land x \not \in G \land (\aggf=count \lor (\aggf \in \{sum,max\} \land (\allcond(\querya)  
    \rightarrow a \geq 0)))\\
      \aComp_{\querya,\sa_1} \land \agga \geq \agga' & \text{\textbf{if} } \exists x: x \in \sa_1 \land x \not \in G \land (\aggf \in \{sum,min\} \land (\allcond(\querya) 
      \rightarrow a \leq 0)) \\
    \aComp_{\querya,\sa_1} & \text{\textbf{otherwise}}
    \end{cases}
\end{align*}
 \end{minipage}
 }\\[-4mm]
       \caption{$\aComp_{\query,\sa}$}
      \label{tab:acomp}
 \end{subfigure}
  \end{minipage}
  }
  \trimfigspace
      \caption{Bottom-up inference of condition $\gc(\query,\sa)$. This condition implies $\query(\instOf{\psSet}) \matchContains{\aComp_{\query,\sa}} \query(\db)$ if $\query$ is a subquery of the query for which we want to determine sketch safety and $\query(\instOf{\psSet}) = \query(\db)$ iif $\query$ is the query for which we want to determine sketch safety.}
      \label{tab:subset}
\end{figure*}



\parttitle{${\bf \expr(\query)}$}
This formula encodes relationships  between values of attributes in the result of the query and its subqueries. For every generalized projection, we record how the value of attributes in the output of the projection are related to the values of attributes in its input.
For example, for $\query \defas \projection_{a+b \rightarrow x, c+d \rightarrow y}$, we get $\expr(\query) = (a+b=x \land c+d=y)$.

\vspace{-2mm}
\resizebox{.9\linewidth}{!}{
  \begin{minipage}{\linewidth}
\begin{align*}
\expr(\query)                                                         & =
                \begin{cases}
                 \emptyset              & \mathbf{if}\,Q \text{ is a relation}     \\
                  \expr(\querya) \land \expr(\queryb)               & \mathbf{if}\,\query = \querya \crossprod \query \,\,\mathbf{or}\,\, \querya \join_\theta \queryb \\
                  \expr(\querya) \land \bigwedge_{i=1}^{n} e_i = b_i & \mathbf{if}\,\query = \projection_{e_1 \to b_1, \ldots, e_n \to b_n}(\querya)          \\
                  \expr(\querya) \lor \expr(\queryb)               & \mathbf{if}\,Q = \querya \union \queryb  \\
                  \expr(\querya)                                     & \mathbf{otherwise}\\
                \end{cases}
\end{align*}
  \end{minipage}
}

\parttitle{${\bf \aComp_{\query,\sa}}$ and $\gc(\query,\sa)$}
In $\gc(\query,\sa)$ we make use of $\aComp_{\query,\sa}$ which relates attributes from $\query(\instOf{\psSet})$ to attributes from $\query(\db)$. Furthermore, we use $\allcond(\query)$ to denote $\pred(\query) \land \expr(\query)$.
We will show that the generalized containment $\query(\instOf{\psSet}) \matchContains{\aComp_{\query,\sa}} \query(\db)$ holds if $\gc(\query,\sa)$ is valid. Importantly, we will then prove that this implies that $\query(\instOf{\psSet}) = \query(\db)$. One subtlety related to that is that for any subquery $\query_{sub}$ of the query $\query$ for which are testing safety, $\gc(\query_{sub},\sa) \Rightarrow  \query_{sub}(\instOf{\psSet}) \matchContains{\aComp_{\query_{sub},\sa}} \query_{sub}(\db)$. However, $\gc(\query_{sub}$ \\
$,\sa) \Rightarrow \query_{sub}(\instOf{\psSet}) = \query_{sub}(\db)$ does not hold in general.\BG{Not sure whether this needs to be explained here or later.}
The rules for deriving $\aComp_{\query,\sa}$ and $\gc(\query,\sa)$ as shown in \Cref{tab:subset}.
For example, $\aComp_{R,\sa}$ is equality on all attributes of $R$, because $\instOf{\psSet} \subseteq \db$. Recall that we use $a, b, \ldots$ to denote attributes from $\query(\instOf{\psSet})$ and $a',b',c', \ldots$ to denote the corresponding attributes from $\query(\db)$. We apply the same notation for queries and conditions, e.g.,  $\query$ and $\query'$ denote the same query over $\instOf{\psSet}$ and $\db$, respectively.
We use $\attrs{\query}$ to denote the set of attributes of the relations accessed by $\query$. Furthermore, for a subquery $\query_1$ ($\query_2)$ we use $\sa_{1}$ ($\sa_2$) to denote the subset of $\sa$ contained in relations accessed by $\query_1$ ($\query_2$).
For any subquery $\query$ that does not contain of attributes for which we want to test safety ($\sa = \emptyset$), we know that $\instOf{\psSet}$ contains the original relations from $\db$. Thus, $\query(\instOf{\psSet}) = \query(\db)$ for any such subquery and we set $\aComp{\query,\sa}$ to the equality on all attributes and $\gc(\query,\sa) = \true$.
\ifnottechreport{Because of space restrictions, we only briefly discuss some of the rules. For a
detailed description of all rules see \cite{techreport}. For most operators,
$\aComp_{\query,X}$ is the same as $\aComp_{\query_1,\sa_1}$ (or
$\aComp_{\query_1,\sa_1} \land \aComp_{\query_2,\sa_2}$ for joins).} 
\iftechreport{For selection, duplicate removal and top-k operators, no additional attributes are created by these operators, thus $\aComp_{\query,X}$ is the same as $\aComp_{\query_1,\sa_1}$.  For a projection $\projection_A(\query_1)$ where $A = e_1 \to b_1, \ldots, e_n \to b_n$, some additional attributes might be created based on existing attributes by renaming expressions. For these attributes, we could use $\aComp_{\query_1,\sa_1}$ and $\expr(\query_1) \land \expr(\query_1')$ to decide the relationship of them between $\projection_A(\query_1(\instOf{\psSet}))$ and $\projection_A(\query_1(\db))$.} For a union
$\query_1 \union \query_2$, only constraints that hold in both
$\aComp_{\query_1,X_1}$ and $\aComp_{\query_2,\sa_2}$ hold for the result of the
union. For aggregation, we consider several cases: (i) if $\sa_1$ ($\sa$
restricted to attributes from relations below the aggregation) is a subset of
the aggregation's group-by attributes, then calculating the aggregation function
over the sketch instance yields the same result as over the full database,
because each group is contained in exactly one fragment of the partition on
which the provenance sketch is build on. Thus, either all or none of the tuples
of a group are included in $\instOf{\psSet}$ and for all groups included in
$\instOf{\psSet}$, the aggregation function result will be same in
$\query(\instOf{\psSet})$ and $\query(\db)$; (ii) for aggregation functions that are monotone (e.g., count, max, or sum over positive numbers) we know that the
aggregation function result produced for a group that occurs in
$\query(\instOf{\psSet})$ has to be smaller than or equal to the result for the
same group in $\query(\db)$. Thus, if the constraints we have derived for the
input of the aggregation imply that the input attribute $a$ for the aggregation
function $a$ is larger than $0$, then $b < b'$ holds; (iii) the third case
handles min and sum aggregation over negative numbers.

The $\gc(\query,\sa)$ condition for most operators requires that $\gc$ holds for the operator's input. Additionally, operator-specific conditions apply. For example, for selection the condition $\theta$ evaluated on any tuple from $\instOf{\psSet}$ has to imply that the condition holds for the corresponding tuple over $\db$ ($\theta'$) (We use  $\db$ ($\theta'$) to represent the evaluation of $\theta'$ over $\db$), because that implies generalized containment. For aggregation, generalized containment holds if the group-by attributes are the same over $\instOf{\psSet}$ and $\db$. Similarly, for the top-k operator\footnote{Note that for the top-k operator $\ordlimit{O}{C}$, our rules require  the assumption that the input of this operator returns at least $C$ tuples. For such queries we have to re-validate the result of our safety check at runtime.} the order-by attributes have to the same to ensure that the smallest items are the same for $\query$ and $\query'$. For union and cross-product we only need to require that $\gc$ holds for both inputs of these operators. For join, if the join attributes for  $\instOf{\psSet}$ and $\db$ are equal then (and $\gc$ holds for both inputs), then it also holds for the result of the join.
\BG{Join is needed?}
\BG{$\sa \cap \attrs{\query} = \emptyset$ represents no provenance sketches are applied to this subquery $\query$. For convenience, we let $\sa_1 = \sa \cap \attrs{\query_1}$, for binary operator, we let $\sa_2 = \sa \cap \attrs{\query_2}$. Also, we assume the schemas are same between the both inputs of union operator. For aggregation, we reason different cases for deciding the relationship of the function attributes between LHS and RHS. For the top-k operator $\ordlimit{O}{C}$, our rules under the assumption that the query returns exactly $C$ tuples. Otherwise, it is data dependent.}
\XN{
\parttitle{${\bf \aComp_{\query,\sa}}$}
Recall that this formula was introduced in the generalized containment which is constructed by a  conjunction of comparison atoms, e.g., $\query(\instOf{\psSet}) \matchContains{\aComp_{\query,\sa}} \query(\db)$ keeps that all "matched" tuples between $\query(\instOf{\psSet})$ and $\query(\db)$ have to satisfy $\aComp_{\query,\sa}$. 
Since we want to use $\gc(\query,\sa)$ to reason about $\query(\instOf{\psSet}) \matchContains{\aComp_{\query,\sa}} \query(\db)$, when the rules in \Cref{tab:subset} imply $\gc(\query,\sa)$ holds, they also compute the $\aComp_{\query,\sa}$ it holds for. As an example, $\aComp_{\query,\sa}$ is initialized for leaf nodes of the algebra tree of a query (relation accesses $R$) as equality on all attributes of the relation accessed by the leaf node. Then, $R_{\psSet} \matchContains{\aComp_{R,\sa}} R$ and all "matched" tuples between $R_{\psSet}$ and $R$ satisfy for $\aComp_{R,\sa}$. However, for some operators like aggregation, we need to reason different cases to decide $\aComp$. }

\XN{
This formula is a conjunction of comparison atoms such that if $\gc(\query,\sa)$ holds, then $\query(\instOf{\psSet_{\db}}) \matchContains{\aComp_{\query,\sa}} \query(\db)$ holds for any database $\db$ and set of provenance sketches $\psSet_{\db}$ according to $\sa$. Recall that $X$ is the set of attributes for which we have provenance sketches in $\psSet_{\db}$. For binary operators, we use $X_1$ ($X_2$) to denote the subsets of $X$ that belong to the subtree rooted at the left (right) input of the operator. $\aComp_{\query,\sa}$ is initialized for leaf nodes of the algebra tree of a query (relation accesses) as equality on all attributes of the relation accessed by the leaf node. For any relation access $R$, the query evaluated over $\psSet_{\db}$ returns either the same result as the input query (if $\psSet_{\db}$ does not contain a sketch for $R$) or a subset. As mentioned above $R \subseteq R'$ is the same as generalized containment on a conjunction of equality constraints (one for each attribute of $R$). For any other operator that  is the root of a subquery that does not contain any attribute from $X$, all relation accesses in this subtree return the  same result over $\instOf{\psSet_{\db}}$ as in the original query, because no provenance sketches are applied to this subtree. In this case we get the same result for both queries which implies that generalized containment for equalities for all attributes of the operator's schema hold. For operators for which this condition does not hold, we propagate or combine the conditions of their inputs. The only exception is aggregation.  For aggregation we can infer that  the aggregate function results for all tuples shared among the outputs of both queries are equal if the provenance sketch attributes are a subset of the group-by attributes for the query. This is true, because in this case, the provenance sketch instance contains a subset of the groups of the input query and, thus, both queries compute the same results for any such group. If this is not the case, then the query over the provenance sketches may return subsets of groups produced by the query over the full database. For aggregation function $count$ which is monotone in the number of inputs, this implies that the aggregation function result returned for a subset is less than the aggregation function result for the whole group. The same applies for $max$ and for $sum$, but only if the inputs are all positive. For aggregation function min and sum  over negative inputs, the results would guaranteed to be greater than the result over the full group. If no of these cases applies, then we do not know how the aggregation function results over the  provenance sketch instance and over the full  input database relate  to each other.}
\XN{
\parttitle{Computing ${\bf \gc}$}
Using these three auxiliary properties, we define a set of rules in \Cref{tab:subset} to examine  whether $\gc(\query,\sa)$ holds.  }
\XN{Using these three auxiliary properties we define $\gc(\query,X)$, which encodes a sufficient condition for the safety of $X$ for query $\query$.}
\BG{We need to say that we constructing a condition for every attribute $x$, make conditions parameterized by attribute $x$}
\BG{For multiple attributes currently not yet covered.}
\BG{Is in general it always safe to combine two provenance sketches that are safe individually, e.g., $R.a$ is safe and $S.b$ is safe. In general, this may fail: aggregation with filter on smaller than and before join of $R$ and $S$. Individually, we may still produce ``enough'' join partners so that groups that are not in the provenance are still filtered out, but in combination this may not be the case. However, our sound condition should ensure that combinations are safe! (conjecture)}
\BG{Better clarify the purpose of each of the ``properties''}
%
\BG{Check, the following may be outdated.}
%
\BG{update notation in the example}
\begin{Example}\label{ex:safety-inference-rules}
Reconsider our running example database in~\Cref{fig:eg-db} and assume we are using the range partition based on ranges $\{g_1, g_2\}$ on attribute \texttt{popden} to compute an accurate sketch $\provSketch$ for query $\query_{popState} = \selection_{totden < 7000}(\Aggregation{state}{sum(popden) \to totden}(\rela{cities}))$. 
To determine whether \texttt{popden} is a safe attribute, we calculate $\gc(\query_{pop}$ \\$_{State},\{popden\})$ using the rules from~\Cref{tab:subset}. 
We start from relation $\rela{cities}$. Since $popden>0$, we get $\pred(cities) = popden > 0$, $\expr(cities) = \emptyset$, $ \aComp_{cities,\{popden\}} = popden=popden' \land city=city' \land state=state'$, and $ \gc(cities,\{popden\})$ evaluates to true. Let $\query_{agg}$ be the subquery of $\query_{popState}$ rooted at the aggregation operator. Since $state=state'$,  based on the rules for aggregation rules (\Cref{tab:subset}),
\iftechreport{$\gc(Q_{agg},\{popden\})$ holds for $\aComp_{Q_{agg},\{popden\}} = popden=popden' \land city=city' \land state=state' \land f \leq f'$ where $f \leq f'$ is computed based on aggregation rule in  \Cref{tab:acomp}. Also, $\pred(Q_{agg}) = popden > 0$ and $\expr(Q_{agg}) = \O$.}
\ifnottechreport{we add $totden \leq totden'$ to $\aComp_{Q_{agg},\{popden\}}$.}
At last, applying the rule for selection, we get $totden \leq totden' \land totden < 7000 \not \Rightarrow totden' < 7000$. Thus,  $\gc(\query_{popState},\{popden\})$ evaluates to false. That is, as expected based on our previous discussion of this example, we determine that \texttt{popden} may be unsafe (our rules are only sound, but not complete).
\end{Example}
\iftechreport{
  \subsection{Correctness Proof}
\label{sec:correctness-proof}
\iftechreport{
We now proceed to prove the correctness of our safety checking algorithm (\Cref{theo:safety-check-is-correct}), i.e., if $\gc(\query,\sa)$ is valid, then $\sa$ is a set of attributes safe for $\query$. Before proving our main result we will prove three lemmas that will utilize in this proof.
First we prove that, for any query $\query = op(\querya, \ldots, \query_m)$ where  $op$
is an  operator, by construction, $\aComp_{\query,\sa}$ implies $\aComp_{\query_i,\sa_i}$ and that $\gc(\query,\sa) \Rightarrow \gc(\query_i,\sa_i)$ where $\sa_i$ is the set of attributes from $\sa$ that belong to relations accessed by $\query_i$. This will be used in the proof of following two lemmas.
We then prove that  $\gc(\query,\sa)$ implies generalized containment: $\gc(\query,\sa) \Rightarrow \query(\instOf{\psSet}) \matchContains{\aComp} \query(\db)$. Afterwards, we prove that $\gc(\query,\sa)$ implies that $ \query(\db)$ is a subset of $\query(\instOf{\psSet})$: $\gc(\query,\sa) \Rightarrow \query(\instOf{\psSet}) \supseteq \query(\db)$. }
\ifnottechreport{
  We now proceed to prove the correctness of our safety checking algorithm (\Cref{theo:safety-check-is-correct}), i.e., if $\gc(\query,\sa)$ is valid, then $\sa$ is a safe set of attributes for $\query$. Before proving our main result we will prove two lemmas that will utilize in this proof.
\Cref{lem:gc-imply-gcontainment} shows that $\gc(\query,\sa) \Rightarrow \query(\instOf{\psSet}) \matchContains{\aComp_{\query,\sa}} \query(\db)$ and \Cref{lem:gc-regenerate} that $\gc(\query,\sa) \Rightarrow \query(\db) \subseteq \query(\instOf{\psSet})$. 

}
\iftechreport{
\begin{Lemma}\label{lem:gc}
  Let $\db$ be a database, $\query = op(\querya, \ldots, \query_n)$ a query, and $\sa$ a set of attributes from the schema of $\db$. Furthermore, let $\sa_i \subseteq \sa$ be the subset of $\sa$ contained in the relations accessed by $\query_i$. Then, for all $i \in \{1,\ldots, n\}$, 
 \begin{align*}
\gc(\query, \sa)  \Rightarrow \gc(\query_i, \sa_i)     \quad
\aComp_{\query,\sa}                  \Rightarrow \aComp_{\query_i,\sa_i}
  \end{align*}
\end{Lemma}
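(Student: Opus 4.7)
The plan is to prove both implications simultaneously by structural case analysis on the root operator $op$ of $\query$, directly inspecting the definitions of $\gc$ and $\aComp$ in \Cref{tab:gc} and \Cref{tab:acomp}. The lemma is essentially a compositionality/decomposition statement that should follow from the way the rules are engineered: every rule for a non-leaf query builds $\gc(\query,\sa)$ and $\aComp_{\query,\sa}$ as conjunctions (or stronger formulas) that explicitly contain $\gc(\query_i,\sa_i)$ and $\aComp_{\query_i,\sa_i}$ as subformulas. No real induction hypothesis is needed because the statement only relates one operator application to its immediate inputs; the case analysis is shallow.

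First I would dispatch the easy side, the implication on $\gc$. Inspecting \Cref{tab:gc}, one sees that for every operator (selection, projection, duplicate elimination, top-k, aggregation, union, cross product, join) the defining equation for $\gc(\query,\sa)$ is a conjunction whose first conjunct (or first two conjuncts, for binary operators) is exactly $\gc(\query_1,\sa_1)$ (and $\gc(\query_2,\sa_2)$). Hence $\gc(\query,\sa)\Rightarrow \gc(\query_i,\sa_i)$ is immediate by conjunction elimination. I would also handle the trivial edge case: when $\sa\cap\attrs{\query}=\emptyset$ the top row of \Cref{tab:gc} sets $\gc(\query,\sa)=\true$ and, for every subquery, $\sa_i\cap\attrs{\query_i}=\emptyset$ too, so $\gc(\query_i,\sa_i)=\true$ as well.

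For the second implication I would go operator by operator. For selection, projection, duplicate removal and top-k the rule is literally $\aComp_{op(\querya),\sa}=\aComp_{\querya,\sa_1}$, so the implication is an equality. For $\crossprod$ and $\join_\theta$ the rule is $\aComp_{\querya,\sa_1}\land \aComp_{\queryb,\sa_2}$, a conjunction from which each conjunct follows. For aggregation I would step through all four cases in \Cref{tab:acomp}; in each of them $\aComp_{\querya,\sa_1}$ appears as the left conjunct, so implication again reduces to $\land$-elimination. The relation base case is vacuous since a relation has no $\query_i$.

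The main obstacle, and the only case that is not pure conjunction-elimination, is union. The rule gives $\aComp_{\querya\union\queryb,\sa}=\bigwedge_{i=1}^{n}\bigl((\aComp_{\querya,\sa_1}\rightarrow a_i=a_i')\land(\aComp_{\queryb,\sa_2}\rightarrow b_i=b_i')\bigr)\rightarrow a_i=a_i'$, which is weaker than either input formula rather than stronger. Here I would either (i) observe that the lemma's use in the subsequent proofs (of \Cref{lem:gc-imply-gcontainment} and \Cref{lem:gc-regenerate}) only requires the $\aComp$-implication for operators whose output schema strictly extends one input's schema, so the union case can be treated separately using the structure of that rule together with $\gc(\querya\union\queryb,\sa)=\gc(\querya,\sa_1)\land\gc(\queryb,\sa_2)$; or (ii) restate the lemma so that the $\aComp$-clause is restricted to non-union operators, exactly where it is invoked downstream. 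Either way, the union peculiarity is the single point where the mechanical rule-inspection argument requires extra bookkeeping, and it is the step I would spend the most care on when writing up the full proof.
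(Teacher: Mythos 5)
Your proposal is correct and takes essentially the same approach as the paper: the paper's own proof is a one-sentence observation that each rule in \Cref{tab:subset} defines $\gc(\query,\sa)$ as a conjunction containing $\gc(\querya,\sa_1)\land\ldots\land\gc(\query_n,\sa_n)$, i.e., exactly your conjunction-elimination argument, and it does not even separately argue the $\aComp$ half. Your extra care over the union case is warranted and goes beyond the paper: since $\aComp_{\querya\union\queryb,\sa}$ retains only the equalities common to $\aComp_{\querya,\sa_1}$ and $\aComp_{\queryb,\sa_2}$, it is genuinely weaker than either input formula, so the $\aComp$-implication as literally stated fails for union; the paper silently ignores this, and one of your two repairs (restricting the $\aComp$-clause to where it is actually invoked downstream, or handling union via the $\gc$ conjuncts) is needed for a fully rigorous write-up.
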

\ifnottechreport{
\begin{proof}
For the full proof see \cite{techreport}.
\end{proof}}
\iftechreport{
\begin{proof}
For the rules in \Cref{tab:subset}, $\gc(\query, \sa)$ is based on $\gc(\querya, \sa_1) \land \ldots \land \gc(\query_n, \sa_n) $. Thus, this lemma holds.
\end{proof}}
}
\BG{state how this will be used afterwards.}
Recall in \Cref{sec:gener-cont} we defined general containment to 
model the relationship between (intermediate) results of $\query(\instOf{\psSet})$ and $\query(\db)$ and designed $\gc(\query,\sa)$ rules in \Cref{tab:subset} to trace the evolution of the general containment. Thus now we prove that $\gc(\query,\sa) \Rightarrow \query(\instOf{\psSet}) \matchContains{\aComp_{\query,\sa}} \query(\db)$.

\begin{Lemma}\label{lem:gc-imply-gcontainment}
  Let $\query$ be a query, $\db$ be a database, and $\sa = \bigcup_{1}^{n} \sa_i$ a set of attributes where each $\sa_i$ belongs to a relation $R_i$ accessed by $\query$ such that $R_i \neq R_j$ for $i \neq j$. Given a set of provenance sketches $\psSet = \{ \provSketch_i \}$  for  $\query$ over $\db$ with respect to a set of range partitions $\{ \parti_{\ranges_i,\sa_i}(R_i) \}$, then
  \ifnottechreport{$\gc(\query,\sa) \Rightarrow \query(\instOf{\psSet}) \matchContains{\aComp_{\query,\sa}} \query(\db)$.}
  \iftechreport{
\[ \gc(\query,\sa) \Rightarrow \query(\instOf{\psSet}) \matchContains{\aComp_{\query,\sa}} \query(\db) \]}
%
\end{Lemma}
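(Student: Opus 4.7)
The plan is to prove the lemma by structural induction on the query $\query$, leveraging Lemma \ref{lem:gc} in the inductive step to pass the hypothesis down to subqueries. For the base case, $\query = R$, the instance $\instOf{\psSet}$ contains $R_{\psSet} \subseteq R$ (either $R$ itself if no sketch applies, or the union of the fragments selected by the sketch). Since $\aComp_{R,\sa}$ is the conjunction of equalities on all attributes, we define $\matchMap$ to map each tuple occurrence in $R_{\psSet}$ to the corresponding occurrence in $R$ (handling multiplicities by using tuple occurrences, not just tuples). Injectivity and total coverage of the domain follow from $R_{\psSet} \subseteq R$ at the bag level, and $\aComp_{R,\sa}$ is trivially satisfied.

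For the inductive step, I would assume $\gc(\query_i,\sa_i) \Rightarrow \query_i(\instOf{\psSet}) \matchContains{\aComp_{\query_i,\sa_i}} \query_i(\db)$ for each subquery, which yields matchings $\matchMap_i$. Lemma \ref{lem:gc} provides the propagation of $\gc$ and $\aComp$ to subqueries. I would then build the target matching $\matchMap$ for the root operator. For selection $\selection_\theta(\query_1)$, restrict $\matchMap_1$ to pairs whose LHS tuple satisfies $\theta$; the $\gc$ conjunct $\aComp_{\query_1,\sa_1} \land \allcond(\query_1') \land \allcond(\query_1) \land \theta \rightarrow \theta'$ ensures the RHS partner also satisfies $\theta'$, so it survives in $\selection_\theta(\query_1(\db))$. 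For projection, compose the subquery matching with the projection, using $\expr$ of the projection expressions to argue attribute-level constraints. For cross-product/join, take the Cartesian product of the two submatchings and use the join-condition clause of $\gc$ to show equality on join attributes, preserving the pair in the RHS join. For union, split any LHS tuple by which side it originated from and use the corresponding submatching. For duplicate removal, use the $a = a'$ clause derived from $\gc$ to show that any LHS tuple determines a unique RHS tuple that survives deduplication.

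The main obstacle will be aggregation $\Aggregation{f(a) \to b}{G}(\query_1)$. Here the matching is at the level of groups, not input tuples. I would first use the $g = g'$ clause of $\gc(\query,\sa)$ to match every group $g$ appearing in $\query(\instOf{\psSet})$ with the corresponding group $g'$ in $\query(\db)$, showing injectivity because groups within each side are distinct. Then I need to argue the appropriate relation between $b$ and $b'$ from $\aComp_{\query,\sa}$ (Table \ref{tab:acomp}): in the group-by-subsumption case, each fragment of the partition on $\sa_1$ is a union of whole groups, so every group retained in $\instOf{\psSet}$ has all its input tuples intact, giving $b = b'$; in the monotone cases, the submatching $\matchMap_1$ is a bag-subset embedding of the input on that group, so count, sum-of-nonnegatives, and max over the smaller input are bounded by those over the larger, and symmetrically for min and sum-of-nonpositives. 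The fact that $\allcond(\query_1)$ may constrain the sign of $a$ is exactly what licenses the sum/max/min cases. The top-k case requires similar care: under the assumption that $\query_1$ returns at least $C$ tuples, I would match the $i$-th smallest tuple on LHS to the $i$-th smallest on RHS, using the $o = o'$ clause from $\gc$ to ensure the ordering agrees on relevant attributes.

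Finally, for each case I must verify the three formal conditions from Definition \ref{def:generalized-containment} on the constructed $\matchMap$ (totality on the LHS, the injectivity/functional condition, and satisfaction of $\aComp_{\query,\sa}$), which is routine once the matching has been exhibited. The aggregation case is the most delicate because $\aComp$ branches on syntactic side-conditions that must be reconciled with the constraints encoded by $\allcond(\query_1)$, and because building $\matchMap$ at the group level requires separately handling cardinality-preserving and cardinality-reducing situations.
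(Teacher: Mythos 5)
Your proposal is correct and follows essentially the same route as the paper's proof: structural induction over the algebra tree, a base case reducing $R_{\psSet} \subseteq R$ to generalized containment under all-attribute equalities, use of Lemma~\ref{lem:gc} to push $\gc$ and $\aComp$ down to subqueries, and a per-operator construction of the matching, with the aggregation operator handled by the same case split on $\aComp$ (group-by subsumption giving $b = b'$, monotone aggregates giving the inequalities). The only difference is one of polish — you are more explicit about bag-level tuple occurrences and about exhibiting $\matchMap$ in each case, whereas the paper expands the group-by-subsumption case into sub-cases depending on intermediate aggregations — but the underlying argument is the same.
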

\ifnottechreport{
\begin{proof}
For the full proof see \cite{techreport}.
\end{proof}
}
\iftechreport{
\begin{proof}
We prove this by induction.
For convenience, given tuple $t \in \query(\instOf{\psSet})$ and $t' \in \query(\db)$, 
we say $t$ \emph{mapped} with $t'$, if $(t,t') \in \mathcal{M}$ where $\mathcal{M} \subseteq \query(\instOf{\psSet}) \times \query(\db)$.
Also, given a set of attributes $A = \{a_1, \ldots, a_n\}$ and  $A' = \{a_1', \ldots, a_n'\}$, for convenience, we say $A = A'$ if $a_1 = a_1', \ldots, a_n = a_n'$.

\underline{Base case:} We start from the relation access $R$. Assume $R_{\psSet}$ represents the provenance sketches of $R$ with respect to a set of range partitions $\{ \parti_{\ranges,\sa}(R) \}$, then $R_{\psSet} \subseteq R$, which is a special case of  $R_{\psSet} \matchContains{\aComp_{R,\sa}} R$ where $\aComp$ only contains equalties for each column between the schema of $R_{\psSet}$ and $R$. Therefore, $R_{\psSet} \matchContains{\aComp_{R,\sa}} R$.

\underline{Inductive step:}
Assume query $\query_n$, a sub query of $\query$, with depth 
less than or equal to $n$.
Here we use the depth to represent the levels in the relational algebra tree of a query, e.g., $\query_1$ represents the  relation access $R$.
Assume we have proven that 
$\gc(\query_n, \sa_n) \Rightarrow \query_n(\instOf{\provSketch}) \matchContains{\aComp_{\query_n,\sa_n}} \query_n(\db)$.
Based on the induction hypothesis, we have to prove 
$\gc(\query_{n+1}, \sa_{n+1}) \Rightarrow \query_{n+1}(\instOf{\provSketch}) \matchContains{\aComp_{\query_{n+1},\sa_{n+1}}} \query_{n+1}(\db)$.   
For notational convenience, we use $op_{n}$($op_{n+1}$) represent any operator of depth $n$ ($n + 1$), i.e., 
$\query_{n+1} = op_{n+1}(\query_{n})$.
From \Cref{lem:gc}, $\gc(\query_{n+1}, \sa_{n+1}) \Rightarrow \gc(\query_n, \sa_n)$. Then based on the assumption, $\query_n(\instOf{\provSketch}) \matchContains{\aComp_{\query_n,\sa_n}} \query_n(\db)$. For join operator, we assume the left and right input of $op_{n+1}$ are $\query_{nL}(\db)$ and $\query_{nR}(\db)$ respectively, then $\gc(\query_{nL}, \sa_{nL}) = \true$ and $\gc(\query_{nR}, \sa_{nR}) = \true$, and  \\
$\query_{nL}(\instOf{\provSketch}) \matchContains{\aComp_{\query_{nL},\sa_{nL}}} \query_{nL}(\db)$ and $\query_{nL}(\instOf{\provSketch}) \matchContains{\aComp_{\query_{nL},\sa_{nL}}} \query_{nL}(\db)$.
Now we consider different $op_{n+1}$:

\underline{$\selection_{\theta}$:} Since 
$\query_n(\instOf{\psSet}) \matchContains{\aComp_{\query_n,\sa_n}} \query_n(\db)$, then assume tuple $t \in \query_n({\instOf{\psSet}})$, then there have to be a tuple $t' \in \query_n(\db)$ such that 
$t$ mapped with $t'$ and $(t,t') \models \aComp_{\query_n,\sa_n}$.
Then $t$ satisfies $ \allcond(\query_n)$ and $t'$ satisfies $\allcond(\query_n')$.
Assume after $\selection_{\theta}$, $t$ exists in the result of $\selection_{\theta}(\query_n({\instOf{\psSet}}))$, that is $t \in \selection_{\theta}(\query_n({\instOf{\psSet}}))$, thus $t$ satisfies $\theta$.
Since $\aComp_{\query_n,\sa_n} \land \allcond(\query_n') \land \allcond(\query_n) \land \theta \rightarrow \theta'$ in \Cref{tab:subset}, then $t'$ satisfies $\theta'$, i.e., $t' \in \selection_{\theta}(\query_n(\db))$. Therefore, $op_{n+1}(\query_n(\instOf{\psSet})) \matchContains{\aComp_{\query_{n+1},\sa_{n+1}}} op_{n+1}(\query_n(\db))$ where $\aComp_{\query_{n+1},\sa_{n+1}} = \aComp_{\query_n,\sa_n}$ and $\sa_{n+1} = \sa_n$. 

\underline{$\delta$:} Assume $\exists t \in \delta(\query_n({\instOf{\psSet}}))$, then $t \in \query_n({\instOf{\psSet}})$.
Since \\
$\query_n(\instOf{\psSet}) \matchContains{\aComp_{\query_n,\sa_n}} \query_n(\db)$,
then there have to be a tuple $t' \in \query_n(\db)$ such that 
$t$ mapped with $t'$ and $(t,t') \models \aComp_{\query_n,\sa_n}$.
Thus, in \Cref{tab:subset}, the duplicate removal rule guarantees that $ \forall a \in \schemaOf{\query_n} :\aComp_{\query_n,\sa_n} \land \allcond(\query_n') \land \allcond(\query_n) \Rightarrow a = a' $ , that is $t = t'$.
And because $t' \in \delta(\query_n({\db}))$, $op_{n+1}(\query_n(\instOf{\psSet})) \subseteq op_{n+1}(\query_n(\db))$. Then same with the base case which is a special case of the general containment, thus $op_{n+1}(\query_n(\instOf{\psSet})) \matchContains{\aComp_{\query_{n+1},\sa_{n+1}}} op_{n+1}(\query_n(\db))$ where $\aComp_{\query_{n+1},\sa_{n+1}}$ only contains equalties. 

\underline{$\projection_A$:} Similarly,  assume $\exists t_{proj} \in \projection_A(\query_n({\instOf{\psSet}}))$. Then,  there have to be a tuple $t \in \query_n({\instOf{\psSet}})$ such that $\projection_A(\{t\}) = \{t_{proj}\}$.
Since 
$\query_n(\instOf{\psSet}) \matchContains{\aComp_{\query_n,\sa_n}} \query_n(\db)$,
then there have to be a tuple $t' \in \query_n(\db)$ such that 
$t$ mapped with $t'$ and $(t,t') \models \aComp_{\query_n,\sa_n}$.
That it, there have to be a tuple $tt'$ satisfies $\{t_{proj}'\}= \projection_A(\{t'\})$ such that $(t_{proj},t_{proj}')$ satisfies $\aComp_{\query_{n},\sa_{n}}$.
Thus, $op_{n+1}(\query_n(\instOf{\psSet})) \matchContains{\aComp_{\query_{n+1},\sa_{n+1}}} op_{n+1}(\query_n(\db))$ where $\aComp_{\query_{n+1},\sa_{n+1}} = \aComp_{\query_{n},\sa_{n}}$.

\underline{$\query_{Ln}(\db) \union \query_{Rn}(\db)$:}
Assume $\exists t_L \in \query_{Ln}(\instOf{\psSet})$, since \\ $\query_{Ln}(\instOf{\psSet}) \matchContains{\aComp_{\query_{Ln},\sa_{Ln}}} \query_{Ln}(\db)$, then there have to be a tuple $t_L' \in \query_{Ln}(\db)$ such that $t_L$ mapped with $t_L'$ and $(t_L,t_L') \models \aComp_{\query_{Ln},\sa_{Ln}}$. After union, $t_L \in \query_{Ln}(\instOf{\psSet}) \union \query_{Rn}(\instOf{\psSet})$ and $t_L' \in \query_{Ln}(\db) \union \query_{Rn}(\db)$.
 Assume $\exists t_R \in \query_{rn}(\instOf{\psSet})$, similarly, we could get that there have to be a tuple $t_R' \in \query_{Rn}(\db)$ such that $t_R$ mapped with $t_R'$ and $(t_R,t_R') \models \aComp_{\query_{Rn},\sa_{Rn}}$. After union, $t_R \in \query_{Ln}(\instOf{\psSet}) \union \query_{Rn}(\instOf{\psSet})$ and $t_R' \in \query_{Ln}(\db) \union \query_{Rn}(\db)$. Then $\{t_L,t_R\} \subseteq \query_{Ln}(\instOf{\psSet}) \union \query_{Rn}(\instOf{\psSet})$ and $\{t_L',t_R'\} \subseteq \query_{Ln}(\db) \union \query_{Rn}(\db)$. So now the generalized containment will only hold for the common part between $\aComp_{\query_{Ln},\sa_{Ln}}$ and $\aComp_{\query_{Rn},\sa_{Rn}}$. That is, $op_{n+1}(\query_n(\instOf{\psSet})) \matchContains{\aComp_{\query_{n+1},\sa_{n+1}}} op_{n+1}(\query_n(\db))$ where $\aComp_{\query_{n+1},\sa_{n+1}} = \bigwedge_{a=a' \in \aComp_{\query_{Ln},\sa_{Ln}} \land a=a' \in \aComp_{\query_{Rn},\sa_{Rn}}} a=a'$.

\underline{$\Aggregation{\aggf(a) \rightarrow \agga}{\grpatts}$:}  
Since 
$\query_n(\instOf{\psSet}) \matchContains{\aComp_{\query_n,\sa_n}} \query_n(\db)$, then assume $\exists t \in \query_n({\instOf{\psSet}})$, there has to be a tuple $t' \in \query_n(\db)$ such that 
$t$ mapped with $t'$ and $(t,t') \models \aComp_{\query_n,\sa_n}$.
And because of the aggregation rule in \Cref{tab:gc} which keeps that $\forall g \in G: g=g'$, then after aggregation,
we will still get $\query_{n+1}(\instOf{\psSet}) \matchContains{\aComp_{\query_{n+1},\sa_{n+1}}} \query_{n+1}(\db)$ on $\aComp_{\query_{n+1},\sa_{n+1}} = \aComp_{\query_{n},\sa_{n}}$.
To learn the relation between $b$ and $b'$, we consider different cases:
\textbf{CASE 1:} 
$\forall x \in\sa_1 \exists g \in G: \allcond(\query_1) \rightarrow x = g $. 
(i) We first prove that $b = b'$ when all the operators under $op_{n+1}$ are monotone operators. Based on rules in \Cref{tab:subset}, only aggregation could generate inequality in $\aComp_{\query_i,\sa_i}$ ($1 \leq i \leq n$), then $\aComp_{\query_{n},\sa_{n}}$ only contains equalities, so $\query_n(\instOf{\psSet}) \subseteq \query_n(\db)$. 
Now we discuss $\query_{n+1}(\instOf{\psSet})$ and $\query_{n+1}(\db)$. 
To make $\query_{n+1}(\instOf{\psSet}) \subseteq \query_{n+1}(\db)$, there should not exist the case that $\exists t_1 \in \instOf{\psSet}$ and $\exists t_2 \in \db - \instOf{\psSet}$ where $t_1.x = t_2.x$. However, this case will not happen since $\forall x \in\sa_{n+1} \exists g \in G: \allcond(\query_{n+1}) \rightarrow x = g $. 
Thus, $\query_{n+1}(\instOf{\psSet}) \subseteq \query_{n+1}(\db)$ and $b = b'$.
(ii) We next prove that $b = b'$ when there is only one aggregation under $op_{n+1}$. We represents this aggregation as $\Aggregation{\aggf_1(a_1) \rightarrow \agga_1}{\grpatts_1}$ 
and use $\query_{G_1}$ represent the subquery rooted at this aggregation. By continuing apply \Cref{lem:gc}, we could get $\query_{G_1}(\instOf{\psSet}) \matchContains{\aComp_{\query_{G_1},\sa_{G_1}}} \query_{G_1}(\db)$ since only monotone operators under this aggregation.
For convenience we assume that all columns' name appeared in a query are unique. Because $\sa_{n+1} \subseteq \schemaOf{\db}$ and 
 $\forall x \in\sa_{n+1} \exists g \in G: \allcond(\query_n) \rightarrow x = g $, then  $\forall x \in\sa_{G_1} \exists g \in G_1: x = g $, otherwise, we will lose some columns used in $\sa_{n+1}$ before reaching $op_{n+1}$. 
Thus, after $\Aggregation{\aggf_1(a_1) \rightarrow \agga_1}{\grpatts_1}$, we get $\agga_1 = \agga_1'$. Then, $\aComp_{\query_{G_1},\sa_{G_1}}$ would only contain equalities and thus $\query_{G_1}(\instOf{\psSet}) \subseteq \query_{G_1}(\db)$. 
And only monotone operators exist between $_{F_1(a) \rightarrow f_1} \aggregation _{G_1}$ and $op_{n+1}$, which would not change the $\aComp$, thus $\aComp_{\query_{n},\sa_{n}}$ would only contain equalities. 
Since  $\query_n(\instOf{\psSet}) \matchContains{\aComp_{\query_n,\sa_n}} \query_n(\db)$,  
 $\query_n(\instOf{\psSet}) \subseteq \query_n(\db)$. 
 Then, similar to (i), we get $b = b'$. 
(iii) At last, we prove that $b = b'$ when there are many aggregations under $op_{n+1}$. Assume these aggregations are $\Aggregation{\aggf_1(a_1) \rightarrow \agga_1}{\grpatts_1}, \ldots, \Aggregation{\aggf_k(a_k) \rightarrow \agga_k}{\grpatts_k}$ from bottom to up respectively. The matched subqueries are $\query_{G_1}, \ldots, \query_{G_k}$. Since $\sa_{n+1} \subseteq \schemaOf{\db}$ and 
$\forall x \in\sa_{n+1} \exists g \in G: \allcond(\query_n) \rightarrow x = g $, then $\forall x \in\sa_{G_1} \exists g \in G_1: \rightarrow x = g, \ldots,  \forall x \in\sa_{G_k} \exists g \in G_k: \rightarrow x = g$, otherwise, we will lose some columns in $\sa$ before reaching $op_{n+1}$. Then by reapplying (ii), we would get $b = b'$. Note that for non-monotone operators, we only consider aggregation in this paper. 
Thus, $\sa_{n+1} = \sa_n \land b = b'$.
\textbf{CASE 2:} $\exists x: x \in \sa_1 \land x \not \in G \land (f=count \lor (f \in \{sum,max\} \land (\allcond(\query_1) \rightarrow a \geq 0)))$.
Recall we have proven that $\query_{n+1}(\instOf{\psSet}) \matchContains{\aComp_{\query_{n+1},\sa_{n+1}}} \query_{n+1}(\db)$ where $\aComp_{\query_{n+1},\sa_{n+1}} = \aComp_{\query_{n},\sa_{n}}$.
Assume exists a pair of matched tuples $(t,t')$ in  $\query_{n+1}(\instOf{\psSet}) \matchContains{\aComp_{\query_{n+1},\sa_{n+1}}} \query_{n+1}(\db)$ where $\{t\} = op_{n+1}(\query_{n}(\{t_1,\ldots,t_m\}))$ and $\{t'\} = op_{n+1}(\query_{n}(\{t_1',\ldots,t_n'\}))$. That is, $\{t_1,\ldots,t_m\} \subseteq \query_{n+1}(\instOf{\psSet})$ and $\{t_1',\ldots,t_n'\} \subseteq \query_{n+1}(\db)$.
If $\exists x: x \in \sa_n \land x \not \in G$, then $m \leq n$. Since $f=count \lor (f \in \{sum,max\} \land (\allcond(\query_n) \rightarrow a \geq 0))$, $b \leq b'$. Thus, $\sa_{n+1} = \sa_n \land b \leq b'$.
\textbf{CASE 3:} $\exists x: x \in \sa_1 \land x \not \in G \land (f \in \{sum,min\} \land (\allcond(\query_1) \rightarrow a \leq 0))$. Similar to CASE 2, if $\exists x: x \in \sa_n \land x \not \in G$, then $m \leq n$.
Since $f \in \{sum,min\} \land (\allcond(\query_n) \rightarrow a \leq 0)$, $b \geq b'$. Thus, $\sa_{n+1} = \sa_n \land b \geq b'$.
\textbf{CASE 4:} Otherwise, we are unable to decide the relationship between $b$ and $b'$. Thus, $op_{n+1}(\query_n(\instOf{\psSet})) \matchContains{\aComp_{\query_{n+1},\sa_{n+1}}} op_{n+1}(\query_n(\db))$ where  $\aComp_{\query_{n+1},\sa_{n+1}} = \aComp_{\query_n,x}$ and $\sa_{n+1} = \sa_n$.

\underline{$\ordlimit{O}{C}$:}
Assume $\exists t \in \query_n(\instOf{\psSet})$, since
$\query_n(\instOf{\psSet}) \matchContains{\aComp_{\query_n,\sa_n}} \query_n(\db)$,
then there have to be a tuple $t' \in \query_n(\db)$ such that
$t$ mapped with $t'$ and $(t,t') \models \aComp_{\query_n,\sa_n}$.
Assume $t'$ satisfies the limit condition, i.e., $t' \in  \ordlimit{O}{C}(\query_n(\db))$. 
Since $\forall o \in O: \aComp_{\query_1,\sa_1} \land \allcond(\query_1) \land \allcond(\query_1') \rightarrow o = o'$, then $t.C = t'.C$. Thus, $t \in  \ordlimit{O}{C}(\query_n(\instOf{\psSet}))$. 
Thus, $op_{n+1}(\query_n(\instOf{\psSet})) \matchContains{\aComp_{\query_{n+1},\sa_{n+1}}} op_{n+1}(\query_n(\db))$ where \\$\aComp_{\query_{n+1},\sa_{n+1}} = \aComp_{\query_{n},\sa_{n}}$.	

\underline{$\query_{Ln}(\db) \crossprod \query_{Rn}(\db)$:} 
Let $\query_{Ln}(\db)$ and $\query_{Rn}(\db)$ be the left and right children of $op_{n+1}$ respectively.
Since $\gc(\query_{n+1}, \sa_{n+1}) = \true$, based on \Cref{lem:gc}, $\gc(\query_{Ln}, \sa_{Ln}) = \true$ and $\gc(\query_{Rn}, \sa_{Rn}) = \true$.  Thus $\query_{Ln}(\instOf{\psSet}) \matchContains{\aComp_{\query_{Ln},\sa_{Ln}}} \query_{Ln}(\db)$ and $\query_{Rn}(\instOf{\psSet}) \matchContains{\aComp_{\query_{Ln},\sa_{Rn}}} \query_{Rn}(\db)$.  Then \\$\query_{Ln}(\db_{\psSet}) \crossprod  \query_{Rn}(\db_{\psSet})  \matchContains{\aComp_{\query_{Ln} \crossprod \query_{Rn},\sa_{\query_{Ln} \crossprod \query_{Rn}}}} \query_{Ln}(\db) \crossprod  \query_{Rn}(\db)$ where $\aComp_{\query_{Ln} \crossprod \query_{Rn},\sa_{\query_{Ln} \crossprod \query_{Rn}}} =  \aComp_{\query_{Ln}} \land \aComp_{\query_{Rn}}$ and $\sa_{\query_{Ln} \crossprod \query_{Rn}} = \sa_{\query_{Ln}} \land \sa_{\query_{Rn}}$. 

\underline{$\query_{Ln}(\db) \join_{a=b}  \query_{Rn}(\db)$:} Similar to $\crossprod$, for $\query_{Ln}$ and $\query_{Rn}$ we have $\query_{Ln}(\instOf{\psSet}) \matchContains{\aComp_{\query_{Ln},\sa_{Ln}}} \query_{Ln}(\db)$ and $\query_{Rn}(\instOf{\psSet}) \matchContains{\aComp_{\query_{Ln},\sa_{Rn}}} \query_{Rn}(\db)$. Since $ \aComp_{\query_{Ln},\sa_{Ln}} \land \allcond(\query_{Ln}') \land \allcond(\query_{Ln}) \Rightarrow a=a' $ and $ \aComp_{\query_{Rn},\sa_{Rn}} \land \allcond(\query_{Rn}') \land \allcond(\query_{Rn}) \Rightarrow b=b' $, then for each tuple $t \in \query_{Ln}(\db_{\psSet})  \join_{a=b}   \query_{Rn}(\db_{\psSet})$ where $t.a = t.b = C$ ($C$ is a constant value), there have to be a tuple $t' \in \query_{Ln}(\db)  \join_{a=b}  \query_{Rn}(\db)$ where $t'.a = t'.b= C$ such that 
$t$ mapped with $t'$ and $(t,t') \models \aComp_{\query_{Ln},\sa_{Ln}} \land \aComp_{\query_{Rn},\sa_{Rn}}$. Therefore, \\
$\query_{Ln}(\db_{\psSet})  \join_{a=b}  \query_{Rn}(\db_{\psSet})  \matchContains{\aComp_{\query_{Ln}  \join_{a=b} \query_{Rn},\sa_{\query_{Ln}  \join_{a=b} \query_{Rn}}}} \query_{Ln}(\db)$ $ \join_{a=b}  \query_{Rn}(\db)$ where $\aComp_{\query_{Ln}  \join_{a=b} \query_{Rn},\sa_{\query_{Ln} \crossprod \query_{Rn}}} =  \aComp_{\query_{Ln}} \land \aComp_{\query_{Rn}}$ and $\sa_{\query_{Ln}  \join_{a=b} \query_{Rn}} = \sa_{\query_{Ln}} \land \sa_{\query_{Rn}}$.
\end{proof}
}
\BG{state how this will be used afterwards.}
In \Cref{lem:gc-imply-gcontainment} we proved that $\gc(\query,\sa) \Rightarrow \query(\instOf{\psSet}) \matchContains{\aComp_{\query,\sa}} \query(\db)$.
Since $\instOf{\psSet} \supseteq P(\query,\db)$, now we reason about that non-provenance tuples in $\instOf{\psSet}$ would not 
result in any result tuple which is different with the result in $ \query(\db)$. That is, if we generate a tuple in $ \query(\db)$, we would also generate this tuple in $\query(\instOf{\psSet})$.
\begin{Lemma}\label{lem:gc-regenerate}
Let $\query$ be a query, $\db$ be a database, 
and $\sa = \bigcup_{1}^{n} \sa_i$ a set of attributes where each $\sa_i$ belongs to a relation $R_i$ accessed by $\query$ such that $R_i \neq R_j$ for $i \neq j$. 
Given a set of provenance sketches $\psSet = \{ \provSketch_i \}$  for  $\query$ over $\db$ with respect to a set of range partitions $\{ \parti_{\ranges_i,\sa_i}(R_i) \}$, then
\ifnottechreport{$\gc(\query,\sa) \Rightarrow \query(\db) \subseteq \query(\instOf{\psSet}). $}
\iftechreport{\[ \gc(\query,\sa) \Rightarrow \query(\db) \subseteq \query(\instOf{\psSet}) \]}
\end{Lemma}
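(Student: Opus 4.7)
The plan is to prove this by structural induction on $\query$, leveraging two facts that follow from the construction of provenance sketches: (i) $P(\query,\db) \subseteq \instOf{\psSet} \subseteq \db$, and (ii) provenance sufficiency $\query(P(\query,\db)) = \query(\db)$. Rewriting the goal as $\query(P(\query,\db)) \subseteq \query(\instOf{\psSet})$, it suffices to show that adding the ``extra'' tuples $\instOf{\psSet} \setminus P(\query,\db)$ to the provenance never causes a result tuple to disappear. In bag semantics this containment is to be read multiplicity-wise.

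First I would dispatch the base case $\query = R$. If $\psSet$ contains no sketch for $R$, then $\query(\db) = \query(\instOf{\psSet})$ and the claim is trivial; if $R$ is the only operator then $P(\query,\db) = R$ so the sketch must cover every fragment of $R$ and $\instOf{\provSketch} = R$. For the inductive step I would first handle the monotone operators ($\selection_\theta$, $\projection_A$, $\crossprod$, $\join_\theta$, $\union$, $\duprem$, $\ordlimit{O}{C}$) uniformly: each is monotone in its bag-semantics input, so $P(\query,\db) \subseteq \instOf{\psSet}$ together with the inductive hypothesis applied to the child subqueries yields $\query(P(\query,\db)) \subseteq \query(\instOf{\psSet})$; combining with (ii) closes the case. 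For top-k, care is needed to observe that the monotonicity is preserved because the $\gc$ rule for $\ordlimit{O}{C}$ forces equality on order-by attributes, so the ``smallest $C$'' set computed from the larger input $\instOf{\psSet}$ still contains the smallest-$C$ set computed from $P(\query,\db)$.

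The critical case is aggregation $\query = \Aggregation{\aggf(a) \to \agga}{G}(\query_1)$. Take any $t' = (g,v) \in \query(\db)$. By provenance sufficiency the contributing tuples for $(g,v)$ lie in $P(\query,\db) \subseteq \instOf{\psSet}$, so group $g$ does appear in $\query_1(\instOf{\psSet})$. The delicate point is that the aggregate value could drift if additional tuples of group $g$ from $\db \setminus P(\query,\db)$ entered $\instOf{\psSet}$. Here I would use the structure enforced by $\gc(\query,\sa)$ together with the case analysis of $\aComp$ from \Cref{tab:acomp}. In Case~1 ($\sa_1 \subseteq G$), fragments respect groups, so either all of group $g$'s tuples lie in $\instOf{\psSet}$ or none do; since at least one is there, all of them are, and the aggregate is preserved. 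In Cases~2--3, where the aggregate value may only satisfy $\agga \leq \agga'$ or $\agga \geq \agga'$, the $\gc$ clauses of the operators above the aggregation (in particular any enclosing selection, equijoin, or top-k) are what ultimately force equality on the output; I would therefore strengthen the inductive invariant to track, for each subquery $\query_{sub}$, a reverse generalized containment $\query_{sub}(\db) \matchContains{\aComp^{rev}_{\query_{sub},\sa}} \query_{sub}(\instOf{\psSet})$ dual to the one in \Cref{lem:gc-imply-gcontainment}, and then show that at the root $\aComp$ collapses to equality on all output attributes so that the reverse containment degenerates to $\query(\db) \subseteq \query(\instOf{\psSet})$.

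I expect the main obstacle to be precisely the coordination between Cases~2--3 of aggregation and the enclosing operators: locally the aggregate attribute admits a slack of $\leq$ or $\geq$, and this slack has to be shown to be absorbed by the upstream $\gc$ clauses before reaching the root of $\query$. The plan is to mirror the operator-by-operator analysis already used in the proof of \Cref{lem:gc-imply-gcontainment}, but propagating matched tuples from $\query(\db)$ into $\query(\instOf{\psSet})$ rather than vice versa, and then argue that at the top-level query $\aComp_{\query,\sa}$ forces equality on every output attribute, giving the stated containment.
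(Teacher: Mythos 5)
Your overall strategy (structural induction, monotone operators handled by monotonicity, aggregation isolated as the critical case) points in the right direction, but the inductive invariant you propose does not hold, and the resolution you sketch for the hard aggregation cases would not go through. First, the ``reverse generalized containment'' $\query_{sub}(\db) \matchContains{\aComp^{rev}} \query_{sub}(\instOf{\psSet})$ over \emph{entire} intermediate results is false: below a non-monotone operator, a group of $\query_{agg}(\db)$ whose tuples lie entirely outside the sketch simply does not appear in $\query_{agg}(\instOf{\psSet})$, so no total mapping from $\query_{sub}(\db)$ into $\query_{sub}(\instOf{\psSet})$ can exist. The paper's proof avoids this by weakening the inductive statement to only those intermediate tuples $t$ with $P(t) \subseteq \prov{\query}{\db}$, i.e., the tuples that actually contribute to the final result; that restriction is exactly what makes the induction close, and your proposal is missing it (it also quietly breaks your uniform treatment of the monotone operators, whose inductive hypothesis is the lemma statement applied to a subquery, which is false as stated).

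Second, your plan for aggregation Cases~2--3 --- that the slack $\agga \leq \agga'$ is ``absorbed by upstream $\gc$ clauses'' so that $\aComp$ collapses to equality at the root --- is not how the rules behave and is not true in general: for a \texttt{HAVING}-style query $\selection_{\agga > c}(\Aggregation{\aggf(a)\to\agga}{G}(\query_1))$ the root $\aComp$ still contains $\agga \leq \agga'$; the selection rule of $\gc$ only forces $\theta \rightarrow \theta'$, which serves the \emph{forward} containment of \Cref{lem:gc-imply-gcontainment}, not equality. The paper's argument for aggregation is different and does not case-split on $\aComp$ at all for this direction: for a result tuple $t$ with $P(t) \subseteq \prov{\query}{\db}$, all tuples of its group are in $\query_1(\instOf{\psSet})$ by the (restricted) induction hypothesis, and no \emph{spurious} tuple can join that group because \Cref{lem:gc-imply-gcontainment} maps every tuple of $\query_1(\instOf{\psSet})$ to a tuple of $\query_1(\db)$ with equal group-by values (the $g = g'$ clause of $\gc$), and $\query_1(\db)$ has no further tuples in that group. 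Hence the multiset being aggregated is literally identical on both sides and the aggregate value is exactly equal for the relevant groups; the $\leq/\geq$ cases of $\aComp$ only ever concern the extra, non-provenance groups, which are the business of the forward lemma, not this one. You would need to restructure your induction around the provenance-restricted invariant and replace the ``slack absorption'' step with this exact-group-reproduction argument.
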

\ifnottechreport{
\begin{proof}
For the full proof see \cite{techreport}.
\end{proof}
}
\iftechreport{
\begin{proof}
$\query(\db) \subseteq \query(\instOf{\psSet})$ represents that if exists a tuple $t \in \query(\db)$, then $t \in \query(\instOf{\psSet})$. That is, we need to prove that $\gc(\query,\sa) \land \exists t \in \query(\db) \Rightarrow t \in \query(\instOf{\psSet})$.  In the following, we prove this by induction through proving this formula holds for each subquery of $\query$. However, not all the subquery's result satisfy this formula,
since there is a implicit condition $P(t)\subseteq P(\query,\db)$ under $\gc(\query,\sa) \land \exists t \in \query(\db) \Rightarrow t \in \query(\instOf{\psSet})$. 
Recall that we use $P(\query,\db)$ to represent the provenance of $\query$ over $\db$, for convenience, we use $P(t)$ to represent the provenance to derive tuple $t$.
Then the meaning is that only the intermediate result tuples which contribute to the final result $\query(\db)$ would satisfy this formula. Thus, what we need to prove is that $ \gc(\query,\sa) \land \exists t \in \query(\db) : P(t)\subseteq P(\query,\db)  \Rightarrow t \in \query(\instOf{\psSet})$.

\underline{Base case:}
When $Q$ is table access operator $R$, if $t \in R$, $P(t) = \{t\}$. Recall that provenance sketches are a superset of provenance, that is $P(t) \subseteq R_{\psSet}$. Thus $t \in R_{\psSet}$.


\underline{Inductive step:}
Assume query $\query_n$, a sub query of $\query$, with depth less than or equal to $n$.
Here we use 
\emph{depth} to represent the levels in the relational algebra format of a query, e.g., $\query_1$ represents the base table $R$ with depth 1.
Assume we have proven that $ \gc(\query_n,\sa_n) \land \exists t \in \query_n(\db) : P(t)\subseteq P(\query,\db)  \Rightarrow t \in \query_n(\instOf{\psSet})$.
Based on the induction hypothesis, we have to prove that the same holds for sub query $\query_{n+1}$ with depth less than or equal to $n+1$, that is, we need to prove  $ \gc(\query_{n+1},\sa_{n+1}) \land \exists t \in \query_{n+1}(\db) : P(t)\subseteq P(\query,\db) \Rightarrow t \in \query_{n+1}(\instOf{\psSet})$. For notational convenience, we use $op_{n}$($op_{n+1}$) represent any operator of depth $n$ ($n + 1$), i.e., $\query_{n+1} = op_{n+1}(\query_{n})$. 

Since $\gc(\query_{n+1},\sa_{n+1}) = \true$, base on \Cref{lem:gc}, then $\gc(\query_n, \sa_n) = \true$. For join operator, we assume the left and right input of $op_{n+1}$ are $\query_{nL}(\db)$ and $\query_{nR}(\db)$ respectively, then $\gc(\query_{nL}, \sa_{nL}) = \true$ and $\gc(\query_{nR}, \sa_{nR}) = \true$.


Now we discuss different $op_{n+1}$:

\underline{For $\projection_A$:} 
Assume $\exists t_{proj} \in \query_{n+1}(\db) : P(t_{proj}) \subseteq P(\query,\db)$, then there have to be a tuple $t \in \query_n(\db)$ such that $\projection_A(\{t\}) = \{t_{proj}\}$. Since $P(t_{proj}) = P(t)$, $P(t) \subseteq P(\query,\db)$. And because
$\gc(\query_n, \sa_n) = \true$, based on assumption, $t \in \query_n(\instOf{\psSet})$.
Thus, from $\query_{n+1}(\instOf{\psSet}) = \projection_A(\query_{n}(\instOf{\psSet}))$,
$t \in \query_n(\instOf{\psSet})$ and $\projection_A(\{t\}) = \{t_{proj}\}$, we get $t_{proj} \in \query_{n+1}(\instOf{\psSet})$.


\underline{For $\selection_{\theta}$:} Assume $\exists t \in \query_{n+1}(\db) : P(t) \subseteq P(\query,\db)$, since $\selection_{\theta}(\{t\}) = \{t\}$, $t \in \query_{n}(\db)$.
Because $\gc(\query_n, \sa_n) = \true$, based on assumption, $t \in \query_n(\instOf{\psSet})$. Thus, $t \in \query_{n+1}(\instOf{\psSet})$.

\underline{For $\delta$:} Assume $\exists t \in \query_{n+1}(\db) : P(t) \subseteq P(\query,\db)$, there have to exist at least one tuple $t' \in \query_{n}(\db)$ such that $t = t'$, that is $P(t') \subseteq P(t)$, thus $P(t') \subseteq P(\query,\db)$.
And because $\gc(\query_n, \sa_n) = \true$, based on assumption, $t' \in \query_n(\instOf{\psSet})$, that it  $t \in \query_n(\instOf{\psSet})$. Thus, $t \in \query_{n+1}(\instOf{\psSet})$.

\underline{$\ordlimit{O}{C}$:}
Assume $\exists t \in \query_{n+1}(\db) : P(t) \subseteq P(\query,\db)$, then $t \in \query_{n}(\db)$.   And because $\gc(\query_n, \sa_n) = \true$, based on assumption, $t \in \query_n(\instOf{\psSet})$. Let $T$ represent all these $t$, that is $T \subseteq \query_{n+1}(\db)$. Now we decide whether $T \subseteq \query_{n+1}(\instOf{\psSet})$.
Recall \Cref{lem:gc-imply-gcontainment} proved that $\gc(\query_n, \sa_n) \Rightarrow \query_n(\instOf{\psSet}) \matchContains{\aComp_{\query_n,\sa_n}} \query_n(\db)$. Also we have   $T \subseteq \query_{n}(\instOf{\psSet})$, $T \subseteq \query_{n}(\db)$ and $\forall o \in O : o = o'$ from the rules in \Cref{tab:subset}. Thus, if $T \subseteq \query_{n+1}(\db)$, then $T \subseteq \query_{n+1}(\instOf{\psSet})$.  

\underline{$\query_{Ln}(\db) \union \query_{Rn}(\db)$:} Assume $\exists t \in \query_{n+1}(\db) : P(t) \subseteq P(\query,\db)$, then $t \in \query_{Ln}(\db)$ or $t \in \query_{Rn}(\db)$.
And because $\gc(\query_{nL}, \sa_{nL}) = \true$ and $\gc(\query_{nR}, \sa_{nR}) = \true$, based on assumption, $t \in \query_{Ln}(\instOf{\psSet})$ or $t \in \query_{Rn}(\instOf{\psSet})$. Thus, $t \in \query_{n+1}(\instOf{\psSet})$.

\underline{For $\crossprod, \join$:} Let $\query_{nL}(\db)$ and $\query_{nR}(\db)$ represent the left and right input of $op_{n+1}$ respectively. Assume $\exists t \in \query_{n+1}(\db) : P(t) \subseteq P(\query,\db)$,  then there have to be a tuple $t'_L \in \query_{nL}(\db)$ and a tuple $t'_R \in \query_{nR}(\db)$ such that $\{t'_L\} op_{n+1} \{t'_R\} = \{t\}$.
And $P(t'_L) \union P(t'_R) = P(t)$, thus   $P(t'_L) \subseteq P(\query,\db)$ and $P(t'_R) \subseteq P(\query,\db)$.
And because $\gc(\query_{nL}, \sa_{nL}) = \true$ and $\gc(\query_{nR}, \sa_{nR}) = \true$, based on assumption, $t'_L \in \query_{nL}(\instOf{\psSet})$ and $t'_R \in \query_{nR}(\instOf{\psSet})$. Thus  $t \in \query_{n+1}(\instOf{\psSet})$.

\underline{For $\Aggregation{\aggf(a) \to \agga}{\grpatts}$:}
Assume $\exists t \in \query_{n+1}(\db) : P(t) \subseteq P(\query,\db)$ and $\{t_1,...,t_m\} \subseteq \query_{n}(\db)$ such that $op_{n+1}(\{t_1,...,t_m\}) = \{t\}$.
Then $P(t_1) \union \dots \union P(t_m) = P(t)$, thus $P(t_1) \subseteq P(\query,\db), \dots, P(t_m) \subseteq P(\query,\db) $.
And because $\gc(\query_n, \sa_n) = \true$, based on assumption, $\{t_1,...,t_m\} \subseteq \query_n(\instOf{\psSet})$.
To let $t \in \query_{n+1}(\instOf{\psSet})$, we have to keep that not exists a set of tuples $T \subseteq \query_n(\instOf{\psSet})$ such that $op_{n+1}(\{t_1,...,t_m\} \union T) = \{t'\} $ and $t' != t$ where $t' \in \query_{n+1}(\db)$. Then, the question transforms to prove no such $T$ in $\query_n(\instOf{\psSet}) - \{t_1,...,t_m\}$.
Since $\gc(\query_n, \sa_n) = \true$, based on \Cref{lem:gc-imply-gcontainment}, then $\query_n(\instOf{\psSet}) \matchContains{\aComp_{\query_n,\sa}} \query_n(\db)$.
Assume $op_{n+1}$ groups on columns $\{a_1, \dots, a_k\}$ and for $\{t_1,...,t_m\}$, the values on these columns are $\{v_1, \dots, v_k\}$. Then, that is in $\query_n(\db) - \{t_1,...,t_m\}$, there are no tuples on these columns have same values with $\{v_1, \dots, v_k\}$.
Because  $\gc(\query_{n+1}, \sa_{n+1}) = \true$, the aggregation rule in \Cref{tab:gc} are satisfied, then we know $\forall g \in G : g = g'$, that is $a_1 = a_1', \dots, a_k = a_k'$.
Since $\query_n(\instOf{\psSet}) \matchContains{\aComp_{\query_n,\sa}} \query_n(\db)$, if no tuples in $\query_n(\db) - \{t_1,...,t_m\}$ have same values with $\{v_1, \dots, v_k\}$ on $\{a_1, \dots, a_k\}$, then there no tuples in $\query_n(\instOf{\psSet}) - \{t_1,...,t_m\}$
have the same values with $\{v_1, \dots, v_k\}$ on $\{a_1, \dots, a_k\}$.
Thus, $T$ does not exist, we have $t \in \query_{n+1}(\instOf{\psSet})$.
\end{proof}
}

\BG{State
the intuitive reason why generalized containment implies equality for query
results. We need to explain what else is checked in the conditions that ensures
equality because generalized containment does not imply equality in general.}
In \Cref{lem:gc-imply-gcontainment} we proved that $\query(\instOf{\psSet}) \matchContains{\aComp_{\query,\sa}} \query(\db)$ and \Cref{lem:gc-regenerate} keeps the matched tuples are equal, that is, $\query(\instOf{\psSet}) \subseteq \query(\db)$. Then, we have $\query(\instOf{\psSet}) = \query(\db)$.
\begin{Theorem}[$\gc(\query,\sa)$ implies safety of $\sa$]\label{theo:safety-check-is-correct}
  Let $\query$ be a query, $\db$ be a database, and $\sa = \bigcup_{1}^{n} \sa_i$ a set of attributes where each $\sa_i$ belongs to a relation $R_i$ accessed by $\query$ such that $R_i \neq R_j$ for $i \neq j$. 
If $\gc(\query,\sa)$ holds, then $\sa$ is a safe set of attributes for $\query$.
\end{Theorem}
\ifnottechreport{
\begin{proof}
For the full proof see \cite{techreport}.
\end{proof}
}
\iftechreport{
\begin{proof}
Recall that a set of attribute $\sa$ is safe, if for databases $\db$ and all sets of provenance sketches $\psSet = \{ \provSketch_i \}$  for  $\query$ over $\db$  with respect to a set of range partitions $\{ \parti_{\ranges_i,\sa_i}(R_i) \}$, we have 
\iftechreport{\[\gc(\query,\sa) \Rightarrow \query(\instOf{\psSet}) = \query(\db)\]}

  Since $\gc(\query,\sa) = \true$, based on \Cref{lem:gc-imply-gcontainment}, $\query(\instOf{\psSet}) \matchContains{\aComp_{\query,\sa}} \query(\db)$; based on \Cref{lem:gc-regenerate}, $\query(\instOf{\psSet}) \supseteq \query(\db)$. Thus, for each pair of matched tuples $(t,t')$ in $\query(\instOf{\psSet}) \matchContains{\aComp_{\query,\sa}} \query(\db)$ where $t \in \query(\instOf{\psSet})$ and $t' \in \query(\db)$, $t = t'$, that is, $\query(\instOf{\psSet}) \subseteq \query(\db)$. Thus, $\query(\instOf{\psSet}) = \query(\db)$.
\end{proof}
}
Furthermore, since our rules are independent of the number of fragments, adding fragments to a safe sketch, the resulting sketch is guaranteed to be safe too.
\BG{Some of the commmented out stuff below can be moved to TR}




  }
\ifnottechreport{
  We now proceed to formally state the correctness of our safety checking algorithm (\Cref{theo:safety-check-is-correct}), i.e., if $\gc(\query,\sa)$ holds, then $\sa$ is a safe set of attributes for $\query$.
\begin{Theorem}[$\gc(\query,\sa)$ implies safety of $\sa$]\label{theo:safety-check-is-correct}
  Let $\query$ be a query, $\db$ be a database, and $\sa = \bigcup_{1}^{n} \sa_i$ a set of attributes where each $\sa_i$ belongs to a relation $R_i$ accessed by $\query$ such that $R_i \neq R_j$ for $i \neq j$. 
If $\gc(\query,\sa)$ holds, then $\sa$ is a safe set of attributes for $\query$.
\end{Theorem}
 \begin{proofsketch}
The claim is proven by first proving two lemmas that state that (i) $\gc(\query,\sa)$ implies $\gc(\query',X')$ for any subquery of $\query$ (this follows trivially from the definition of $\gc$) and that (ii) $\gc(\query',\sa)$ implies $\query'(\instOf{\psSet}) \matchContains{\aComp} \query'(\db)$ for any subquery $\query'$ of $\query$. (ii) is proven by induction of the structure of a query. Then based on these results we prove the theorem by demonstrating that $\gc(\query,\sa)$ together with the fact that $\instOf{\psSet}$ contains that provenance of $\query$ implies the claim.
\end{proofsketch}
}


\section{Reusing Provenance Sketches for Parameterized Queries}\label{sec:reuse-different}

Given a set of accurate provenance sketches $\psSet$  captured for a query $\query$, we would like to be able to use it answer future queries $\query'$.
However, to determine whether this is possible, we need to determine whether $\instOf{\psSet}$ is sufficient for $\query'$.
This is similar to checking query containment which is known to be undecidable for the class of queries we are interested in~\cite{chandra1977,sagiv1980,klug1988}.
\BG{Undecidability through reduction from Hilbert?}
Here we focus on developing a solution for a narrower problem: reusing sketches across multiple instances of a parameterized query~\cite{amiri2003}. Given the prevalence of parameterized queries in applications and reporting tools that access a database, this is an important special case. Note that even for ad hoc analytics, it is common that query patterns repeat if the number of queries is sufficiently large. While such queries are typically not expressed as parameterized queries we can treat them as such by replacing all constants in selection conditions with parameters. \iftechreport{Note that the problem studied in this section can also be interpreted as a generalization of safety checking for sketches with the difference that we determine the safety of a sketch for a different query instead of for the query it was captured for.} The major result of this section is a sufficient condition for checking whether a sketch can be reused that is rooted in the safety conditions we introduced in~\Cref{sec:safety-check}.

Let $\pdom$ be a countable set of variables called parameters.
A \emph{parameterized query} $\apq$ for $\pvec = (\p_1, \ldots, \p_n)$ and $\p_i \in \pdom$ is a relational algebra expression where conditions of selections  may refer to a parameters from the set $\{\p_i\}$. We assume that each parameter from $\pvec$ is referenced at least once by $\apq$. A parameter binding $\vvec$ for $\apq$ is a vector of constants, one for each parameter $\p_i$ from $\pvec$. The \emph{instance} $\pinst{\pq}{\vvec}$ of $\apq$ for $\vvec$ is the query resulting from  substituting $\p_i$ with $\pv_i$ in $\pq$. For instance, the parameterized SQL query \lstset{mathescape=false}\lstinline!SELECT * FROM R WHERE a < $1! can be written as $\pq[\p_1] = \selection_{a  < \p_1}(R)$.
We define the \textbf{sketch reusability problem} as: given a parameterized query $\pq$, two instances $\query$ and $\query'$ for $\pq$, and a safe set of provenance sketches $\psSet$ for $\query$, determine whether $\instOf{\psSet}$ is sufficient for $\query'$. In the remainder of this section we develop a \emph{sufficient} condition for sketch reusability.
\BG{In reality, the same query has lower chance to be run again and again, how to use existing provenance sketches for different incoming queries becomes more important.
Assume given a query $Q$, the database $\db$, the provenance sketches of $Q$ over database $\db$ is $\provSketch$. Then when a new query $Q'$ coming in, we want to determine whether $\instOf{\provSketch}$ could be used to compute $Q'$, that is whether $Q'(\instOf{\provSketch}) = Q'(\db)$. To solve this problem we might need to find the relationship between these two queries $Q$ and $Q'$ which might be related to the query containment and equivalence problems\cite{chandra1977,sagiv1980,klug1988} which is hard. Thus, to simplify this problem, we start from the \textbf{template-based queries} that the selection conditions in the \lstinline!WHERE! clause share the same structure among the queries and differ only in a few numeric or string constants ~\cite{amiri2003}. For example, queries $Q=$ \lstinline!SELECT * FROM R WHERE a < 10! and $Q'=$ \lstinline!SELECT * FROM R WHERE a < 5! share the same structure but differ on selection conditions which are $a<10$ and $a<5$ respectively. Thus to determine whether  $Q'(\instOf{\provSketch}) = Q'(\db)$, we could use the conditions in the \lstinline!WHERE! to make the decision.
}
Before 
presenting our sufficient condition, 
we first state three lemmas that we will use to develop this condition.
First, 
observe that the same sets of attributes are safe for all instances of a parameterized query.

\begin{figure*}
\fbox{
\begin{minipage}{1\linewidth}
\begin{minipage}{1\linewidth}
 \begin{subfigure}{0.41\linewidth}
\vspace{-3mm}
\begin{adjustbox}{max width=1\linewidth}
{
\begin{tabular}{r@{ = }l}
  $\aComp_{R',R} $ & $\bigwedge_{a \in \schemaOf{R}} a = a'$\\
  $\aComp_{\selection_{\theta'}(\querya'),\selection_{\theta}(\querya)}$ & $\aComp_{\querya',\querya}$ \\
  $ \aComp_{\projection_A(\querya'),\projection_A(\querya)}$ & $\aComp_{\querya',\querya}$\\
     $ \aComp_{\duprem(\querya'),\duprem(\querya)}$ &$ \aComp_{\querya',\querya}$\\
 $\aComp_{\querya' \crossprod' \queryb',\querya \crossprod \queryb}$  & $\aComp_{\querya',\querya} \land \aComp_{\queryb',\queryb}$\\
 $\aComp_{\querya' \join_{\theta'} \queryb',\querya \join_{\theta} \queryb}$ & $\aComp_{\querya',\querya} \land \aComp_{\queryb',\queryb}$\\
  $\aComp_{\querya' \union' \queryb',\querya \union \queryb}$ & $\bigwedge_{i=1}^{n} (\aComp_{\querya',\querya} \rightarrow a_i = a_i' \land \aComp_{\queryb',\queryb} \rightarrow b_i = b_i') $\\
\multicolumn{2}{l}{\hspace{25mm} $\rightarrow a_i = a_i'$ \textbf{where} $\schemaOf{\querya} = (a_1, \ldots, a_n)$ }\\
\multicolumn{2}{l}{\hspace{42mm} \textbf{and} $\schemaOf{\queryb} = (b_1, \ldots, b_n)$}\\
\end{tabular}
}
\end{adjustbox}
\end{subfigure}
 \begin{subfigure}{0.59\linewidth}
\centering
\begin{adjustbox}{max width=1\linewidth}
{
\begin{tabular}{|ll|}
\hline
\rowcolor{lightgrey}
Query $\pq$                                                             & $\gpe(\query',\query)$ \\
		$R$                                                             & $\true$                \\
    $\selection_{\theta}(\pqa)$                                      & $\gpe(\querya', \querya) 
                                                                       $ 
                                                                                                 \\
    $\Aggregation{\aggf(a) \rightarrow \agga}{\grpatts}(\pqa)$       & $\gpe(\querya', \querya) \land (\forall g \in G : \aComp_{\querya',\querya} \land \allcond(\querya) \land \allcond(\querya') \rightarrow g = g')$  
   \\
    $\delta(\pqa)$                                                   & $ \gpe(\querya', \querya) \land (\forall a \in \schemaOf{\querya} :\aComp_{\querya',\querya} \land \allcond(\querya) \land \allcond(\querya') \rightarrow a = a')$ 
  \\
 $\projection_{A}(\pqa)$                                             & $\gpe(\querya', \querya)$
  \\
  $ \pqa \union \pqb$                                             & $\gpe(\querya', \querya) \land  \gpe(\queryb', \queryb)$ \\
		$ \pqa \crossprod \pqb$                                   & $\gpe(\querya', \querya) \land  \gpe(\queryb', \queryb)$ 
  \\ 
  $\pqa \join_{a=b} \pqb$                                         & $ \gpe(\querya', \querya) \land  \gpe(\queryb', \queryb) \land  (\aComp_{\querya',\querya} \land \allcond(\querya) \land \allcond(\querya') \rightarrow a=a') \land$ \\
                                                                        & $  (\aComp_{\queryb',\queryb} \land \allcond(\queryb) \land \allcond(\queryb') \rightarrow  b=b')$ 
  \\
		\hline
\end{tabular}
}
\end{adjustbox} \\[-1mm]
     \caption{$\gpe(\query',\query)$}
     \label{tab:gen-eq-ge}
\end{subfigure}
 \end{minipage}  \\[-6mm]
  \begin{subfigure}{1\linewidth}
   \resizebox{0.75\linewidth}{!}{
\begin{minipage}{0.88\linewidth}
\begin{align*} 
\aComp_{\Aggregation{\aggf(a) \to \agga}{\grpatts}'(\querya'),\Aggregation{\aggf(a) \to \agga}{\grpatts}(\querya)} &=
  \begin{cases}
     \aComp_{\querya',\querya} \land \agga = \agga' & \text{if }   \textcircled{1} \land \textcircled{2}   \\
        \aComp_{\querya',\querya} \land \agga \leq \agga' & \text{else if }  \textcircled{2}  \land ((f=sum \lor min) \land (\allcond(\querya) \rightarrow a<0))\\
      \aComp_{\querya',\querya} \land \agga \geq \agga' & \text{else if } \textcircled{2} \land (f=count \lor ((f=sum \lor max) \land (\allcond(\querya) \rightarrow a > 0))) \\
    \aComp_{\querya',\querya} & \text{otherwise}
    \end{cases}
\end{align*}
 \end{minipage}
 }
\\[1mm]
  \begin{minipage}{1\linewidth}
  \centering
  \resizebox{0.8\textwidth}{!}{
    \textbf{\textcircled{1}}  $ \aComp_{\querya', \querya}  \land \ngpred(\querya) \land \expr(\querya) \land \expr(\querya') \rightarrow \ngpred(\querya')$  \hspace{5mm}  \textbf{\textcircled{2}}   $ \aComp_{\querya', \querya}  \land \ngpred(\querya') \land \expr(\querya') \land \expr(\querya) \rightarrow \ngpred(\querya)$
    }
\end{minipage}\\[-4mm]
       \caption{$\aComp_{\query',\query}$}
      \label{tab:gen-eq-acomp}
 \end{subfigure}
  \end{minipage}
  }
  \trimfigspace
  \caption{Rules defining $\gpe(\query', \query)$ and $\aComp_{\query',\query}$ which are used to test reusability}
     \label{tab:gen-eq}
\end{figure*}


\begin{Lemma}\label{lem:ps-safe-same-template}
Let $\query$ and $\query'$ be two instances of a parameterized query $\pq$, then any set of attributes $\sa$ that is safe for $\query$ is safe for $\query'$.
\end{Lemma}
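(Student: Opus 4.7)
\begin{proofsketch}
The plan is to prove the contrapositive: if $\sa$ is unsafe for $\query' = \pinst{\pq}{\pv'}$ then $\sa$ is unsafe for $\query = \pinst{\pq}{\pv}$. By \Cref{def:attribute-safety}, unsafety of $\sa$ for $\query'$ yields witnesses: a database $\db'$, a range partition $\rparti_{\ranges',\sa}$ of some accessed relation, and a sketch $\provSketch'$ with $\query'(\instOf{\provSketch'}) \neq \query'(\db')$. The goal is to construct corresponding witnesses $\db$, $\rparti_{\ranges,\sa}$, and $\provSketch$ for $\query$, thereby establishing unsafety of $\sa$ for $\query$ as well. Since the conclusion is symmetric in $\query$ and $\query'$, the lemma follows.

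The construction exploits the fact that, by definition of a parameterized query, parameters may appear only inside selection predicates. Hence $\query$ and $\query'$ share an identical algebra tree, differing only in the constants used in parameter-bearing selections. I would define a value-rewriting bijection $f$ on tuples such that, for every selection predicate $\theta$ of $\pq$ involving a parameter $\p_i$, a tuple $t$ satisfies $\theta[\pv'_i]$ iff $f(t)$ satisfies $\theta[\pv_i]$. For a simple predicate $a \diamond \p_i$ with $\diamond \in \{<,\leq,=,\geq,>\}$, $f$ can be taken as a shift on $a$ (or a swap of the two constants, for equality); compound predicates are assembled attribute-wise. Setting $\db = f(\db')$, defining $\rparti_{\ranges,\sa}$ as the image of $\rparti_{\ranges',\sa}$ under $f$ with range boundaries shifted in lockstep, and taking $\provSketch$ to be the sketch determined by the induced fragment bijection, the evaluation of $\query$ on $\db$ tracks that of $\query'$ on $\db'$ operator by operator. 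The inequality $\query'(\instOf{\provSketch'}) \neq \query'(\db')$ therefore transfers to $\query(\instOf{\provSketch}) \neq \query(\db)$, contradicting safety of $\sa$ for $\query$.

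The main obstacle is ensuring that $f$ commutes with every non-selection operator of $\pq$. Aggregations, joins, and generalized projections depend on actual attribute values, so naively rewriting those values could alter group membership, join partners, or aggregate outputs and break the correspondence between the traces on $\db$ and $\db'$. I would address this by choosing $f$ to be order-preserving and to act only on the narrow range of values separating $\pv_i$ from $\pv'_i$, so that equalities used in joins and group-by keys, as well as monotone aggregates over unaffected regions, are preserved. A secondary subtlety arises when an attribute of $\sa$ is itself referenced by a parameter-bearing predicate: in this case the partition must be shifted jointly with $f$, which is always feasible because range partitions admit arbitrary refinement of their boundaries. The proof is finished by an auxiliary lemma that formalizes, by structural induction on $\pq$, the commutation of $f$ with each algebra operator and the preservation of the provenance-superset property of $\provSketch$ under the induced fragment bijection.
\end{proofsketch}
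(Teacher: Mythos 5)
Your strategy is genuinely different from the paper's. The paper disposes of this lemma in two sentences: the notion of safety it works with in \Cref{sec:safety-check} is determined by the query's structure and by which attributes carry sketches, not by the constants appearing in selection predicates, so since $\query$ and $\query'$ differ only in such constants the property transfers immediately. You instead attempt a semantic transfer of counterexamples: from a witness $(\db',\ranges',\provSketch')$ of unsafety for $\query'$ you manufacture a witness for $\query$ via a value-rewriting bijection $f$ on the data. If completed, this would establish the strongest purely semantic reading of \Cref{def:attribute-safety}; but the construction of $f$ is exactly where the whole difficulty lives, and it is the part you leave to an unproved ``auxiliary lemma.''

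The gap sits in your ``main obstacle'' paragraph and is not repairable by order-preserving local shifts. Parameters of $\pq$ may occur in predicates over \emph{derived} attributes --- this is the paper's motivating case, e.g.\ $\selection_{cnt > \p}(\Aggregation{count(\ast) \rightarrow cnt}{g}(R))$. There the value compared against $\p$ is an aggregate output, not a base value, so no rewriting of base tuples can shift $cnt$ by the amount separating $\pv$ from $\pv'$: a count can only change by inserting or deleting tuples (which destroys the operator-by-operator correspondence you rely on), and it can never be moved below $1$. Concretely, for $\pq[\p]=\selection_{cnt \leq \p}(\Aggregation{count(\ast) \rightarrow cnt}{g}(R))$ every attribute set is vacuously safe for the instance $\p=0$ (both sides always return the empty result) yet unsafe for $\p=1$, so a bijection $f$ with the properties you require cannot exist, and the contrapositive argument cannot close without extra hypotheses excluding such instances. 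Even for parameters on base attributes, $f$ must simultaneously respect every other use of that attribute --- join keys, group-by keys, inputs to $sum$/$min$/$max$, and the sort order of $\ordlimit{O}{C}$ --- and the claim that a single shift achieves all of this is asserted rather than proved. The structural induction you defer is the actual content of the proof, and as stated it would fail at the aggregation case.
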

\iftechreport{
\begin{proof}
The definition of safety from \Cref{sec:safety-check} does not consider constants in selection conditions. Since $\query$ and $\query'$ only differ in such constants, any sketch type  that is safe for $\query$ is also safe for $\query'$.
\end{proof}
}

Furthermore, adding additional fragments to a safe sketch $\provSketch$ for a query $\query$ yields a safe sketch (\Cref{sec:safety-check}). 

\begin{Lemma}\label{lem:additional-fragments}
Let $\provSketch$ be a safe sketch for a query $\query$. Then any sketch $\provSketch' \supseteq \provSketch$ is safe for $\query$.
\end{Lemma}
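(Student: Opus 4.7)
The plan is to reduce this monotonicity statement to the attribute-level certification of safety established in \Cref{sec:safety-check}. First, I would verify that $\provSketch'$ is in fact a well-formed provenance sketch: by \Cref{def:provenance-sketch}, this requires $\provSketch' \supseteq \provranges{\db}{\parti_{\ranges,a}(R)}{\query}$, which is immediate from the assumption $\provSketch' \supseteq \provSketch$ together with the fact that $\provSketch$ already dominates $\provranges{\db}{\parti_{\ranges,a}(R)}{\query}$.

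Next, I would appeal to the fact that the paper's notion of certified safety is captured by attribute safety (\Cref{def:attribute-safety}), not by a per-fragment-selection check. Specifically, the inference rules defining $\gc(\query,\sa)$ in \Cref{tab:subset} and the entire correctness proof of \Cref{theo:safety-check-is-correct} depend only on the query structure and the partitioning attribute set $\sa$; they make no reference either to which ranges $\ranges$ were chosen or to which subset of $\ranges$ is included in the sketch. Thus whenever the algorithm certifies $\provSketch$ as safe by deriving $\gc(\query,\sa)$, the very same derivation certifies every sketch built on $\sa$ over $\db$. In particular, $\provSketch'$ is built on the identical partition $\parti_{\ranges,a}(R)$, and therefore the same derivation establishes $\query(\instOf{\provSketch'}) = \query(\db)$.

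The main obstacle is ensuring that the statement is interpreted within the paper's framework. Read literally, ``$\provSketch$ is safe'' could admit instance-specific coincidences (e.g., a non-monotone aggregation where some non-provenance fragment happens to preserve the answer), and in such degenerate cases adding fragments could in principle break safety. I would therefore explicitly tie the hypothesis to safety obtained through \Cref{theo:safety-check-is-correct}, so that the proof reduces to the one-line observation that $\gc(\query,\sa)$ is independent of the fragment selection. With that clarification, the lemma follows as a direct corollary of \Cref{def:attribute-safety} and \Cref{theo:safety-check-is-correct}, without any further semantic reasoning about $\query$.
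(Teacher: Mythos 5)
Your proposal is correct and takes essentially the same route as the paper, whose entire proof is the one-line observation that the safety rules of \Cref{sec:safety-check} depend only on the partitioning attributes and not on which fragments are selected, so any superset of a certified-safe sketch is certified by the same derivation. Your additional caveat --- that the lemma read against the literal, instance-level \Cref{def:safe} could fail for coincidentally safe sketches, and must instead be read as safety certified via $\gc(\query,\sa)$ / \Cref{def:attribute-safety} --- is a valid sharpening that matches the paper's implicit intent.
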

\iftechreport{
\begin{proof}
Since our safety rules introduced in \cref{sec:safety-check} only care about which columns we do range partition on, that is if we say an accurate provenance sketch safe, any superset of it is still safe.
\end{proof}
}
Recall that accurate provenance sketches are sketches which do only contain ranges whose fragments contain provenance. Consider
a database $\db$ and two queries $\query$ and $\query'$ and denote
the provenance of $\query$ ($\query'$) as $\prov{\query}{\db}$ ($\prov{\query'}{\db}$).
Furthermore, consider
two sets of accurate provenance sketches $\psSet$ and $\psSet'$ build over the same attributes $\sa$ and partitions where $\psSet$ ($\psSet'$) is a sketch for $\query$ ($\query'$).
If $\prov{\query}{\db} \supseteq \prov{\query'}{\db}$ then $\psSet \supseteq \psSet'$ and, thus, also $\instOf{\psSet} \supseteq \instOf{\psSet'}$.

\begin{Lemma}\label{lem:prov-to-ps}
  Consider two queries $\query$ and $\query'$ and database $\db$ with $n$ relation and let $\psSet = \{\provSketch_1, \ldots, \provSketch_m\}$ and $\psSet' = \{\provSketch_1', \ldots, \provSketch_m'\}$
be  two set of accurate provenance sketches for $\query$ ($\query'$)
  that are both based on 
the same set of range partitions $\{\parti_{\ranges_1,a_1}(R_1), \ldots, $
$\parti_{\ranges_m,a_m}(R_m) \}$ where $\{R_1, \ldots, R_m \} \subseteq \schemaOf{\db}$. We have,
\ifnottechreport{
  $\prov{\query}{\db} \supseteq \prov{\query'}{\db} \Rightarrow \psSet \supseteq \psSet' \land \instOf{\psSet} \supseteq \instOf{\psSet'}$.
  }
\iftechreport{
\[
\prov{\query}{\db} \supseteq \prov{\query'}{\db} \Rightarrow \psSet \supseteq \psSet' \land \instOf{\psSet} \supseteq \instOf{\psSet'}
\]
}
\end{Lemma}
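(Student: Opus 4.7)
The plan is to unfold the definition of accurate provenance sketch from \Cref{def:provenance-sketch} and reduce both inclusions to straightforward set-containment arguments at the level of each partitioned relation. First I would argue the range-level containment $\provSketch_i \supseteq \provSketch_i'$ for every $i \in \{1,\ldots,m\}$, and then lift it to the instance-level containment $\instOf{\psSet} \supseteq \instOf{\psSet'}$ using the definition $\instOf{\provSketch_i} = \bigcup_{\range \in \provSketch_i} \rel_{i,\range}$ together with the fact that $\db$ is identical on both sides (only the sketched relations get substituted).

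For the range-level step, I would fix an arbitrary $i$ and observe that, because $\provSketch_i'$ is \emph{accurate} for $\query'$ with respect to the partition $\parti_{\ranges_i,a_i}(R_i)$, we have
\[
\provSketch_i' \;=\; \provranges{\db}{\parti_{\ranges_i,a_i}(R_i)}{\query'} \;=\; \{\, \range \in \ranges_i \mid \exists\, t \in \prov{\query'}{\db}\colon t \in R_{i,\range}\,\}.
\]
Pick any $\range \in \provSketch_i'$ and let $t \in \prov{\query'}{\db}$ witness $t \in R_{i,\range}$. By the hypothesis $\prov{\query}{\db} \supseteq \prov{\query'}{\db}$, this same $t$ lies in $\prov{\query}{\db}$, and it still satisfies $t \in R_{i,\range}$. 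Hence $\range$ meets the defining condition of $\provranges{\db}{\parti_{\ranges_i,a_i}(R_i)}{\query} = \provSketch_i$ (again using accuracy of $\provSketch_i$), so $\range \in \provSketch_i$. Because $i$ was arbitrary and the two sketch sets are indexed by the same relations, this gives $\psSet \supseteq \psSet'$ in the natural component-wise sense.

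For the instance-level step I would simply chain inclusions: for each $i$,
\[
\instOf{\provSketch_i'} \;=\; \bigcup_{\range \in \provSketch_i'} R_{i,\range} \;\subseteq\; \bigcup_{\range \in \provSketch_i} R_{i,\range} \;=\; \instOf{\provSketch_i},
\]
where the inclusion uses the monotonicity of a union over a larger index set (and the fact that both sketches refer to the \emph{same} partition $\parti_{\ranges_i,a_i}(R_i)$, so the fragments $R_{i,\range}$ are literally the same bags on both sides). Since $\instOf{\psSet}$ and $\instOf{\psSet'}$ are obtained from $\db$ by replacing exactly the same relations $R_1,\ldots,R_m$ with their respective sketch instances and leaving the other relations of $\db$ untouched, the pointwise inclusions combine into $\instOf{\psSet} \supseteq \instOf{\psSet'}$.

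There is no real obstacle here: the argument is essentially definition-chasing. The only subtlety I would flag explicitly is the dependence on \emph{accuracy} of both $\psSet$ and $\psSet'$. Without accuracy one only has $\provSketch_i' \supseteq \provranges{\db}{\parti}{\query'}$ and $\provSketch_i \supseteq \provranges{\db}{\parti}{\query}$, which is not enough to conclude $\provSketch_i \supseteq \provSketch_i'$. Sharing the same underlying range partitions $\parti_{\ranges_i,a_i}(R_i)$ is likewise essential, since otherwise the sets $\ranges_i$ on the two sides would not even be comparable; this is already part of the statement's hypotheses, so nothing further is required.
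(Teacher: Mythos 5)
Your proof is correct and follows essentially the same route as the paper's own (much terser) argument: establish range-level containment $\provSketch_i \supseteq \provSketch_i'$ by tracing a witness tuple from $\prov{\query'}{\db}$ into $\prov{\query}{\db}$, then lift to the instance level via the union over ranges. Your explicit flagging of where \emph{accuracy} is needed is a worthwhile clarification that the paper's proof leaves implicit, but it does not change the substance of the argument.
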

\iftechreport{
\begin{proof}
Since $\prov{\query}{\db} \supseteq \prov{\query'}{\db}$, for each tuple $t \in \prov{\query'}{\db}$, $t \in \prov{\query}{\db}$. Assume $t \in \rel_{\range}$ where $\rel \in \db$
, then for each $\range \in \provSketch'$, $\range \in \provSketch$, that is $\provSketch \supseteq \provSketch'$. Since it is hold for each $\provSketch$, we say $\psSet \supseteq \psSet'$ and thus $\instOf{\psSet} \supseteq \instOf{\psSet'}$.
\end{proof}
}

These lemmas imply that a provenance sketch for any instance $\query$ of a parameterized query $\pq$ is safe for another instance $\query'$ of $\pq$, if $\prov{\query}{\db} \supseteq \prov{\query'}{\db}$. We refer to this as \textit{provenance containment}. In the following we develop a sufficient condition that guarantees  provenance containment
%
for all input databases $\db$. That is, we reason about the results of $\query$
and $\query'$ over all possible input databases.  
We again use an SMT
solver similar to how we checked safety in~\Cref{sec:safety-check}.  Our
condition consists of two parts $\ucond(\query', \query)$ and $\gpe(\query',
\query)$. Condition $\gpe$ fulfills a similar purpose as $\gc$ in our safety
condition. As $\gc$ it is define recursively over the structure of a query and we construct a conjunction  $\aComp_{\query',\query}$ of comparisons between attributes from $\query$ and $\query'$ such that $\gpe$ (together with the condition $\ucond$ explained below) implies general containment ($\query'(\db) \matchContains{\aComp_{\query',\query}} \query(\db)$). We will use $a$ to refer to attributes from $\query$ and $a'$ to refer to the corresponding attribute from $\query'$. Similarly, if $\theta$ is a condition in $\query$, then $\theta'$ denotes the corresponding condition in $\query'$.

The main difference to $\gc$ is that we are now dealing with two different
queries instead of one query. The selection conditions of the two queries that
restrict values of an attribute may be spread over multiple operators in the
queries. It is possible that the conditions of all selections of $\query'$ imply
the conditions of all selections of $\query$ even though this does not hold for
all individual selections of these two queries. As a trivial example consider
$\query =\selection_{a = 20}(\selection_{a > 30}(R))$ and $\query' =
\selection_{a = 20}(\selection_{a > 10}(R))$. Subquery $\selection_{a > 10}(R)$
is not contained in $\selection_{a > 30}(R)$, but $\query$ and $\query'$ are
equivalent. To avoid failing to determine generalized containment, because it
does not hold for a subquery, we do not test generalized containment for
selections in $\gpe$. Instead we use condition $\ucond(\query',\query)$ to test whether all conditions in $\pred(\query')$ imply $\pred(\query)$:
\[
  \ucond(\query', \query) = \aComp_{\query',\query} \land \pred(\query') \land \expr(\query') \land \expr(\query) \rightarrow \pred(\query)
\]
For the example shown above this means we test $a = a' \land a' = 20 \land a' > 10 \rightarrow a = 20 \land a > 30$ instead of testing $a = a' \land a' > 10 \rightarrow a > 30$ first (which would fail). A similar problem arises when testing whether the input groups for an aggregation are the same for both queries. To avoid failing, because we may not have seen all restriction for the values of group-by attributes, we only check the restrictions on non-group-by attributes enforced by the two queries.
Here $\ngpred(\query)$ denotes the result of removing from $\pred(\query)$ all conjuncts that only reference group-by attributes, e.g., given $\pred(\query) = a>10 \land g<5$ where $g$ is  group-by attribute, we get $\ngpred(\query) = a > 10$.
We construct two conditions $\textcircled{1}$  and $\textcircled{2}$ to test whether it is the case that for any group that exists in both $\querya'(\db)$  and $\querya(\db)$, the group for $\querya'$ contains a subset of the tuples of the corresponding group for $\querya$ (or vice versa). If both  $\textcircled{1}$ and  $\textcircled{2}$ hold, then $\querya'$ and $\querya$ produce the same result for every group that exists in both query results. Thus, the aggregation function result produced for these groups by the two queries are equal (we can add $b = b'$ to $\aComp_{\query',\query}$). For the 2nd and 3rd case, we check whether the tuples in the group for  $\querya'$ is a subset of the tuples for same group in $\querya$.If this is the case and we using $min$ or $sum$ over negative numbers than then the aggregation function result for $\querya'$ is smaller than the one for $\querya$. The 3rd case is the symmetric case for $sum$ over positive numbers or $max$ aggregation.   

\ifnottechreport{
\begin{Example}\label{ex:reuse}
Consider the parameterized query $\pq = \selection_{cnt>\$2}($
$ \Aggregation{state}{count(\ast) \rightarrow cnt}(\selection_{popden>\$1}(cities)))$. This query returns states that have more than \texttt{\$2} cities with at least \texttt{\$1} inhabitants. 
Assume $\query$ and $\query'$ are two instances of $\pq$ with parameters binding $(100,10)$ and $(100,15)$ for $(\$1,\$2)$ respectively. We use $\query_{agg}$ and $\query_{agg}'$ to denote the subqueries rooted at the aggregation operator. To determine whether a set of sketches $\psSet$ for $\query$ can be used to answer $\query'$, we get the conditions shown below. We use $p$, $c$, and $s$ to denote $popden$, $city$, and  and $state$, respectively.
\begin{align*}
  \pred(\query) &= p > 100 \land cnt > 10
  &\pred(\query') &= p'> 100 \land cnt' > 15
\end{align*}\\[-10mm]
\begin{align*}
  \aComp_{\query',\query} &= p = p' \land c = c' \land s = s' \land cnt = cnt'
\end{align*}
Since this query does not contain any projections, $\expr(\query)$ and $\expr(\query')$ are empty. The condition $\gpe(\query', \query)$ constructed for this query tests the relationship between group-by attributes in the inputs of the aggregation subqueries $\query_{agg}$ and $\query_{agg'}$. Since $\aComp_{\query_{agg}',\query_{agg}}$ contains $s = s'$, $\gpe(\query', \query)$ holds. Furthermore, both \textcircled{1} and \textcircled{2} hold and, thus, we add $cnt = cnt'$ to $\aComp_{\query', \query}$. Finally, $\ucond(\query',\query)$ tests that\\[-4mm]
\[
\aComp_{\query',\query} \land \pred(\query') \land \expr(\query') \land \expr(\query) \rightarrow \pred(\query)
\]
Substituting the conditions shown above we get  $p = p' \land cnt = cnt' \land p > 100 \land cnt' > 15 \land p'> 100 \land cnt > 10$. Since this condition holds for all possible values of the variables in the formula (recall that free variables are assumed to be universally quantified), we can use $\psSet$ to answer $\query'$.
\end{Example}
}

\iftechreport{
\begin{Lemma}\label{lem:ge}
  Let $\db$ be a database, $\query$ and $\query'$ be two instances from the same parameterized query $\pq$. And $\query = op(\querya, \ldots, \query_n)$ and $\query' = op(\querya', \ldots, \query_n')$.
Then, for all $i \in \{1,\ldots, n\}$, 
 \begin{align*}
\gpe(\query', \query)  \Rightarrow \gpe(\query_i', \query_i)    \quad
\aComp_{\query',\query}                  \Rightarrow \aComp_{\query_i',\query_i}
  \end{align*}
\end{Lemma}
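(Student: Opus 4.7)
The plan is to mimic the proof structure of \Cref{lem:gc} by doing a case analysis on the top-level operator $op$ in the query, using the recursive rules from \Cref{tab:gen-eq-ge,tab:gen-eq-acomp}.

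First, for the implication $\gpe(\query', \query) \Rightarrow \gpe(\query_i', \query_i)$, I would simply inspect each row of \Cref{tab:gen-eq-ge}. For every operator $op$ (selection, projection, duplicate removal, aggregation, cross product, join, union), the right-hand side is of the form $\gpe(\querya', \querya) \land \Phi$ for some side condition $\Phi$ (for binary operators, it additionally contains $\gpe(\queryb', \queryb)$ as a conjunct). Hence by conjunction elimination the claim $\gpe(\query', \query) \Rightarrow \gpe(\query_i', \query_i)$ is immediate for every $i$. The base case $\query = R$ is trivial since no subquery exists.

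Second, for the implication $\aComp_{\query', \query} \Rightarrow \aComp_{\query_i', \query_i}$, I would do the same case analysis using \Cref{tab:gen-eq-acomp}. For selection, projection, duplicate removal, cross product, and join the rule reads $\aComp_{\query', \query} = \aComp_{\querya', \querya}$ (or $\aComp_{\querya', \querya} \land \aComp_{\queryb', \queryb}$ for the binary cases), so the implication is again by conjunction elimination. For aggregation, all four cases in \Cref{tab:gen-eq-acomp} contain $\aComp_{\querya', \querya}$ as a conjunct (the cases only differ in whether they additionally constrain the aggregated attribute $b$), so the implication holds in every branch of the case split.

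The main obstacle will be the union case, whose rule sets $\aComp_{\querya' \union' \queryb', \querya \union \queryb}$ to a compound formula of the form $\bigwedge_i ((\aComp_{\querya',\querya} \to a_i=a_i') \land (\aComp_{\queryb',\queryb} \to b_i=b_i')) \to a_i = a_i'$, which does not literally syntactically contain $\aComp_{\querya',\querya}$ as a conjunct. Here I would argue semantically: under the additional assumption (which is exactly the hypothesis used everywhere in \Cref{sec:safety-check} and in this section) that the matching $\matchMap$ whose existence is asserted for $\querya' \union \queryb'$ is the union of matchings for each branch, any pair in the restriction of $\matchMap$ to the left (resp.\ right) branch satisfies $\aComp_{\querya',\querya}$ (resp.\ $\aComp_{\queryb',\queryb}$). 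In other words, the lemma should be restated as implication with respect to the per-branch restriction of the mapping, and the proof then reduces to the unary cases above. All other cases are routine, so once union is handled via this branch-restriction argument, the lemma follows.
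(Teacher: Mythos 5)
Your proof takes the same route as the paper's, which is a one-line observation that every rule in the table for $\gpe(\query',\query)$ has $\gpe(\query_i',\query_i)$ as a conjunct, so both implications follow by conjunction elimination. You are in fact more careful than the paper: its proof addresses only the $\gpe$ half and silently ignores the $\aComp$ half, whereas you correctly notice that the union rule does \emph{not} contain $\aComp_{\querya',\querya}$ as a syntactic conjunct (the union formula is strictly weaker, asserting only the equalities implied by both branches), so $\aComp_{\query',\query} \Rightarrow \aComp_{\query_i',\query_i}$ fails as a pure logical implication there. Your fix --- reading the claim semantically with respect to the per-branch restriction of the matching $\matchMap$, which is how the lemma is actually consumed in the proof of the subsequent containment lemma --- is a reasonable repair of a gap the paper's own proof does not acknowledge.
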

\begin{proof}
For the rules in \cref{tab:gen-eq}, 
$\gpe(\query', \query)$ is based on $\gpe(\query_1', \query_1) \land \ldots \land \gpe(\query_n', \query_n)$. Thus, this lemma holds.
\end{proof}
}
\iftechreport{
\begin{Lemma}\label{lem:gpe}
  Given  two instances $\query$ and $\query'$ from the same parameterized query $\pq$, a database $\db$. Then, 
 \begin{gather*}
\gpe(\query', \query) \land \ucond(\query',\query)
    \Rightarrow  \query'(\db) \matchContains{\aComp_{\query',\query}} \query(\db)
  \end{gather*}
Let $P(t)$ ($P(t')$) denote the provenance of tuple $t$ ($t'$).  For any $\mathcal{M} \subseteq \query'(\db) \times \query(\db)$ such that $\query'(\db) \matchContains{\aComp_{\query',\query}} \query(\db)$ holds based on $\mathcal{M}$ we have:
 \begin{gather*}
\gpe(\query', \query) \land \ucond(\query',\query)
    \Rightarrow 
    \forall (t',t) \in \mathcal{M}: P(t') \subseteq P(t)
  \end{gather*}
\end{Lemma}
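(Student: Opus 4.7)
\begin{proofsketch}
The plan is to prove both claims simultaneously by structural induction on the query $\pq$ (and thus on the corresponding pair of instances $\query$ and $\query'$), closely mirroring the proof of \Cref{lem:gc-imply-gcontainment}. The base case is a relation access $\pq = R$, in which case $\query(\db) = \query'(\db) = R$ and the identity mapping $\matchMap = \{(t,t) \mid t \in R\}$ trivially witnesses $\query'(\db) \matchContains{\aComp_{R',R}} \query(\db)$ since $\aComp_{R',R}$ is equality on all attributes. Provenance containment also holds because for any $t \in R$ we have $P(t) = \{t\} = P(t')$.

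For the inductive step, I would fix $\pq = op(\pqa, \ldots, \pq_n)$, invoke \Cref{lem:ge} to transfer the hypotheses $\gpe(\query',\query)$ and implicitly $\aComp_{\query',\query}$ to the children, and apply the induction hypothesis to the subqueries $\pq_i$ to obtain witness mappings $\matchMap_i$ for each child. Then for each operator kind I construct the mapping $\matchMap$ witnessing generalized containment at the parent level and verify both (i) that the three defining conditions of \Cref{def:generalized-containm} hold for $\matchMap$ with respect to $\aComp_{\query',\query}$, and (ii) that $P(t') \subseteq P(t)$ is preserved for each matched pair. The projection, duplicate removal, top-k, union, cross-product, and equi-join cases follow the same templates used in \Cref{lem:gc-imply-gcontainment}: $\matchMap$ is built compositionally from the $\matchMap_i$ (via pairing for binary operators and by following the corresponding output tuples for unary ones), and the operator-specific conjunct added to $\aComp_{\query',\query}$ in \Cref{tab:gen-eq-acomp} is discharged directly from the operator rule in \Cref{tab:gen-eq-ge}. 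The provenance-containment claim for these operators reduces to the observation that the provenance of a derived tuple is the union/pass-through of the provenances of the contributing input tuples, so $P(t_i') \subseteq P(t_i)$ on the children lifts to the parent.

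The key novelty, and the main obstacle, is the selection case, because $\gpe$ deliberately does not test generalized containment across selections locally. Here I would use $\ucond(\query',\query)$: given a tuple $t' \in \selection_{\theta'}(\querya')(\db)$, the inductive hypothesis supplies a partner $t \in \querya(\db)$ with $(t',t) \models \aComp_{\querya',\querya}$. Since both $t'$ and $t$ satisfy $\allcond$ of their respective subqueries, instantiating $\ucond(\query',\query) = \aComp_{\query',\query} \land \pred(\query') \land \expr(\query') \land \expr(\query) \rightarrow \pred(\query)$ at $(t',t)$ forces $t$ to satisfy $\theta$ whenever $t'$ satisfies $\theta'$, so $t \in \selection_{\theta}(\querya)(\db)$ and the restricted mapping still witnesses generalized containment. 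The reason this works only in the presence of $\ucond$ (not from a per-selection check) is exactly the motivating $a = 20 \land a > 30$ versus $a = 20 \land a > 10$ phenomenon flagged in the text.

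The aggregation case is the other delicate point. I would split on the four sub-cases of \Cref{tab:gen-eq-acomp}. Conditions $\textcircled{1}$ and $\textcircled{2}$ together guarantee that, restricted to non-group-by predicates, the inputs to $\Aggregation{\aggf(a) \to \agga}{\grpatts}$ over $\db$ in $\query$ and in $\query'$ agree on every group that survives in both, so by the induction hypothesis the multisets of $a$-values for each shared group coincide and $\agga = \agga'$ can be added to $\aComp_{\query',\query}$; the $=$, $\leq$, $\geq$ variants follow from monotonicity arguments for $count$, $sum$ over non-negative (resp. non-positive) values, $max$, and $min$, exactly as in the safety-checking proof. The group-equality conjunct $g = g'$ enforced by the aggregation rule in \Cref{tab:gen-eq-ge} guarantees the mapping is well-defined on groups. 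For provenance containment at the aggregation, the provenance of an output group tuple is the union of provenances of its input tuples; since the input side for $\query'$ is contained in the input side for $\query$ on the matched group (again via the induction hypothesis together with $\textcircled{1}$ and $\textcircled{2}$), we get $P(t') \subseteq P(t)$. Combining all operator cases completes the induction and yields both conjuncts of the lemma.
\end{proofsketch}
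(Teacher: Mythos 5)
Your overall strategy --- a simultaneous structural induction on both claims, mirroring \Cref{lem:gc-imply-gcontainment} --- matches the paper's, but the inductive invariant you carry through the induction is too strong and is in fact false, which breaks the selection case. You propose to prove, for every subquery $\query_{sub}$ of $\pq$, generalized containment between the \emph{full} intermediate results $\query_{sub}'(\db)$ and $\query_{sub}(\db)$. A variant of the paper's own motivating example refutes this: for $\query = \selection_{a=20}(\selection_{a>15}(R))$ and $\query' = \selection_{a=20}(\selection_{a>10}(R))$ the hypotheses $\gpe(\query',\query) \land \ucond(\query',\query)$ hold, yet a tuple with $a=12$ lies in the inner subquery of $\query'$ and has no $\aComp$-partner in the inner subquery of $\query$. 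Your proposed repair --- instantiating $\ucond(\query',\query)$ at the pair $(t',t)$ inside each selection --- does not close this hole, because the antecedent of $\ucond$ contains $\pred(\query')$, the conjunction of \emph{all} selection conditions of $\query'$, and an intermediate tuple such as $a=12$ satisfies only the conditions of the selections encountered so far, not the whole conjunction; so $\ucond$ yields nothing at that point and the inductive step cannot be completed.

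The paper resolves this by weakening the invariant: it proves generalized containment and provenance containment only between the restricted sets $S = \{t \mid t \in \query_{sub}(\db) \land P(t) \subseteq \prov{\query}{\db}\}$ and $S' = \{t' \mid t' \in \query_{sub}'(\db) \land P(t') \subseteq \prov{\query'}{\db}\}$, i.e., the intermediate tuples that actually contribute to the respective final results. For those tuples, survival of every subsequent selection is automatic, so the selection case needs no local containment check at all, and at the root the restricted statement coincides with the lemma. Your aggregation case inherits the same defect, since the argument that matched groups receive comparable input multisets only goes through when attention is restricted to contributing tuples. Unless you adopt this restriction (or an equivalent strengthening of the induction hypothesis), the proof as sketched does not go through; the remaining operator cases and the provenance-containment bookkeeping are otherwise in line with the paper's argument.
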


}
\iftechreport{
\begin{proof}
We prove this by induction, that is we need to prove this holds for each subquery of $\pq$ between $\query$ and $\query'$.
However, 
$\ucond(\query',\query)$ and $\query'(\db) \matchContains{\aComp_{\query,\query}} \query(\db)$ might not hold for every subquery, there is a hidden constraint that $\forall (t',t) \in \mathcal{M}$ where $\mathcal{M} \subseteq \query'(\db) \times \query(\db)$, $P(t) \subseteq P(\query,\db)$ and $P(t') \subseteq P(\query',\db)$. Thus, what we need to prove in the induction is that for each subquery $\pq_{sub}$, let $S = \{t | t \in \query_{sub}(\db) \land P(t) \subseteq P(\query,\db)\}$, $S' = \{t' | t' \in \query_{sub}'(\db) \land P(t') \subseteq P(\query',\db)\}$ and $\mathcal{M} \subseteq S' \times S$ such that $S' \matchContains{\aComp_{\query_{sub}',\query_{sub}}} S$ holds based on $\mathcal{M}$, then
\begin{gather*}
\gpe(\query_{sub}', \query_{sub}) \land \ucond(\query',\query) \\ 
\Rightarrow \\
S' \matchContains{\aComp_{\query_{sub}',\query_{sub}}} S \land \forall (t',t) \in \mathcal{M}: P(t') \subseteq P(t).
\end{gather*}
Similar to $P(t)$ and $P(t')$, $P(S)$ ($P(S')$) represents the provenance of set $S$ ($S'$). And thus, we have $P(\query,\db) = P(S)$ and $P(\query',\db) = P(S')$.

\underline{Base case:}
When $\pq$ is table access operator $R$, then $\query(\db) = \query'(\db) = R$. Thus, $\query'(\db) \subseteq \query(\db)$ which is the special case of $\query'(\db) \matchContains{\aComp_{\query',\query}} \query(\db)$ where $\aComp_{\query',\query}$ contains equalities on all columns.
And $\forall (t',t) \in \mathcal{M}$ where $\mathcal{M} \subseteq \query'(\db) \times \query(\db)$, we have $P(t) = t$, $P(t') = t'$ and $t = t'$. Hence $P(t') = P(t)$, also $P(t') \subseteq P(t)$.



\underline{Inductive step:}
Assume query $\query_n$, a sub query of $\query$, with depth less than or equal to $n$.
Here we use 
\emph{depth} to represent the levels in the relational algebra format of a query, e.g., $\query_1$ represents the base table $R$ with depth 1.
Assume we have proven that 
$\gpe(\query_n', \query_n)  \land \ucond(\query',\query) \Rightarrow S_n' \matchContains{\aComp_{\query_n',\query_n}} S_n \land \forall (t_n',t_n): P(t_n') \subseteq P(t_n)$, where $(t_n',t_n) \in \mathcal{M}_n$ and $\mathcal{M}_n \subseteq \query_n'(\db) \times \query_n(\db)$.
Based on the induction hypothesis, we have to prove that the same holds for sub query $\query_{n+1}$ with depth less than or equal to $n+1$, that is, we need to prove  
$\gpe(\query_{n+1}', \query_{n+1})  \land \ucond(\query',\query)  \Rightarrow S_{n+1}' \matchContains{\aComp_{\query_{n+1}',\query_{n+1}}} S_{n+1} \land \forall (t'_{n+1},t_{n+1}):  P(t_{n+1}') \subseteq P(t_{n+1})$, where $(t_{n+1}',t_{n+1}) \in \mathcal{M}_{n+1}$ and $\mathcal{M}_{n+1} \subseteq \query_{n+1}'(\db) \times \query_{n+1}(\db)$.
For notational convenience, we use $op_{n}$($op_{n+1}$) represent any operator of depth $n$ ($n + 1$), i.e., $\query_{n+1} = op_{n+1}(\query_{n})$.
Based on \cref{lem:ge}, $\gpe(\query_{n+1}', \query_{n+1}) \Rightarrow \gpe(\query_{n}', \query_{n})$. Thus, based on assumption, $\gpe(\query_n', \query_n) \land \ucond(\query',\query) \Rightarrow S_n' \matchContains{\aComp_{\query_n',\query_n}} S_n \land \forall (t_n',t_n) \in \mathcal{M}_n:P(t_n') \subseteq P(t_n)$.
For join operator, we assume the left and right input of $op_{n+1}$ are $\query_{nL}(\db)$ and $\query_{nR}(\db)$ respectively, then above is hold for both $\query_{nL}$ and $\query_{nR}$.  Thus, $S_{nL}$ ($S_{nL}'$) is the $S$ ($S'$) in $\query_{nL}$ and $S_{nL}$ ($S_{nL}'$) is the $S$ ($S'$) in $\query_{nL}$. Similarly for $\mathcal{M}$ that $\mathcal{M}_{nL} \subseteq \query_{nL}'(\db) \times \query_{nL}(\db)$ and $\mathcal{M}_{nR} \subseteq \query_{nR}'(\db) \times \query_{nR}(\db)$.
In the following, we consider different $op_{n+1}$:

\underline{For $\projection_A$:}
Since $\forall (t_n',t_n) \in \mathcal{M}_n$, there have to be one tuple $t_{n+1} \in \query_{n+1}(\db)$ which satisfies $\projection_A\{t_n\} = \{t_{n+1}\}$ and one tuple $t_{n+1}' \in \query_{n+1}'(\db)$ which satisfies $\projection_A\{t_n'\} = \{t_{n+1}'\}$, and $P(t_n) = P(t_{n+1})$ and $P(t_n') = P(t_{n+1}')$.  Because $S_n' \matchContains{\aComp_{\query_n',\query_n}} S_n$, then
$S_{n+1}' \matchContains{\aComp_{\query_{n+1}',\query_{n+1}}} S_{n+1}$ where  $\aComp_{\query_{n+1}',\query_{n+1}} = \aComp_{\query_{n}',\query_{n}}$. In addition, since $P(t_n') \subseteq P(t_n)$, $P(t_{n+1}') \subseteq P(t_{n+1})$.

\underline{For $\selection_{\theta}$:}
Since $P(\query,\db) = P(S_n)$ and  $P(\query',\db) = P(S_n')$, then every tuple $t_n \in S_n$ and every tuple $t_n' \in S_n'$ have to be survived after selection, that is $t_n \in \query_{n+1}(\db)$ and $t_n' \in \query_{n+1}'(\db)$. Because $S_n' \matchContains{\aComp_{\query_n',\query_n}} S_n$, then
$S_{n+1}' \matchContains{\aComp_{\query_{n+1}',\query_{n+1}}} S_{n+1}$ where  $\aComp_{\query_{n+1}',\query_{n+1}} = \aComp_{\query_{n}',\query_{n}}$.
In addition, since $P(t_n') \subseteq P(t_n)$, $t_n = t_{n+1}$ and $t_n' = t_{n+1}'$, $P(t_{n+1}') \subseteq P(t_{n+1})$.

\underline{For $\delta$:} Assume $t_{n+1} \in \query_{n+1}(\db)$ and $\{t_{n_{1}}, \ldots, t_{n_{k}}\} \subseteq \query_{n}(\db)$ such that $\{t_{n+1}\} = \delta \{t_{n_{1}}, \ldots, t_{n_{k}}\}$. Similarly, assume $t_{n+1} \in \query_{n+1}'(\db)$ and $\{t_{n_1}', \ldots, t_{n_m}'\} \subseteq \query_{n}'(\db)$ such that $\{t_{n+1}'\} = \delta \{t_{n_1}', \ldots, t_{n_k}'\}$. Because $S_n' \matchContains{\aComp_{\query_n',\query_n}} S_n$, then
$S_{n+1}' \matchContains{\aComp_{\query_{n+1}',\query_{n+1}}} S_{n+1}$ where  $\aComp_{\query_{n+1}',\query_{n+1}} = \aComp_{\query_{n}',\query_{n}}$. Also because $S_n' \matchContains{\aComp_{\query_n',\query_n}} S_n$, $m < k$. And $P(t_{n_{1}}') \subseteq P(t_{n_{1}}), \ldots, P(t_{n_{m}}') \subseteq P(t_{n_{m}})$, then $P(t_{n+1}') \subseteq P(t_{n+1})$.
%

\underline{$\union$:} $\query_{Ln}(\db) \union \query_{Rn}(\db)$
Since $S_{nL}' \matchContains{\aComp_{\query_{nL}',\query_{nL}}} S_{nL} \land \forall (t',t) \in \mathcal{M_{nL}}: P(t') \subseteq P(t)$ and $S_{nR}' \matchContains{\aComp_{\query_{nR}',\query_{nR}}} S_{nR} \land \forall (t',t) \in \mathcal{M_{nR}}: P(t') \subseteq P(t)$, after union, the generalized containment will only hold for the common part between $\aComp_{\query_{nL}',\query_{nL}}$ and $\aComp_{\query_{nR}',\query_{nR}}$. That is, $S_{n+1}' \matchContains{\aComp_{\query_{n+1}',\query_{n+1}}} S_{n+1} \land \forall (t',t) \in \mathcal{M_{n+1}}: P(t') \subseteq P(t)$ where $\aComp_{\query_{n+1}',\query_{n+1}} = \bigwedge_{a'=a \in \aComp_{\query_{nL}',\query_{nL}} \land a'=a \in \aComp_{\query_{nR}',\query_{nR}}} a'=a$.

\underline{For $\crossprod$:} Let $\{t_{n+1}\} = \{t_{nL}\} \crossprod \{t_{nR}\}$ where $t_{nL} \in S_{nL}$ and $t_{nR} \in S_{nR}$. Similarly, let $\{t_{n+1}'\} = \{t_{nL}'\} \crossprod \{t_{nR}'\}$ where $t_{nL}' \in S_{nL}'$ and $t_{nR}' \in S_{nR}$. Since $S_{nL}' \matchContains{\aComp_{\query_{nL}',\query_{nL}}} S_{nL}$ and $S_{nR}' \matchContains{\aComp_{\query_{nR}',\query_{nR}}} S_{nR}$, then $S_{n+1}' \matchContains{\aComp_{\query_{n+1}',\query_{n+1}}} S_{n+1}$ where  $\aComp_{\query_{n+1}',\query_{n+1}} = \aComp_{\query_{nL}',\query_{nL}} \land \aComp_{\query_{nR}',\query_{nR}}$. Since $P(t_{nL}') \subseteq P(t_{nL})$ and $P(t_{nR}') \subseteq P(t_{nR})$, then $P(t_{n+1}') \subseteq P(t_{n+1})$.

\underline{For $\join_{a=b}$:} Let $\{t_{n+1}\} = \{t_{nL}\} \join_{a=b} \{t_{nR}\}$ where $t_{nL} \in S_{nL}$ and $t_{nR} \in S_{nR}$. Similarly, let $\{t_{n+1}'\} = \{t_{nL}'\} \join_{a=b} \{t_{nR}'\}$ where $t_{nL}' \in S_{nL}'$ and $t_{nR}' \in S_{nR}$. Our join rule in \cref{tab:gen-eq} keeps that $t_{nL}.a = t_{nL}'.a$ and $t_{nL}.b = t_{nR}'.b$, thus $(t_{n+1}',t_{n+1}) \in \mathcal{M}_{n+1}$ and $\mathcal{M}_{n+1} \subseteq \query_{n+1}'(\db) \times \query_{n+1}(\db)$. That is, $S_{n+1}' \matchContains{\aComp_{\query_{n+1}',\query_{n+1}}} S_{n+1}$ where  $\aComp_{\query_{n+1}',\query_{n+1}} = \aComp_{\query_{nL}',\query_{nL}} \land \aComp_{\query_{nR}',\query_{nR}}$.Since $P(t_{nL}') \subseteq P(t_{nL})$ and $P(t_{nR}') \subseteq P(t_{nR})$, then $P(t_{n+1}') \subseteq P(t_{n+1})$.

\underline{For $\Aggregation{\aggf(a) \to \agga}{\grpatts}$:}
Assume $t_{n+1} \in \query_{n+1}(\db)$ and $\{t_{n+1}\} = $ \\$\Aggregation{\aggf(a) \to \agga}{\grpatts}(\{t_{n_1},...,t_{n_m}\})$ where $\{t_{n_1},...,t_{n_m}\} \subseteq \query_{n}(\db)$, then $t_{n+1}$ either contributes to the result of $\query(\db)$ or not. Since $P(\query,\db) = P(S_n)$, if yes, $\{t_{n_1},...,t_{n_m}\} \subseteq S_n$; if not, $\{t_{n_1},...,t_{n_m}\} \subseteq \query_{n}(\db) - S_n$. That is, if $t_{in} \in S_n$ and $t_{out}\in \query_{n}(\db) - S_n$, then 
$\forall g \in G: t_{in}.g \neq t_{out}.g$.
The same holds for $t_{n+1}' \in \query_{n+1}'(\db)$. 
And because $S_n' \matchContains{\aComp_{\query_n',\query_n}} S_n$, based on the aggregation rule in \cref{tab:gen-eq-ge}, $\forall g \in G: g=g'$,  then $S_{n+1}' \matchContains{\aComp_{\query_{n+1}',\query_{n+1}}} S_{n+1}$ where  $\aComp_{\query_{n+1}',\query_{n+1}} = \aComp_{\query_{n}',\query_{n}}$ and $P(t_{n+1}') \subseteq P(t_{n+1})$.
Now we discuss the relationship between $b$ and $b'$. 
Case 1: $\textcircled{1} \land \textcircled{2}$ implies that either  $\query_n'(\db) = \query_n(\db)$ or $\query_n'(\db)$ and $\query_n(\db)$ different on the predicates of containing \lstinline!GROUP BY! attributes. If all of the operators in $\pq_n$ are monotone operators, then for the tuples in $\db$ with the same values on \lstinline!GROUP BY! attributes, either all of them will survive together until the aggregation ($op_{n+1}$) or all of them are filtered out together by the predicates before the aggregation. 
Then each common group of $op_{n+1}$ over $S_n'$ and $S_n$ are derived from the same tuples from $\db$, thus $b = b'$.
If $\pq_n$ contains aggregations, then the \lstinline!GROUP BY! attributes of each of these aggregations should include the the \lstinline!GROUP BY! attributes of $op_{n+1}$, then the case discussed above still holds. Thus $b = b'$.
Case 2: since  $\textcircled{2}$ implies that for each of these groups mentioned above, the group in $\query_n'(\db)$ contains less tuples compared with the group in $\query_n(\db)$, and thus $\agga \leq \agga'$ if $((f=sum \lor min) \land (\allcond(\querya) \rightarrow a<0))$.  
Case 3: if $(f=count \lor ((f=sum \lor max) \land (\allcond(\querya) \rightarrow a > 0)))$, then $\agga \geq \agga'$.
Case 4: Otherwise, we let relationship between $b$ and $b'$ be undecidable. 

\end{proof}
}
\iftechreport{
\begin{Lemma}\label{lem:gpe-prov-contain}
Given $\query$ and $\query'$ two instances of a parameterized query $\pq$, a database $\db$,  
consider $\mathcal{M} \subseteq \query'(\db) \times \query(\db)$ such that $\query'(\db) \matchContains{\aComp_{\query',\query}} \query(\db)$ holds based on $\mathcal{M}$, 
then
\begin{gather*}
\query'(\db) \matchContains{\aComp_{\query,\query'}} \query(\db) \land \forall (t',t) \in \mathcal{M}: P(t') \subseteq P(t) \\
    \Rightarrow \prov{\query'}{\db} \subseteq \prov{\query}{\db}
\end{gather*}
\end{Lemma}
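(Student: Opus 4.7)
\begin{proofsketch}
The plan is to unfold the definition of query provenance as the union of per-tuple provenance and then ``pipe'' the pointwise containment along the matching $\mathcal{M}$ guaranteed by generalized containment. Recall from \Cref{sec:background} that we may take $\prov{\query}{\db} = \bigcup_{t \in \query(\db)} P(t)$ and analogously $\prov{\query'}{\db} = \bigcup_{t' \in \query'(\db)} P(t')$. Thus it suffices to show that for every $t' \in \query'(\db)$ we have $P(t') \subseteq \prov{\query}{\db}$.

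First I would invoke \Cref{def:generalized-containment} applied to $\query'(\db) \matchContains{\aComp_{\query',\query}} \query(\db)$: its first clause guarantees that for each $t' \in \query'(\db)$ there exists $t \in \query(\db)$ with $(t',t) \in \mathcal{M}$. Next, by the second hypothesis of the lemma, every such matched pair satisfies $P(t') \subseteq P(t)$. Since $t \in \query(\db)$, monotonicity of union gives $P(t) \subseteq \bigcup_{u \in \query(\db)} P(u) = \prov{\query}{\db}$, hence $P(t') \subseteq \prov{\query}{\db}$. Taking the union over all $t' \in \query'(\db)$ yields $\prov{\query'}{\db} \subseteq \prov{\query}{\db}$, as required.

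There is essentially no difficulty in this argument; the only place to be careful is making sure that the ``every element on the left is matched'' clause of \Cref{def:generalized-containment} is used (and not the symmetric one that would go the wrong direction), and that we do not need the functional/injectivity conditions on $\mathcal{M}$ for this particular implication. The remaining conjunct $\aComp_{\query',\query}$ of generalized containment plays no role in the proof of this lemma; it will be exploited upstream (in combination with \Cref{lem:gpe} and \Cref{lem:prov-to-ps}) to conclude sketch reusability, but for the pure provenance-containment statement only the existence of a match and the per-pair provenance inclusion are needed.
\end{proofsketch}
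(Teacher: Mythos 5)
Your argument is correct and is essentially the paper's own proof: both use the first clause of generalized containment to match every $t' \in \query'(\db)$ to some $t \in \query(\db)$, apply the hypothesized per-pair inclusion $P(t') \subseteq P(t)$, and take unions. Your version merely spells out the union decomposition of $\prov{\query}{\db}$ more explicitly; no further comment is needed.
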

}
\iftechreport{
\begin{proof}
Since $\query'(\db) \matchContains{\aComp_{\query,\query'}} \query(\db)$, that is $\forall t' \in \query'(\db) : \exists t \in \query(\db) : \mathcal{M}(t',t)$. Because $P(t') \subseteq P(t)$, $\prov{\query'}{\db} \subseteq \prov{\query}{\db}$.
\end{proof}
}

\iftechreport{Thus, based on the discussion by now, we could infer in a reverse order that whether the provenance sketches of $\query$ could answer $\query'$.}


\ifnottechreport{
We now demonstrate that our approach is sound. 
}

\begin{Theorem}\label{theo:reuse-safe-rule}
   Let $\query$ and $\query'$ be two instances of a parameterized query $\pq$, $\db$ be a database, 
   and $\psSet$ a set of safe provenance sketches of $\query$ with respect to $\db$. 
   \[
     \gpe(\query', \query) \land \ucond(\query',\query) \Rightarrow \psSet\,\,\text{\textbf{is safe for}}\,\, \query'
   \]
\end{Theorem}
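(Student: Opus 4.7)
\begin{proofsketch}
The plan is to chain the preparatory containment lemmas of this section to first establish $\prov{\query'}{\db} \subseteq \prov{\query}{\db}$, lift this to sketch-level containment, and then use the attribute-safety transfer and fragment monotonicity lemmas to conclude that $\psSet$ remains safe when used for $\query'$.

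First I would apply \Cref{lem:gpe} to the hypothesis $\gpe(\query', \query) \land \ucond(\query', \query)$, obtaining both the generalized containment $\query'(\db) \matchContains{\aComp_{\query',\query}} \query(\db)$ and the per-pair provenance containment $P(t') \subseteq P(t)$ for every $(t',t) \in \mathcal{M}$. Feeding these two conclusions into \Cref{lem:gpe-prov-contain} then yields the crucial fact $\prov{\query'}{\db} \subseteq \prov{\query}{\db}$. Letting $\psSet'$ denote the accurate provenance sketch of $\query'$ built on the same set of range partitions used for $\psSet$, \Cref{lem:prov-to-ps} lifts this to $\psSet \supseteq \psSet'$ and hence $\instOf{\psSet} \supseteq \instOf{\psSet'}$; in particular, $\psSet$ is itself a (possibly non-accurate) provenance sketch for $\query'$ with respect to $\db$.

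The second half of the argument transfers safety from $\query$ to $\query'$. Because $\psSet$ is safe for $\query$ (as certified by the sound static check of \Cref{sec:safety-check} via \Cref{theo:safety-check-is-correct}), the set $\sa$ of attributes on which $\psSet$ is built is safe for $\query$ in the sense of \Cref{def:attribute-safety}. \Cref{lem:ps-safe-same-template} then carries this attribute-safety over to $\query'$, so every sketch on $\sa$ is safe for $\query'$; applied to $\psSet'$ we obtain that $\psSet'$ is safe for $\query'$. Finally, \Cref{lem:additional-fragments} extends safety from $\psSet'$ to the larger sketch $\psSet$, giving $\query'(\instOf{\psSet}) = \query'(\db)$, which is the desired conclusion.

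The subtle step will be the bridge from ``$\psSet$ is safe for $\query$ with respect to the specific instance $\db$'' to the instance-independent attribute-safety required by \Cref{lem:ps-safe-same-template}; this is why the argument must piggyback on the static safety check rather than on the purely semantic notion of safety. Once that bridge is pinned down, the rest of the argument is a mechanical chaining of the three containment lemmas with the monotonicity result for added fragments.
\end{proofsketch}
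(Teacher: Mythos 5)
Your proposal is correct and follows essentially the same route as the paper's proof: chain \Cref{lem:gpe} and \Cref{lem:gpe-prov-contain} to get $\prov{\query'}{\db} \subseteq \prov{\query}{\db}$, lift this via \Cref{lem:prov-to-ps} to containment of sketch instances, and then combine \Cref{lem:additional-fragments} with \Cref{lem:ps-safe-same-template} to transfer safety to $\query'$. The only cosmetic difference is that the paper applies \Cref{lem:prov-to-ps} to the two \emph{accurate} sketches $\psSet_{ac}$ and $\psSet_{ac}'$ and then uses $\instOf{\psSet} \supseteq \instOf{\psSet_{ac}}$, whereas you apply it directly with the possibly non-accurate $\psSet$ on the left; inserting that one intermediate step makes your argument match the lemma's hypotheses exactly.
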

\ifnottechreport{
  \begin{proofsketch}
    We first demonstrate that $\gpe(\query', \query) \land \ucond(\query',\query)$ implies that the generalized containment $\query'(\db) \matchContains{\aComp_{\query',\query}} \query(\db)$ holds, thus, establishing a connection between all tuples in $\query'(\db)$ and tuples of $\query(\db)$. We then show that given an arbitrary mapping $\mathcal{M}$ based on which
$\query'(\db) \matchContains{\aComp_{\query',\query}} \query(\db)$, for any $(t',t) \in \mathcal{M}$, the provenance of
$t' \in \query'(\db)$) is a subset of the provenance of $t \in \query(\db)$. By definition of generalized containment, for all $t' \in \query'(\db)$ there has to exist $t \in \query(\db)$ such that $(t',t) \in \mathcal{M}$ which immediately implies that $\prov{\query'}{\db} \subseteq \prov{\query}{\db}$. Since we have shown before that provenance containment implies that $\psSet$ is safe for $\query'$, this concludes the proof. For the detailed proof, please see \cite{techreport}.
\end{proofsketch}
}
\iftechreport{
\begin{proof}
In turn in \cref{lem:gpe} and \cref{lem:gpe-prov-contain} we get $\prov{\query'}{\db} \subseteq \prov{\query}{\db}$. Let $\psSet_{ac}$ and $\psSet_{ac}'$ be the accurate provenance sketches of $\query$ and $\query'$ with respect to $\db$. They are the same type with $\psSet$. From  \cref{lem:prov-to-ps}, $\prov{\query'}{\db} \subseteq \prov{\query}{\db} \Rightarrow \instOf{\psSet_{ac}} \supseteq \instOf{\psSet_{ac}'}$. Because $\instOf{\psSet} \supseteq \instOf{\psSet_{ac}}$, $\instOf{\psSet} \supseteq \instOf{\psSet_{ac}'}$. At last, in turn in \cref{lem:additional-fragments} and \cref{lem:ps-safe-same-template} that if $\instOf{\psSet}$ is safe to answer $\query$, then it is also safe to answer $\query'$.
\end{proof}
}
\iftechreport{
\begin{Example}\label{ex:reuse}
Consider the parameterized query $\pq$ and two instances $\query$ and $\query'$ shown in \Cref{fig:reuse-eg}.
This query returns states that have more than \texttt{\$2} cities with at least \texttt{\$1} inhabitants. $\pq$ has two conjuncts $popden \geq \$1$ and $cntcity \geq \$2$. Instances $\query$ and $\query'$ only differ in their $\$2$ bindings. 
We use $\query_{agg}$ and $\query_{agg}'$ to denote the subqueries rooted at the aggregation operator. To determine whether a set of sketches $\psSet$ for $\query$ can be used to answer $\query'$, we get the conditions shown below. We use $p$, $c$, and $s$ to denote $popden$, $city$, and  and $state$, respectively.
\begin{align*}
  \pred(\query) &= p > 100 \land cnt > 10
  &\pred(\query') &= p'> 100 \land cnt' > 15
\end{align*}\\[-10mm]
\begin{align*}
  \aComp_{\query',\query} &= p = p' \land c = c' \land s = s' \land cnt = cnt'
\end{align*}
Since this query does not contain any projections, $\expr(\query)$ and $\expr(\query')$ are empty. The condition $\gpe(\query', \query)$ constructed for this query tests the relationship between group-by attributes in the inputs of the aggregation subqueries $\query_{agg}$ and $\query_{agg'}$. Since $\aComp_{\query_{agg}',\query_{agg}}$ contains $s = s'$, $\gpe(\query', \query)$ holds. Furthermore, both \textcircled{1} and \textcircled{2} hold and, thus, we add $cnt = cnt'$ to $\aComp_{\query', \query}$. Finally, $\ucond(\query',\query)$ tests that\\[-4mm]
\[
\aComp_{\query',\query} \land \pred(\query') \land \expr(\query') \land \expr(\query) \rightarrow \pred(\query)
\]
Substituting the conditions shown above we get  $p = p' \land cnt = cnt' \land p > 100 \land cnt' > 15 \land p'> 100 \land cnt > 10$. Since this condition holds for all possible values of the variables in the formula (recall that free variables are assumed to be universally quantified), we can use $\psSet$ to answer $\query'$.

\end{Example}
}
\iftechreport{
\begin{figure}[t]
\centering
\begin{minipage}{0.5\linewidth}
\centering
\begin{lstlisting}
SELECT state, count(city) AS cntcity
FROM cities
GROUP BY state
WHERE popden > $1
HAVING cntcity > $2
		\end{lstlisting}
\end{minipage} 
   \begin{minipage}{0.4\linewidth}\vspace{0.5cm}
\resizebox{0.65\textwidth}{!}{
  \begin{tabular}{|c||cc|}
    \cline{2-3} 
  \multicolumn{1}{c|}{$\,$} & $\query$ & $\query'$ \\ \hline
 \cthead{$\mathtt{\$1}$}      & 100               & 100                \\ 
  \cthead{$\mathtt{\$2}$}     & 10                & 15                 \\
  \hline
\end{tabular}
}
\end{minipage}\vspace{-3mm}
\caption{Parameterized Query $\pq$ and instances $\query$ and $\query'$}
   \label{fig:reuse-eg}
\end{figure}
}
\section{Provenance Sketch Capture}
\label{sec:ps-capture}
	We now discuss how to capture provenance sketches through query instrumentation.
    We first review how queries are instrumented to propagate provenance using Lineage~\cite{CW00b,CC09,GT17} where the provenance of a query result is the set of input tuples that were used to derive the result.
\XN{   old content:
    Typically, tuples are represented through unique identifiers such as the ones shown to the left of each tuple in \Cref{tab:cities}. Some approaches encode the Lineage of a tuple, a sets of tuple identifiers, using  non-1NF representations~\cite{SJ18}. Other approaches use a denormalized flat-relational representation~\cite{AF18,glavic2013using}.
Most approaches
    operate in two phases: 1) \textbf{annotate} each input tuple with a singleton set containing its identifier and 2) \textbf{propagate} these annotation through each operator of a query such that each (intermediate) query result is annotated with its provenance. 
    }
Most approaches
    operate in two phases: 1) \textbf{annotate} each input tuple with a singleton set containing its identifier, e.g., the ones shown to the left of each tuple in \Cref{tab:cities} and 2) \textbf{propagate} these annotations through each operator of a query such that each (intermediate) query result is annotated with its provenance.

\begin{Example}\label{ex:prov-and-sketch-capture}
To capture the Lineage of each result tuple of the query $\qAvgden$ from \Cref{ex:create-sketch} we annotate each tuple $t_i$ from the cities table  (see \Cref{tab:cities})  with a singleton set $\{t_i\}$. Then annotations are propagated through the operators of $\qAvgden$. At last we get one result tuple with annotation $\{t_2, t_3\}$ (see \Cref{tab:result}).  The annotation $\{t_2, t_3\}$ means that the result tuple (CA, 5500) of $\qAvgden$ was produced by combining input tuples (6000, San Diego, CA) and (5000, Sacramento, CA). 
\end{Example}
\XN{old content:
The major difference is that we annotate tuples with a sets of fragments from a partition $\parti$ instead of sets of tuple identifiers. Thus, the size of the annotation is determined by $\card{\parti}$, i.e., the number of fragments of $\parti$, 
instead of the size of the database.
}
Our approach for computing provenance sketches also operates in two phases. The major difference is that we annotate each tuple with the fragment the tuple belongs to instead of the tuple identifier. Assume the fragment is from a partition $\parti$, then the size of the annotation is determined by $\card{\parti}$, i.e., the number of fragments of $\parti$, 
instead of the size of the database. Since the partition is fixed for a query, the annotations used for capture only need to record which fragments are present. This can be done compactly using bit sets. A partition with $n$ fragments is encoded as a vector of $n$ bits. We refer to this as the \emph{bitset encoding} of a sketch.
For instance, for the range-partition on attribute \texttt{state} from \Cref{fig:example-range-part}, the fragments $f_1$ and $f_3$ 
would be represented as \bv{1000} and \bv{0010}. 
\iftechreport{
Assume we range-partition relation cities on attribute \texttt{state} using the intervals shown in~\Cref{fig:example-range-part}. Each input is annotated with the singleton fragment it belongs to ($\{f_1\}$, $\{f_3\}$ and $\{f_4\}$) as shown in \Cref{tab:cities}.} Then, the result tuple of $\qAvgden$ (\Cref{tab:result}) is generated based on input tuples $\{t_2, t_3\}$ which are both annotated with $\{f_1\}$. Thus,  this tuple is annotated with the final sketch $\{f_1\}$. 
\subsection{Initializing Annotations}
\label{sec:annotate}

We now discuss how to seed the tuple annotations for a query $\query$ according to a set of range partitions $\dbpart = \{ \parti_1, \ldots, \parti_m \}$ over database $\db$. Let $\parti_i$ be the partition for relation $\rel_i$ where $i \in [1,m]$. Note that if the query is not selective in terms of provenance wrt. to $\rel_i$, then it may not be beneficial to include this relation in the set of sketches. 
To simplify the presentation, we assume that no relation is accessed more than once by $\query$, but our approach also handles multiple accesses.
Recall that a range partition (\Cref{def:range})  
assigns tuples to fragments based on their value in an attribute $\att$ and a set of ranges
 over the domain of $\att$.  
We add a projection on top of $\rel_i$ to compute and store the fragment in a column $\psa{\rel_i}$ through a \lstinline!CASE! expression.
We use $\initialize{\rel_i}{\parti_i}$ to denote this instrumentation step.
In relational algebra, we use
$\scase{\swhen{\theta_1}{e_1}, \ldots, \swhen{\theta_n}{e_n}}$ to denote an expression that returns
the result of the first $e_i$ for which condition $\theta_i$ evaluates to true and returns null if all $\theta_i$ fail. We use $\bvsng{i}$ to denote the singleton bit set for $\{f_i\}$. For a range partition 
 $\parti_{\ranges,a}(R)$ with ranges $\ranges = \{\range_1, \ldots, \range_n\}$ we generate query: 
\XN{ old contetnt:
We now discuss how to seed the tuple annotations  for a query $\query$ according to a database partition $\dbpart = \{ \parti_1, \ldots, \parti_m \}$.
Note that for the sake of creating a sketch it is sufficient to have partitions for a subset of the relations accessed by query $\query$. For example, if the query is not selective in terms of provenance wrt. to a relation then it may not be beneficial to include this relation in the sketch. If a sketch is not covering a relation $\rel$ accessed by $\query$, this just means that we would not be able to apply this sketch to filter $\rel$. }
\BG{We may use different partitions for each mention of a relation in the query since these may correspond to different access patterns.}
\BG{Note that while we defined a database partition to consist of one partition for each table in the database, for creating a provenance skecth for a query we obviously only need a partition for each relation accessed by the query. Furthermore, we are free to only create a sketch that covers a subset of these relations, e.g., if the query is not selective in terms of provenance wrt. to a relation then it would not be beneficial to include this relation in the sketch. Additionally, we may use different partitions for each mention of a relation in the query since these may correspond to different access patterns.} 
\XN{ old contetnt:
Let $\rel$ be an input relation  of a query $\query$  and $\parti$ be the partition from $\dbpart$  for this relation.
To instrument the query to annotate each tuple from $\rel$ with the fragment of $\parti$ it belongs to,  we add a projection on top of the table access $\rel$ to compute and store this fragment in a column $\psa{\rel}$. To simplify the presentation, we assume that no relation is accessed more than once by $\query$, but our approach also handles multiple accesses.}
\XN{old content:
\parttitle{Range-based Partitioning}
A range partition (\Cref{def:range})  
assigns tuples to fragments based on their value in an attribute $\att$ and a set of ranges
 over the domain of $\att$.  
We use $\initialize{\rel}{\parti}$ to denote the instrumented query produced for the range partition $\parti$ of relation $\rel$.
In SQL we can express this using \lstinline!CASE!. For relational algebra we have to allow for conditional expressions in projections. We use
$\scase{\swhen{\theta_1}{e_1}, \ldots, \swhen{\theta_n}{e_n}}$ to denote an expression that returns
the result of the first $e_i$ for which condition $\theta_i$ evaluates to true and returns null if all $\theta_i$ fail. We use $\bvsng{i}$ to denote the singleton bit set for $\{f_i\}$. For a range partition 
 $\parti_{\ranges,a}(R)$ with ranges $\ranges = \{\range_1, \ldots, \range_n\}$ we generate the following query:}
\begin{equation}
 \label{eq:annot-init-range}
  \initialize{\rel}{\parti} \defas \projection_{\scase{\swhen{\att \in \range_1}{\bvsng{1}}, \ldots, \swhen{\att \in \range_n}{\bvsng{n}}}}(R)
\end{equation}

For example, 
to instrument the relation access from query $\qAvgden$ from  \Cref{tab:cities} using the partition $\rparti_{state}$  
(\Cref{fig:example-range-part}), we generate query $\query_{INIT}$  shown below (written in SQL for legibility).
Based on the value of attribute \texttt{state} we assign tuples to fragments of $\rparti_{state}$.

\begin{adjustbox}{max width=0.97\linewidth}
\begin{minipage}{0.89\linewidth}
 \lstset{tabsize=4,style=psqlcolor,basicstyle=\scriptsize\upshape\ttfamily}
\begin{lstlisting}
SELECT popden, city, state,
	CASE  WHEN  state >= 'AL' AND state <= 'DE' THEN '1000'
				WHEN  state >= 'FL' AND state <= 'MI' THEN '0100'
				WHEN  state >= 'MN' AND state <= 'OK' THEN '0010'
				WHEN  state >= 'OR' AND state <= 'WY' THEN '0001'
    END AS $\psa{state}$
FROM cities
\end{lstlisting}
\end{minipage}
\begin{minipage}{0.09\linewidth}
\fbox{$\bf{\query_{INIT}}$}
\end{minipage}
\end{adjustbox}

%
%
%


\newcommand{\ruler}{\ensuremath{r_0}\xspace}
\newcommand{\ruleproj}{\ensuremath{r_1}\xspace}
\newcommand{\rulesel}{\ensuremath{r_2}\xspace}
\newcommand{\ruleagg}{\ensuremath{r_3}\xspace}
\newcommand{\rulejoin}{\ensuremath{r_4}\xspace}
\newcommand{\ruleord}{\ensuremath{r_5}\xspace}
\newcommand{\rulewin}{\ensuremath{r_5}\xspace}
\newcommand{\ruleunion}{\ensuremath{r_6}\xspace}
\newcommand{\ruleinstr}{\ensuremath{r_7}\xspace}

\subsection{Propagating Annotations}
\label{tab:propagate}

We now discuss how to instrument a query to propagate annotations to generate a single output tuple storing the provenance sketch for the query. We denote the set of attributes of an instrumented query storing provenance sketches as $\psallatt$. Given a database partition $\dbpart$ and query $\query$, we use $\instrument{\dbpart,\query}$ to denote the result of instrumenting the query to capture a provenance sketch for $\dbpart$. For two lists of attributes $A= (a_1 \ldots, a_n)$ and $B = (b_1, \ldots, b_n)$ we write $A = B$ as a shortcut for $\bigwedge_{i \in \{1, \ldots, n\}} a_i = b_i$.
We apply similar notation for bulk renaming $A \to B$ and function application, e.g., $f(A)$ denotes $f(a_1), \ldots, f(a_n)$. We assume the existence of an aggregation function $\bitor$ which computes the bit-wise or of a set of bit sets.\footnote{For example, in Postgres this function exists under the name \texttt{bit\_or}.}
The rules defining $\instrument{\cdot}$ are shown in \Cref{fig:cap-rules}.
As the last step of the rewritten query $\instrument{\cdot}$ we apply $\bitor$ aggregation to merge the sketches annotations of the results of the query (rule $\ruleinstr$). The input to this aggregation is generated using $\propagate{\dbpart,\query}$ which recursively replaces $\query$ with an instrumented version. 

\ifnottechreport{
\begin{figure}[t]
\centering
\begin{adjustbox}{max width=1\linewidth}
  \begin{minipage}{1.6\linewidth}
  \begin{minipage}{0.44\linewidth}
       \begin{align*}
\propagate{\dbpart,R}                                                                                                & = \initialize{R}{\dbpart} \tag{\ruler}                                                                      \\
\propagate{\dbpart,\projection_{A}(Q)}                                                                               & = \projection_{A, \psallatt}(\propagate{\dbpart,Q}) \tag{$r_1$}                                             \\
\propagate{\dbpart,\selection_{\theta}(Q)}                                                                           & = \selection_{\theta}(\propagate{\dbpart,Q}) \tag{$r_2$}
         \end{align*}
      \end{minipage}
        \begin{minipage}{0.56\linewidth}
      \begin{align*}
         \propagate{\dbpart, Q_1 \crossprod Q_2}                                                                         & = \propagate{\dbpart,Q_1} \crossprod \propagate{\dbpart,Q_2} \tag{$r_4$}                                    \\
     \propagate{\dbpart, \ordlimit{O}{C}(\query)}                                                             & = \ordlimit{O}{C}(\propagate{\dbpart,Q}) \tag{$r_5$}                         \\
\     \propagate{\dbpart, Q_1 \union Q_2}                                                                            & =
\propagate{\dbpart,Q_1} \union \propagate{\dbpart,Q_2} \tag{$r_6$}
\end{align*}
 \end{minipage}\\[-1mm]
 \begin{minipage}{1\linewidth}
         \begin{align*}
     \propagate{\dbpart, \Aggregation{f(a)}{G}(Q)}                                                                   & =
                                                 \begin{cases}
       \projection_{a, G, \psallatt} ( \Aggregation{f(a)}{G}(Q) \join_{f(a) = a \wedge G = G} \propagate{\dbpart,Q}) & \text{if}\, f = min \vee f = max                                                                            \\
                                                    \Aggregation{f(a),bitor(\psallatt)}{G}(\propagate{\dbpart,Q})    & \text{otherwise}                                                                                            \\
                                                                                                          \end{cases}                                                                                                  \tag{$r_3$}
         \end{align*}\\[-7mm]
\begin{align*}
\instrument{\dbpart, Q}                                                                                              & =
\aggregation_{bitor(\psallatt) \rightarrow \psallatt}(\propagate{\dbpart,\query}) \tag{$r_7$}
\end{align*}                                                                                                                                                                                                                    \\[-14mm]
\end{minipage}
  \end{minipage}
  \end{adjustbox}
    \caption{Instrumentation rules for sketch capture}
   \label{fig:cap-rules}
 \end{figure}
}
\iftechreport{
\begin{figure}[t]
\centering

\begin{adjustbox}{max width=1\linewidth}
  \begin{minipage}{1.6\linewidth}
       \begin{align*}
\propagate{\dbpart,R}                                                                                                & = \initialize{R}{\dbpart} \tag{\ruler}                                                                      \\
\propagate{\dbpart,\projection_{A}(Q)}                                                                               & = \projection_{A, \psallatt}(\propagate{\dbpart,Q}) \tag{$r_1$}                                             \\
\propagate{\dbpart,\selection_{\theta}(Q)}                                                                           & = \selection_{\theta}(\propagate{\dbpart,Q}) \tag{$r_2$}
         \end{align*}
         \begin{align*}
     \propagate{\dbpart, \Aggregation{f(a)}{G}(Q)}                                                                   & =
                                                 \begin{cases}
       \projection_{a, G, \psallatt} ( \Aggregation{f(a)}{G}(Q) \join_{f(a) = a \wedge G = G} \propagate{\dbpart,Q}) & \text{if}\, f = min \vee f = max                                                                            \\
                                                    \Aggregation{f(a),bitor(\psallatt)}{G}(\propagate{\dbpart,Q})    & \text{otherwise}                                                                                            \\
                                                                                                          \end{cases}                                                                                                  \tag{$r_3$}
         \end{align*}
\begin{align*}
         \propagate{\dbpart, Q_1 \crossprod Q_2}                                                                         & = \propagate{\dbpart,Q_1} \crossprod \propagate{\dbpart,Q_2} \tag{$r_4$}                                    \\
     \propagate{\dbpart, \ordlimit{O}{C}(\query)}                                                             & = \ordlimit{O}{C}(\propagate{\dbpart,Q}) \tag{$r_5$}                         \\
\     \propagate{\dbpart, Q_1 \union Q_2}                                                                            & =
\propagate{\dbpart,Q_1} \union \propagate{\dbpart,Q_2} \tag{$r_6$}                                                                                                                                                                 \\[2mm]
\instrument{\dbpart, Q}                                                                                              & =
\aggregation_{bitor(\psallatt) \rightarrow \psallatt}(\propagate{\dbpart,\query}) \tag{$r_7$}
\end{align*}                                                                                                                                                                                                                    \\[-10mm]

  \end{minipage}
  \end{adjustbox}
    \caption{Instrumentation rules for sketch capture}
   \label{fig:cap-rules}
 \end{figure}
}

%
%

 Rule $r_0$ initializes the sketch annotations for relation $R$ using $\initialize{\cdot}{}$ as introduced in \Cref{sec:annotate}. For projection we only need to add the $\psallatt$ columns from its input to the result schema ($r_1$). Selection is applied unmodified to the instrumented input ($r_2$).
 A result tuple of an aggregation operator with group-by is produced by evaluating the aggregation function(s) over all tuples from the group. Thus, if each tuple $\tup$ from a group is annotated with a set of fragments that is sufficient to produce $\tup$, then the union of these fragments is sufficient for reproducing the result for this group. Since aggregation is non-monotone, this is only correct if we use a partitioning that has been determined to be \emph{safe} for the query using the method from ~\Cref{sec:safety-check}.
\BG{REMOVED, IF WE WANT THIS WE SHOULD MOVE IT TO THE SAFETY SECTION:  This can happen when only some tuples from a group belong to a fragment containing provenance.} Hence, we union the provenance sketches for each group using the bitwise or  aggregation function $bitor$ ($r_3$), e.g., \bv{1000} and \bv{0010} will be merged producing \bv{1010}. 
For aggregation functions $min$ and $max$ it is sufficient to only include tuples with the min/max value in attribute $\att$.  We implement this by selecting a single tuple with the min/max value for each group. \BG{Sketch quality may be improved if we select this smartly by using an aggregation function that returns the smallest sketch from the input}  
For cross product we compute the cross product of the instrumented inputs ($r_4$).
For order by with limit we compute we order the instrumented inputs with limit ($r_5$).
\XN{
The window operator partitions its input on a set of attributes $G$ and within each fragment sorts the tuples on a list of sort expressions $O$. Each input tuple $\tup$ is then extended with the result of aggregation $f$ applied to the set of tuples from the window for $\tup$ which are all tuples with the same group-by value as $\tup$ that are smaller than or equal to $\tup$ wrt. to the sort order. To reproduce a window we just have to ensure that all tuples from the window are in the input and that no new tuples are produced that did originally not belong to the window. We satisfy the $1^{st}$ condition  by instrumenting the window operator to union the sketches for each tuple in the window. The $2^{nd}$ condition as long as 
the partitions are safe according to  our rules from \Cref{sec:safety-check} ($r_5$). }
For union we union the instrumented inputs ($r_6$).
\begin{Theorem}\label{theo:instrument-rule}
  Consider a query $\query$, database $\db$, and a safe partitioning $\dbpart$ for $\db$. Then $\instrument{\dbpart, \query}(\db)$ produces a safe sketch.
\end{Theorem}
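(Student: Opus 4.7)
The plan is to prove the theorem in two stages. First, I would establish that the bitset produced by $\instrument{\dbpart,\query}(\db)$ encodes a provenance sketch in the sense of \Cref{def:provenance-sketch}, i.e., it covers $\provranges{\db}{\parti_{\ranges_i,a_i}(R_i)}{\query}$ for every relation $R_i$ accessed by $\query$. Second, I would invoke the safety assumption on $\dbpart$ together with \Cref{theo:safety-check-is-correct} to conclude that the resulting sketch is safe. The separation is natural because the capture rules in \Cref{fig:cap-rules} are entirely concerned with \emph{coverage} of provenance, while safety has already been characterized in \Cref{sec:safety-check} as a property of the partitioning attributes independent of the specific sketch contents.

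For the coverage claim I would proceed by structural induction on the query $\query$. The induction invariant is: for every intermediate tuple $t$ produced by $\propagate{\dbpart,\query'}(\db)$ for a subquery $\query'$, the bitset annotations $\psallatt$ attached to $t$ jointly encode a superset of the fragments containing the Lineage of $t$ with respect to each $R_i$. The base case is rule $\ruler$: by definition of $\initialize{R_i}{\parti_i}$ each input tuple is annotated with the singleton fragment it belongs to, which is exactly its Lineage in $R_i$. For projection ($r_1$), selection ($r_2$), cross product ($r_4$), top-$k$ ($r_5$), and union ($r_6$) the invariant is preserved because the Lineage of an output tuple equals (or is contained in) the union of the Lineage of the input tuples combined to form it, and the rules simply carry the annotations through without discarding any. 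The final rule $\ruleinstr$ applies $\bitor$ across all output tuples, which yields a bitset whose set bits are exactly those fragments annotated on some output tuple; by the invariant, this covers $\provranges{\db}{\parti_{\ranges_i,a_i}(R_i)}{\query}$ for every $R_i$.

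The main obstacle is the aggregation rule $\ruleagg$. For general aggregation functions, the Lineage of an output group is the union of the Lineage of all input tuples contributing to that group, so combining the per-tuple annotations with $\bitor$ inside each group is correct. The subtle case is $\min$/$\max$, where the rule joins back against $\Aggregation{f(a)}{G}(Q)$ to retain only tuples achieving the aggregate value; I would have to argue that even though we are dropping annotations of non-extremal tuples, those tuples are not required to reproduce the aggregate value, hence their fragments need not be covered for the query \emph{result} but may be required elsewhere. This is precisely where the safety assumption enters the picture: we rely on \Cref{theo:safety-check-is-correct} to guarantee that such dropping is benign, and that safety propagates to any superset of the produced sketch by \Cref{lem:additional-fragments}.

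Finally, to close the argument I combine the two pieces. The bitset produced by $\instrument{\dbpart,\query}(\db)$ encodes a sketch $\provSketch$ with $\provSketch \supseteq \provranges{\db}{\parti_{\ranges_i,a_i}(R_i)}{\query}$ for each $R_i$, so by \Cref{def:provenance-sketch} it is indeed a provenance sketch. Because $\dbpart$ is safe, \Cref{theo:safety-check-is-correct} ensures that the attributes underlying $\dbpart$ are safe for $\query$, which by \Cref{def:attribute-safety} means \emph{every} sketch based on $\dbpart$ is safe for $\query$; in particular the sketch produced by $\instrument{\dbpart,\query}(\db)$ is safe, completing the proof.
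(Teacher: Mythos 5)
Your two-stage decomposition --- first show that $\instrument{\dbpart,\query}(\db)$ outputs a provenance sketch in the sense of \Cref{def:provenance-sketch}, then let the safety of $\dbpart$ do the rest --- is a genuinely different route from the paper's. The paper instead proves by induction the reproducibility invariant that every (intermediate) result tuple $t$ satisfies $t \in \query'(\instOf{\provSketch_t})$, where $\provSketch_t$ is the annotation carried by $t$, and reads sufficiency of the final sketch off that invariant directly rather than routing it through the definition of a provenance sketch. Where your decomposition works it is arguably cleaner, because the theorem's hypothesis already hands you attribute safety, so stage two is immediate from \Cref{def:attribute-safety}; note you do not actually need \Cref{theo:safety-check-is-correct} there, which is a tool for \emph{establishing} safety, not a consequence of assuming it.

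The gap is in the $\min$/$\max$ case of rule $r_3$, and your attempted repair does not close it. Your induction invariant is coverage of the \emph{Lineage} of each tuple, but the $\min$/$\max$ instrumentation joins back against the aggregate and deliberately discards the annotations of all non-extremal members of each group; since the Lineage of an aggregate result is the whole group, the captured bitset genuinely violates your invariant at that operator. Appealing to \Cref{theo:safety-check-is-correct} and \Cref{lem:additional-fragments} cannot rescue this: both results only speak about sets of fragments that \emph{cover} the provenance, i.e., supersets of $\provranges{\db}{\parti_{\ranges,\att}(\rel)}{\query}$, and \Cref{lem:additional-fragments} points in the wrong direction --- it licenses adding fragments to a safe sketch, not dropping provenance from one. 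As written, stage one therefore never establishes that the $\min$/$\max$ output is a provenance sketch at all, and stage two has nothing to apply to. The fix is to weaken the invariant from ``covers the fragments of the Lineage of $t$'' to ``covers the fragments of a sufficient input for $t$'' (equivalently, adopt the paper's $t \in \query'(\instOf{\provSketch_t})$), observing that the paper only assumes $\prov{\query}{\db}$ to be \emph{some} sufficient set: a single extremal tuple per group reproduces both the group key and the $\min$/$\max$ value, so the retained tuples are a valid choice of $\prov{\query}{\db}$, the captured bitset is a provenance sketch with respect to it, and your stage two then goes through unchanged.
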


\iftechreport{
\begin{proof}
\BG{Polish proof}
We prove this by induction over 
the relational algebra expression.
\underline{Base case ($r_0$ [table access]):} Since $\initialize{\rel}{\parti}$ annotates each tuple $t$ in R with the fragment $\frag$ it belongs to, i.e., $t \in \frag$, the sketch produced by $\instrument{\query, \dbpart}$ contains all tuples from $R$ and, thus, is trivially safe.
\BG{$\frag$ is the provenance sketch of tuple $t$ noted as $D_{\provSketch_{t}}$, thus for $t \in R$, exists $t \in D_{\provSketch_{t}}$.}
\underline{Inductive Step:}
assume given a query $Q'$ and database $D$, if exists tuple $t'$ such that $t' \in Q'(D)$, then $t' \in Q'(D_{\provSketch_{t'}})$. Next, We prove this is also hold for the query $Q$ where $Q = op(Q')$ after applying our instrumentation rules for this query operator (op). In the following, we prove each op individually.
\underline{$r_1$ [$\projection_{A}$]:} here $A$ represents the set of attributes used in the projection. Assume
$t \in \projection_{A}(Q'(D))$, then we have $t = t'.A$ and $D_{\provSketch_{t}} = D_{\provSketch_{t'}}$,
thus $t' \in Q'(D_{\provSketch_{t}})$. Since $t = t'.A$, we get $t \in \projection_{A}(Q'(D_{\provSketch_{t}}))$.
\underline{$r_2$ [$\selection_{\theta}$]:} Assume $t \in \selection_{\theta} (Q'(D))$, then we have $t = t'$ and $D_{\provSketch_{t}} = D_{\provSketch_{t'}}$, similarly,
$t \in \selection_{\theta} (Q'(D_{\provSketch_{t}}))$.
\underline{$r_3$ [$\aggregation$]:} assume $t \in \aggregation(Q'(D))$ and $t$ is derived from $t'_{1}, t'_{2},...,t'_{n}$ where $t'_{i} \in Q'(D)$. Then $t = \aggregation(\{t'_{1},t'_{2},...,t'_{n}\})$ and $D_{\provSketch_{t}} = D_{\provSketch_{t'_{1},t'_{2},...,t'_{n}}} = D_{\provSketch_{t'_{1}}} \cup D_{\provSketch_{t'_{n}}} \cup ... \cup D_{\provSketch_{t'_{n}}}$. Since $\{t'_{1},t'_{2},...,t'_{n}\} \in Q'(D_{\provSketch_{t'_{1}}}) \cup Q'(D_{\provSketch_{t'_{n}}}) \cup ... \cup Q'(D_{\provSketch_{t'_{n}}}) = Q'(D_{\provSketch_{t}}) $, we get $t \in \aggregation(Q'(D_{\provSketch_{t}}))$. When $\aggregation \in $ \{min, max\}, only one $t'$ in $\{t'_{1}, t'_{2},...,t'_{n}\}$ is enough to derive $t$, the proof is same.
\underline{$r_4$ [$Q'_{L}(D) \crossprod Q'_{R}(D)$]:} assume $t'_{L} \in Q'_{L}(D)$ and $t'_{R} \in Q'_{R}(D)$, then $t = t'_{L} \crossprod t'_{R}$ and $D_{\provSketch_{t}} = D_{\provSketch_{t'_{L}}} \cup D_{\provSketch_{t'_{R}}}$ , thus $t'_{L} \crossprod t'_{R} \in Q'_{L}(D_{\provSketch_{t'_{L}}}) \crossprod Q'_{R}(D_{\provSketch_{t'_{R}}}) = Q'_{L}(D_{\provSketch_{t}}) \crossprod Q'_{R}(D_{\provSketch_{t}})$, that is, $t \in Q'_{L}(D_{\provSketch_{t}}) \crossprod Q'_{R}(D_{\provSketch_{t}})$.
\underline{$r_5$ [$\ordlimit{O}{C}(D)$]:} Assume $t \in \ordlimit{O}{C} (Q'(D))$, then we have $t = t'$ and $D_{\provSketch_{t}} = D_{\provSketch_{t'}}$, similarly,
$t \in \ordlimit{O}{C} (Q'(D_{\provSketch_{t}}))$.
\underline{$r_6$ [$Q'_{L}(D) \union  Q'_{R}(D)$]:} assume $t \in Q'_{L}(D) \union Q'_{R}(D)$ and $t = t'$ where $t' \in Q'_{L}(D)$, then $D_{\provSketch_{t}} = D_{\provSketch_{t'}}$ such that $t' \in Q'_{L}(D_{\provSketch_{t}}) \in  Q'_{L}(D_{\provSketch_{t}}) \union Q'_{R}(D_{\provSketch_{t}})$.
Thus, $t \in Q'_{L}(D_{\provSketch_{t}}) \union Q'_{R}(D_{\provSketch_{t}})$.
\end{proof}
}
\iftechreport{
\begin{Example}
  Reconsider query $\qAvgden$ from our running example. We now explain the steps of generating $\instrument{\qAvgden,\parti_{state}}$. \Cref{fig:eg-q-capture}
  shows the resulting query and \Cref{fig:eg-capture-inter-results} shows its intermediate and final result(s).
  In relational algebra, query $\qAvgden$ can be written as
  $\projection_{state, avgden}(\ordlimit{\sdesc{avgden}}{1}(\Aggregation{avg(popden) \to avgden}{state}(cities)))$.
\XN{We use the \lstinline!rank! window function to determine the tuple with the highest \texttt{avgden} value. Rank determines for each tuple its position within its window according to the sort order of the window function ignoring duplicates (corresponding to \lstinline!rank()! in SQL).}
  We start from the relation access which is instrumented as shown at the end of \Cref{sec:annotate} (\fignumref{4} in \Cref{fig:eg-q-capture}).
  We apply $\ruleagg$ to instrument the  aggregation to propagate the values in $\psa{\parti_{state}}$. The rule uses $\bitor$ to aggregate input sketches for each group (see \fignumref{3} in \Cref{fig:eg-q-capture}  and  \Cref{tab:cp-eg-result1}).  
  \XN{We then rewrite the window operator and selection using rules $\rulewin$ and $\rulesel$ (corresponding to the \lstinline!ORDER BY! and \lstinline!LIMIT! in SQL). } 
  We then rewrite the order by with limit using rules $\ruleord$  and
the corresponding rewrite at the SQL level does not require any modification (\fignumref{2} in \Cref{fig:eg-q-capture} and \Cref{tab:cp-eg-result2}). Using $\ruleinstr$, we then apply $\bitor$ aggregation  
to generate the final sketch: $\{f_1\}$ (\fignumref{1} in \Cref{fig:eg-q-capture} and \Cref{tab:cp-eg-result}). 
\end{Example}
}
  \ifnottechreport{
Reconsider query $\qAvgden$ from our running example, after instrumentation, $\qAvgden$ would be rewritten as: 

\begin{minipage}{1\linewidth}
  \lstset{tabsize=4,style=psqlcolor,basicstyle=\scriptsize\upshape\ttfamily}
\begin{lstlisting}
SELECT bitor($\psatt_{\parti_{state}}$) AS $\psatt_{\parti_{state}}$
FROM (SELECT state,avg(popden) AS avgden, bitor($\psatt_{\parti_{state}}$) AS $\psatt_{\parti_{state}}$
	  	FROM $\query_{INIT}$ GROUP BY state ORDER BY avgden DESC LIMIT 1) 
\end{lstlisting}
\end{minipage}
Above query returns $\{1000\}$ which represents 
the final sketch is $\{f1\}$. Subquery $\query_{INIT}$ is shown in the end of  \Cref{sec:annotate}. 
  }
%

\iftechreport{
\begin{figure}[t]
  \begin{minipage}{0.55\linewidth}
    { \footnotesize
\begin{minipage}{1\linewidth}
   \begin{subfigure}{1\linewidth}
\begin{tabular}{|c|c|c|c|}  \hline
\cthead popden & \cthead city & \cthead state &  \cthead $\psatt_{\parti_{state}}$ \\
 4200 & Anchorage & AK  & \bv{1000} \\
 6000	& San Diego &  CA &  \bv{1000}\\
5000	&  Sacramento & CA & \bv{1000}\\
7000 &  New York & NY & \bv{0010} \\
2000 &   Buffalo & NY &  \bv{0010}\\
3700 & Austin & TX & \bv{0001}\\
2500 & Houston & TX & \bv{0001}\\

 \hline
\end{tabular}\\[-2mm]
   	\caption{Result of \fignumref{4}}
     \label{tab:cp-eg-input}
   \end{subfigure}
\end{minipage}
\vspace{5pt}
\begin{minipage}{1\linewidth}
   \begin{subfigure}{1\linewidth}
   \centering
\begin{tabular}{|c|}  \hline
\cthead $\psatt_{\parti_{state}}$    \\
\bv{1000} \\
 \hline
\end{tabular}\\[-2mm]
     \caption{Result of \fignumref{1}}
     \label{tab:cp-eg-result}
   \end{subfigure}
 \end{minipage}
 }
\end{minipage}
\begin{minipage}{0.44\linewidth}
{\footnotesize
\begin{minipage}{1\linewidth}
   \begin{subfigure}{1\linewidth}
   \centering
\begin{tabular}{|c|c|c|}  \hline
\cthead state & \cthead avgden & \cthead $\psatt_{\parti_{state}}$ \\
  CA & 5500 & \bv{1000} \\
  NY & 4500 & \bv{0010} \\
  AK & 4200 & \bv{1000} \\
  TX & 3100 & \bv{0001} \\ \hline
\end{tabular}\\[-1mm]
     \caption{Result of \fignumref{3}}
     \label{tab:cp-eg-result1}
   \end{subfigure}
\end{minipage}
\vspace{7pt}
\begin{minipage}{1\linewidth}
   \begin{subfigure}{1\linewidth}
   \centering
\begin{tabular}{|c|c|c|}  \hline
\cthead state & \cthead avgden & \cthead $\psatt_{\parti_{state}}$ \\
  CA & 5500 & \bv{1000} \\\hline
\end{tabular}
     \caption{Result of \fignumref{2}}
     \label{tab:cp-eg-result2}
   \end{subfigure}
\end{minipage}
}
\end{minipage}\\[-6mm]
\caption{Intermediate and final result(s) of $\instrument{\parti_{state},\qAvgden}$}
\label{fig:eg-capture-inter-results}
\end{figure}
}

%

\iftechreport{
\begin{figure}[t]
\begin{adjustbox}{max width=1\linewidth}
\begin{tabular}{clc}
\cellcolor{white} \fignumref{1} &
\begin{lstlisting}[mathescape]
SELECT bitor($\psatt_{\parti_{state}}$) AS $\psatt_{\parti_{state}}$
FROM (
\end{lstlisting}
& \cellcolor{white} \large \textbf{(\ruleinstr)} \\
\cellcolor{lllgrey} \fignumref{3} &
\cellcolor{lllgrey}
\begin{lstlisting}[mathescape]
	SELECT state,
				 avg(popden) AS avgden,
				 bitor($\psatt_{\parti_{state}}$) AS $\psatt_{\parti_{state}}$
	FROM (
\end{lstlisting}
&  \cellcolor{lllgrey}  \large \textbf{(\ruleagg)} \\
\cellcolor{llgrey} \fignumref{4} & \cellcolor{llgrey} 
\begin{lstlisting}[mathescape]
			  $\query_{INIT}$ 
\end{lstlisting} \hspace{2mm} {\scriptsize (see \Cref{sec:annotate})}
&  \cellcolor{llgrey}  \large \textbf{(\ruler)} \\
\cellcolor{lllgrey} \fignumref{3} &
\cellcolor{lllgrey}
\begin{lstlisting}
		   ) p1 
	GROUP BY state
\end{lstlisting}
&   \cellcolor{lllgrey}  \large \textbf{(\ruleagg)} \\
\cellcolor{white} \fignumref{2} & \hspace{0.1mm}
\begin{lstlisting}
	ORDER BY avgden DESC 
	LIMIT 1) p2
\end{lstlisting}
&  \cellcolor{white}  \large \textbf{(\ruleord)} \\
\end{tabular}
\end{adjustbox}\\[-3mm]
\caption{Instrumented Query $\instrument{\parti_{state},\qAvgden}$}
\label{fig:eg-q-capture}
\end{figure}
}

\subsection{Optimizations}
\label{tab:opt_capture}
Our instrumentation preserves the structure of the input query
in most cases. Thus, the majority of overhead introduced by instrumentation 
is based on evaluating 1) \lstinline!CASE! expressions and 2) bitor aggregations. 
\XN{old: For 1) to initialize a range-based provenance sketch with $n$ fragments we generate a \lstinline!CASE! expression with $n$ conditions each testing the membership of a value $v$ in range $\range_i$. We can reduce the runtime from $\oNotation n$ to $\oNotation {\log n}$ by applying binary search to find the range $r_i$ that contains $v$.}
For 1) to initialize a range-based provenance sketch with $n$ fragments, we can apply binary search to test the membership of a value $v$ in range $\range_i$ which reduces the runtime from $\oNotation n$ to $\oNotation {\log n}$.
 We implemented this optimization as UDFs written in C in MonetDB and Postgres, the two systems we use in our experimental evaluation.
For 2) If $n$ is large, then singleton sets of fragments can be encoded more compactly by storing and propagating the position of the single bit set to $1$ as a fixed-size integer value instead of storing and propagating a full bitset until encountering the aggregation to construct the full bitset instead of doing bitor operation
\ifnottechreport{(\textit{delay} method).}
\iftechreport{which we call \textit{delay} method.}
Furthermore, in Postgres, this bitor aggregation function results in unnecessary creation of $n-1$ new bitsets when calculating the bitwise or of $n$ bit sets. 
Also, bitwise or is applied one byte at a time. We improve this implementation by computing the operation one machine-word at a time and by avoiding unnecessary creation of intermediate bitsets
\ifnottechreport{(\textit{No-copy} method).}
\iftechreport{which we call \textit{No-copy} method.}For MonetDB we implement $\bitor$ 
as a user-defined aggregation function in C.

\section{Using Provenance Sketches}
\label{sec:ps-reuse}
Once a provenance sketch $\provSketch$ has been  captured, we can utilize it to speed up the subsequent execution of queries.
For that we have to instrument the query to filter out data that does not belong to the provenance sketch.
This is achieved by decoding the provenance sketch into selection conditions and applying these conditions by adding selection operators on top of every relation access that is covered by the sketch. Recall that we use  $\quse{\query}{\provSketch}$ to denote the result of instrumenting query $\query$ using sketch $\provSketch$. $\quse{\query}{\provSketch}$ is defined as the identity function on all operators except for table access operators.
Let $\parti$ be a range-based partition of a relation $\rel$ on attribute $\att$ using ranges $\ranges = (\range_1, \ldots, \range_n)$ and  $\provSketch = \{ \frag_{i_1}, \ldots, \frag_{i_m} \}$ be a sketch based on $\parti$. We generate a
condition $\bigvee_{j=1}^{m} \att \in \range_{i_j}$ to filter $\rel$ based on $\parti$.
Thus, the instrumentation rule for applying range-based sketch $\parti$ to relation $\rel$ \ifnottechreport{is $ \quse{R}{\provSketch} \defas \selection_{\bigvee_{j=1}^{m} \att \in \range_{i_j}}(R)$.}
\iftechreport{is: 
\begin{equation}\label{eq:use-ps-rewrite}
  \quse{R}{\provSketch} \defas \selection_{\bigvee_{j=1}^{m} \att \in \range_{i_j}}(R)
\end{equation}

}
\iftechreport{Instrumentation for other partition schemes operates in a similar fashion, e.g., for  a hash-based partition 
we produce  a disjunction $\bigvee_{j=1}^{m} h(a) = i_j$.}
\subsection{Optimizations}
\label{tab:use_opt}
Databases can exploit physical design to evaluate the type of selection conditions we create for range-based sketches.
However, if $\card{\provSketch}$ is large, i.e., the sketch contains a large number of fragments, then the size of the selection condition  that has to be evaluated may outweigh this benefit.
Furthermore, if the database has to resort to a full table scan, then we pay the overhead of evaluating a condition that is linear in $\card{\parti}$ for each tuple.
We now discuss how to improve this by reducing the number of conditions and/or improving the performance of evaluating these conditions. First off,
if a sketch contains a sequence of adjacent fragments $f_{i}, \ldots, f_{j}$ for $i < j$, we can replace the conditions $\bigvee_{k=i}^{j} \att \in \range_{k}$ with a single condition $\att \in \bigcup_{k=i}^j r_{k}$. Reconsider the sketch  $\provSketch = \{ \frag_1, \frag_2 \}$ from the example above. 
Since these two fragments are adjacent, we can generate a single condition $state \in [AL,MI]$ instead of $state \in [Al,DE] \vee state \in [FL,MI]$.
Note that the condition generated for a range partition checks whether an attribute value is an element of one of the ranges corresponding to the fragments of the sketch. Since these ranges are ordered, we can apply binary search to improve the performance of evaluating a condition with $n$ disjunctions from $\oNotation{n}$ to $\oNotation{\log n}$.
We implemented a Postgres extension
to be able to  exploit zone maps (brin indexes in Postgres) to skip data based on such a condition.

\section{Experiments}
\label{sec:exp}

We now evaluate the impact of parameters such as
number of fragments on the precision of provenance sketches, measure the overhead of capturing sketches, and evaluate the speed-up gained by using provenance sketches.
\BG{Our evaluation consists of three parts:  
(i) we evaluate the effectiveness of our optimization methods for capturing the provenance sketches;  (ii) we evaluate provenance sketch capture and use for standard benchmark queries and over real world datasets; Additionally, through the result, we learn the relationship between the number of fragments which should be created in a provenance sketch and the number of times the query would be run afterwards;  (iii) we device a simplistic prototype for applying our techniques online to a workload that is not known apriori
to learn how effective our techniques are in a self-tuning environment, 
i.g., for each incoming query whether or not to create a provenance sketch and whether or not to use any captured provenance sketch.
}
All experiments were run on a machine with 2 x 3.3Ghz AMD Opteron 4238 CPUs (12 cores) 
and 128GB RAM running Ubuntu 18.04 (linux kernel 4.15.0).
We use Postgres 11.4 as an example of a classical disk-based system and MonetDB 11.33.11 as an example of a columnar main-memory database system.


\subsection{Workloads}
\label{sec:dataset}

\parttitle{TPC-H} We use all 22 query templates of the TPC-H benchmark and SF1 and SF10 to evaluate performance. 

\iftechreport{
\parttitle{Crimes} This dataset records crimes reported in Chicago (\url{https://data.cityofchicago.org/Public-Safety/Crimes-2001-to-present/ijzp-q8t2}). It contains
 $\sim$6.7M tuples.
We use two queries: 
\texttt{C-Q1}: Compute the 5 areas with the most crimes.
\texttt{C-Q2}: Return the number of blocks where more than 10000 crimes took place.
}

\ifnottechreport{\parttitle{Movie Ratings}  The MovieLens dataset (\url{https://grouplens.org/datasets/movielens}) 
contains a movie relation ($\sim$27k movies) and a ratings relation ($\sim$20m ratings).}
\iftechreport{
\parttitle{Movie Ratings}  The MovieLens dataset (\url{https://grouplens.org/datasets/movielens}) 
contains a movie relation ($\sim$27k movies) and a ratings relation ($\sim$20m ratings).
We use three queries:
\texttt{M-Q1}: Compute the 10 movies with the most ratings. \texttt{M-Q2}: Return the number of movies with more than 63,300 ratings. \texttt{M-Q3}: Return the 10 most popular movies where popularity is defined as the weighted sum of the number of ratings of a movie and the number of times it has been tagged.
}

\iftechreport{
\parttitle{Stack Overflow (SFO)}
}
\ifnottechreport{
\parttitle{Stack Overflow}}  
This in archive of content from Stack Overflow (\url{https://www.kaggle.com/stackoverflow/stackoverflow}). It consists of a users relation ($\sim$12.5m users), 
a badges relation ($\sim$35.9m badges), a comments relation ($\sim$75.9m comments) and a posts relation ($\sim$48.5m posts).
We use five real queries from \url{https://data.stackexchange.com/stackoverflow/queries}: \texttt{S-Q1}: 
Return the 10 users with the most number of posts.
\texttt{S-Q2}: 
 Return the 10 owners whose post is favored by the most people.
\texttt{S-Q3}: 
Return the 10 users with the most number of comments.
\texttt{S-Q4}: 
Return the 10 users with the most number of badges.
\texttt{S-Q5}: 
Return all users who did post between 47945 and 52973 comments.

%
\BG{Needed here? 1\%, 5\%, 10\%, 20\% and 50\%.}

\ifnottechreport{
\begin{figure*}
  \centering
  \begin{minipage}{1.0\linewidth}
    \centering
    \includegraphics[width=0.8\linewidth]{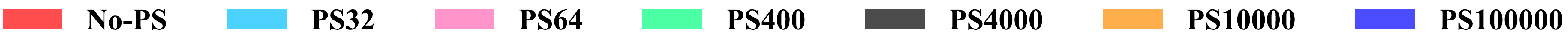}
  \end{minipage}

  \begin{minipage}[t]{1.0\linewidth}
    \begin{subfigure}{0.49\linewidth}
      \centering
      \includegraphics[height=3cm,trim=25pt 25pt 0 24pt, clip]{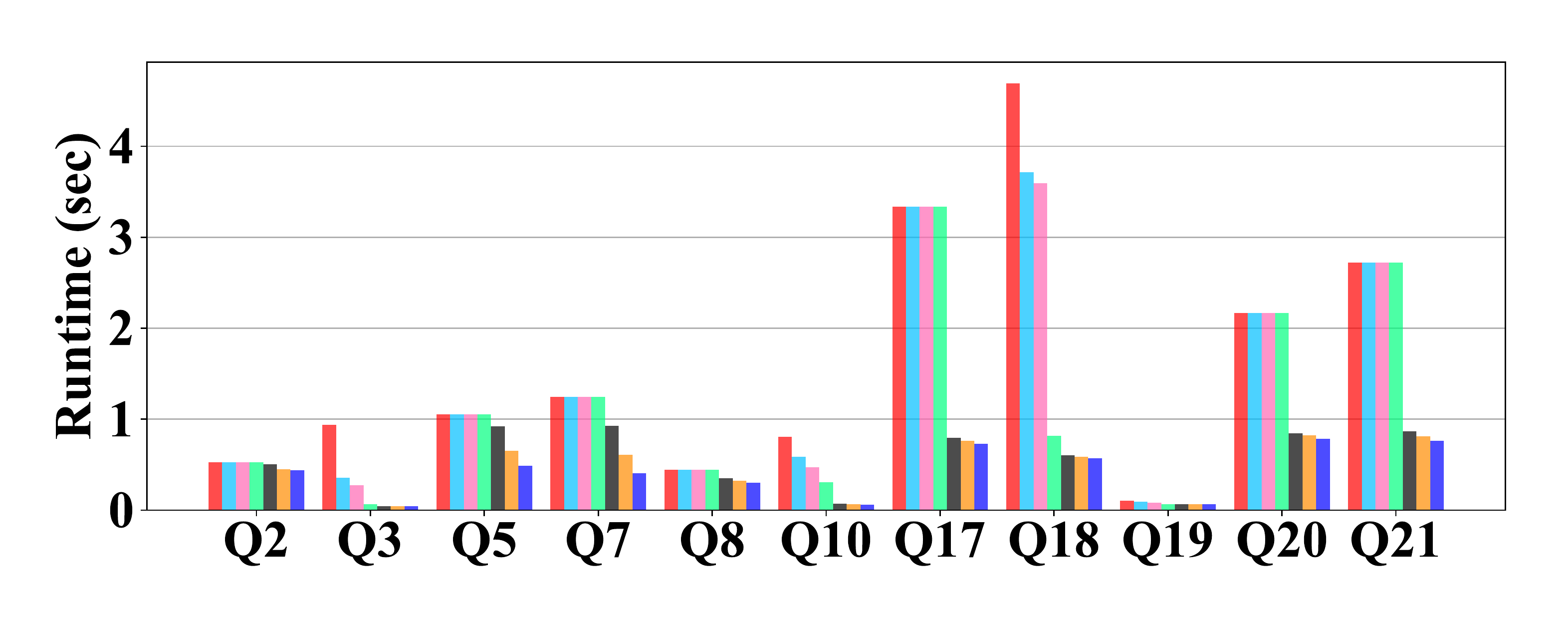}
      \trimfigspace
      \caption{Postgres Use - 1GB (BS)}
      \label{fig:tpch-post-1gb}
    \end{subfigure}
    \begin{subfigure}{0.49\linewidth}
      \centering
      \includegraphics[height=3cm,trim=25pt 25pt 0 24pt, clip]{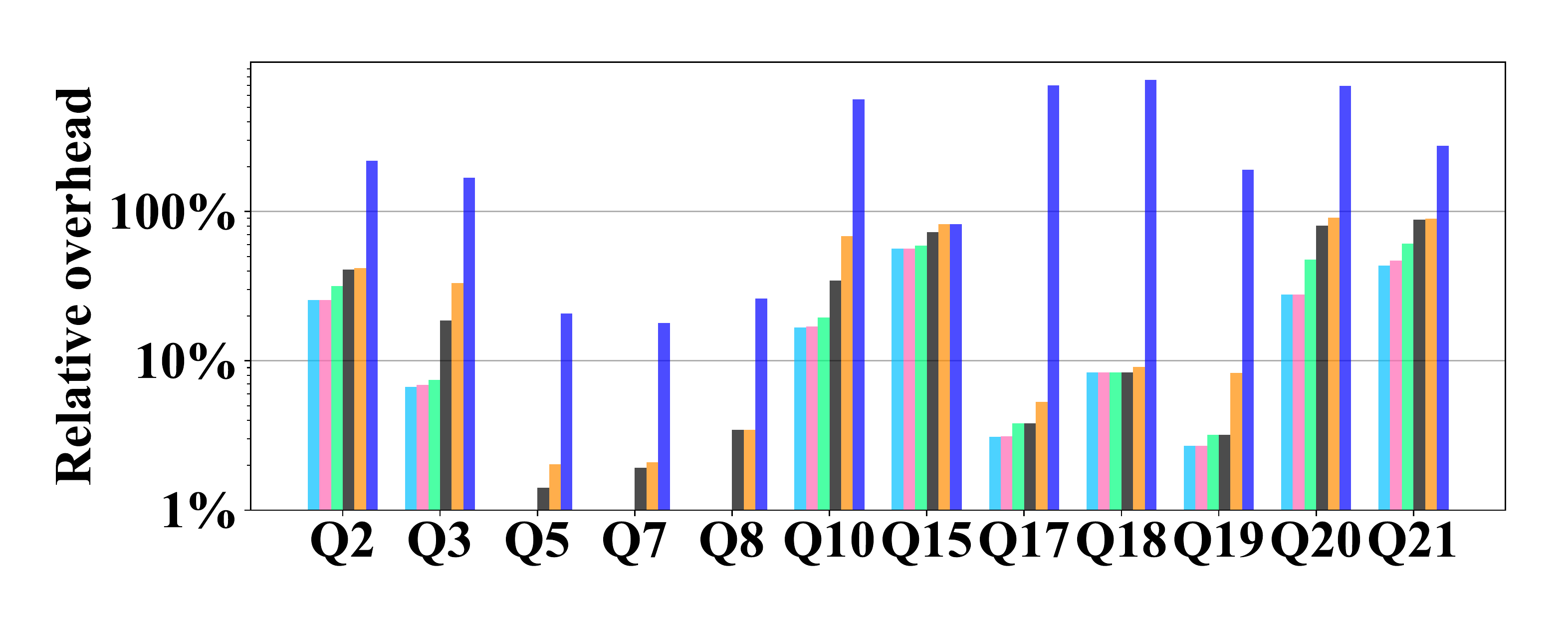}
      \trimfigspace
      \caption{Postgres Capture - 1GB}
      \label{fig:cap-tpch-post-1gb}
    \end{subfigure}
  \end{minipage}

  \begin{minipage}{1.0\linewidth}
    \begin{subfigure}{0.235\linewidth}
     \begin{adjustbox}{minipage=\linewidth,scale=1.05}
      \centering
      \includegraphics[height=2.55cm,width=1\linewidth,trim=25pt 25pt 0pt 25pt, clip]{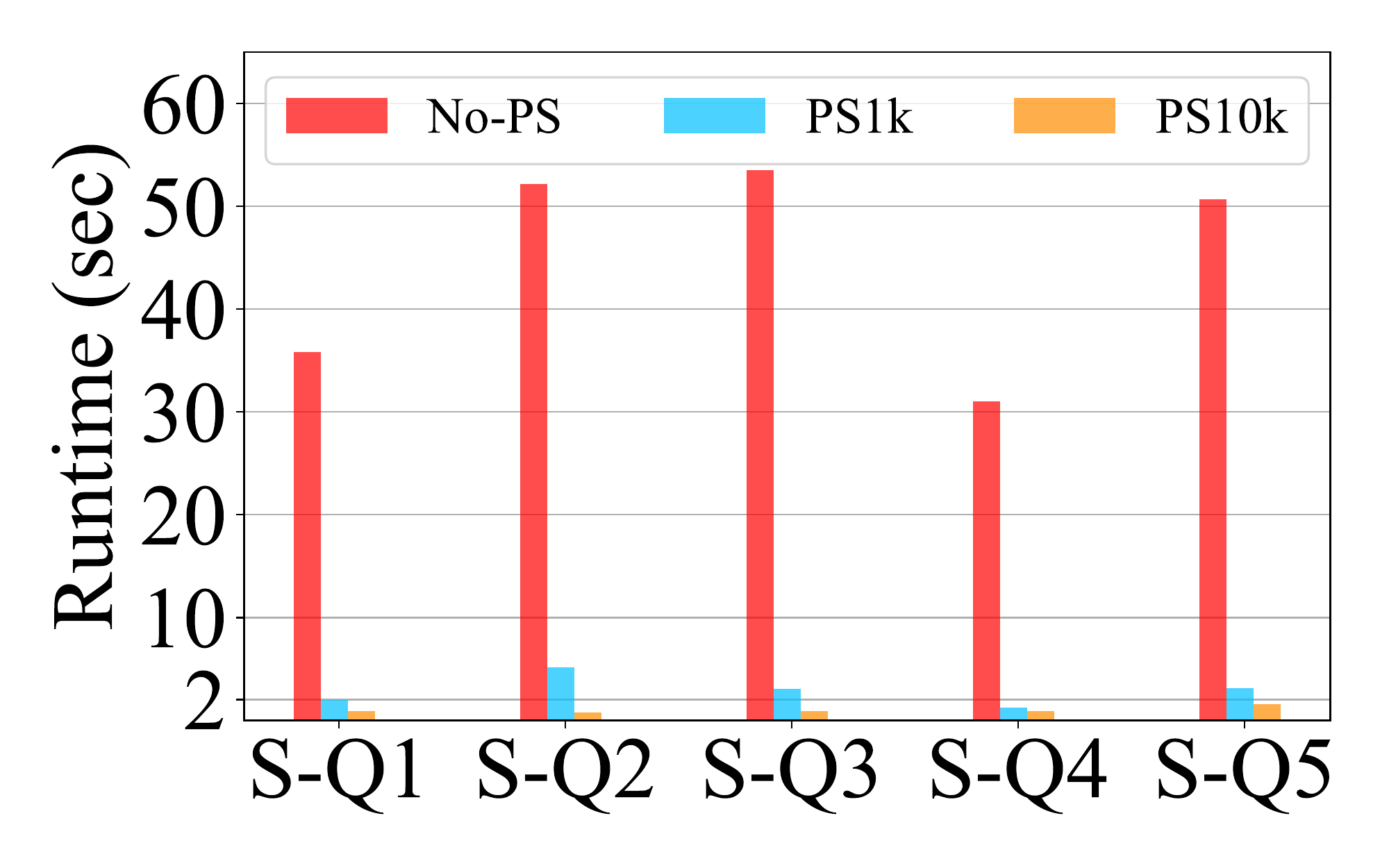}
      \trimfigspace \trimfigspace
            \end{adjustbox}
      \caption{Stack Overflow - Use (BS)}
      \label{fig:stack-use}
    \end{subfigure}
    \begin{subfigure}{0.24\linewidth}
    \begin{adjustbox}{minipage=\linewidth,scale=1.05}
      \centering
      \includegraphics[height=2.55cm,width=1\linewidth,trim=25pt 26pt 0pt 21pt, clip]{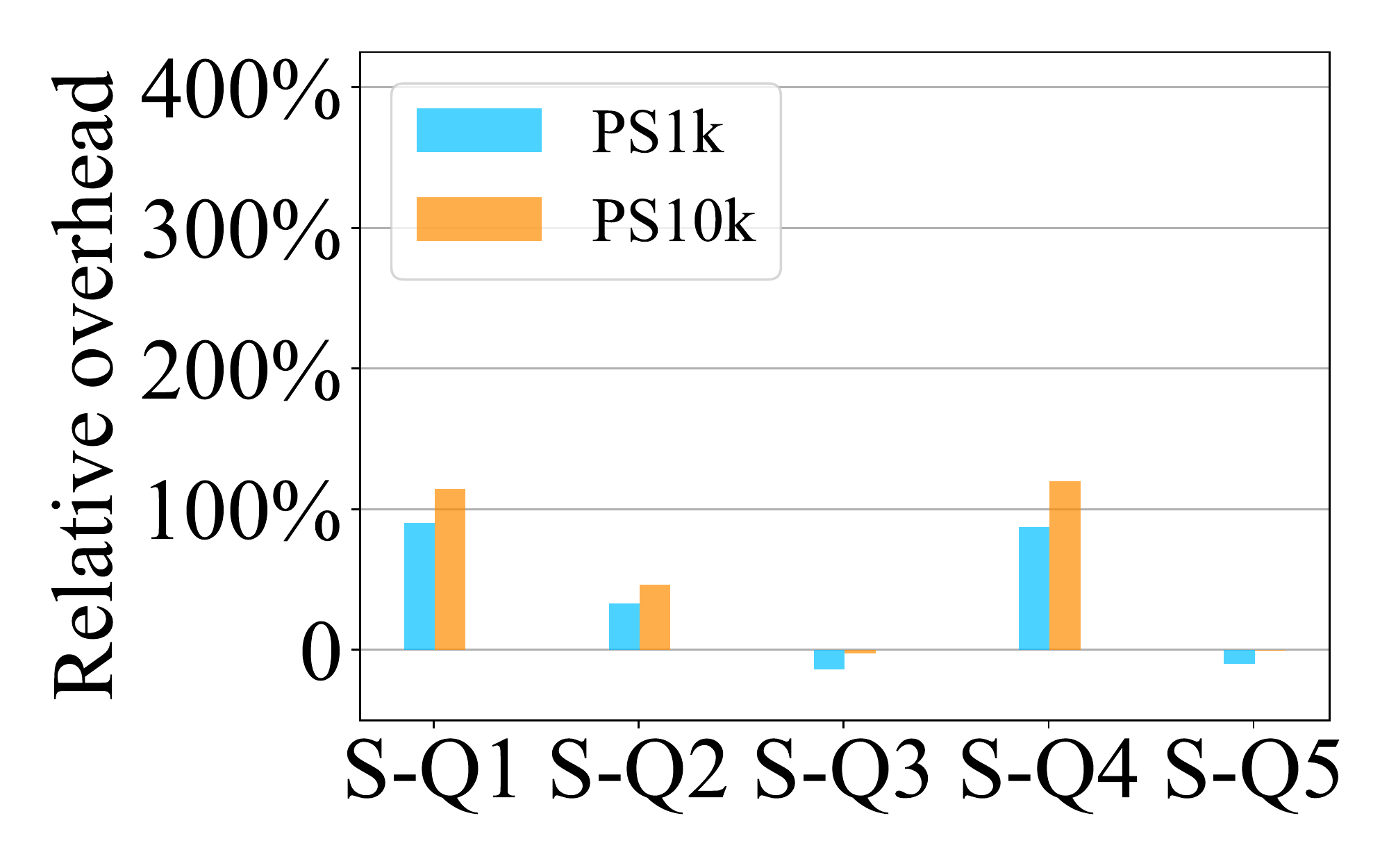}
      \trimfigspace \trimfigspace
             \end{adjustbox}
      \caption{Stack Overflow - Capture}
      \label{fig:stack-cap}
    \end{subfigure}
    \begin{subfigure}{0.28\linewidth}
      \begin{adjustbox}{minipage=\linewidth,scale=0.95}
      \includegraphics[height=2.8cm,trim=25pt 32pt 21pt 35pt, clip]{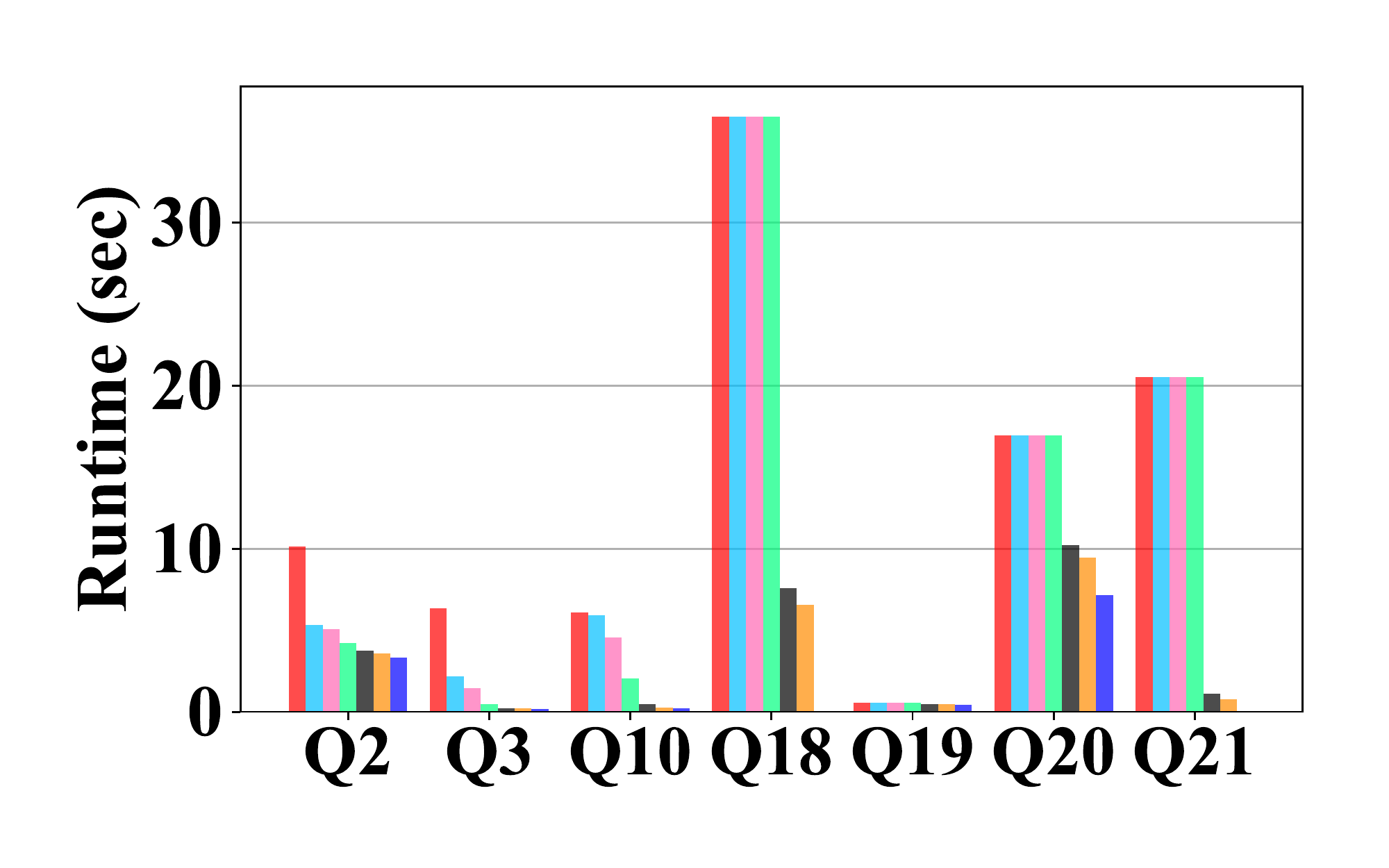}
      \trimfigspace \trimfigspace
                \end{adjustbox}
      \caption{Postgres Use - 10GB (BS)}
      \label{fig:tpch-post-10gb}
    \end{subfigure}\hspace{-8pt}
    \begin{subfigure}{0.23\linewidth}
    \begin{adjustbox}{minipage=\linewidth,scale=0.95}
      \includegraphics[height=2.8cm,trim=32pt 32pt 21pt 20pt, clip]{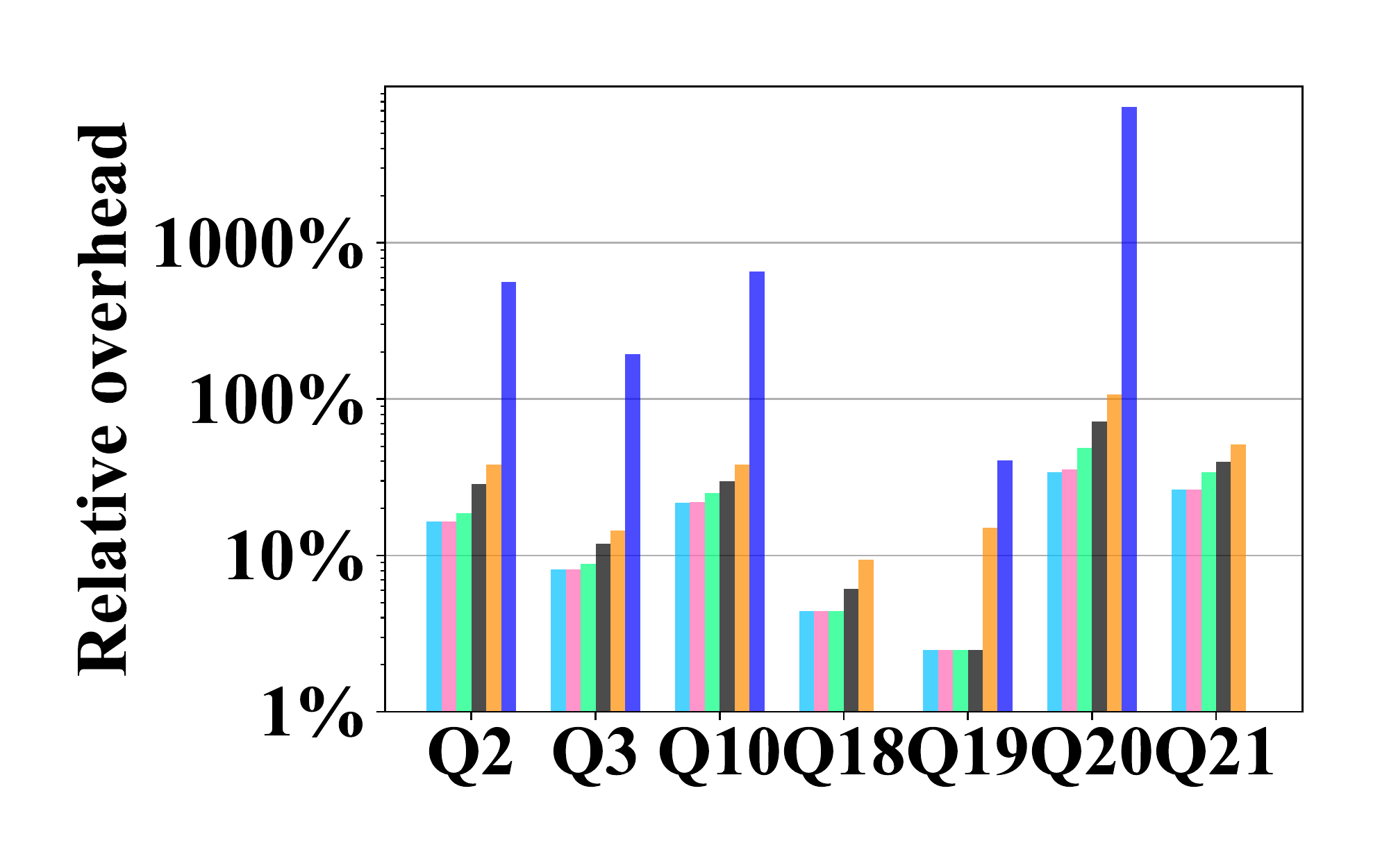}
      \trimfigspace \trimfigspace
             \end{adjustbox}
      \caption{Postgres Capture - 10GB}
      \label{fig:cap-tpch-post-10gb}
    \end{subfigure}
  \end{minipage}

  \begin{minipage}{1.0\linewidth}
    \begin{subfigure}{0.245\linewidth}
      \includegraphics[width=1\linewidth,trim=20pt 25pt 0 0pt, clip]{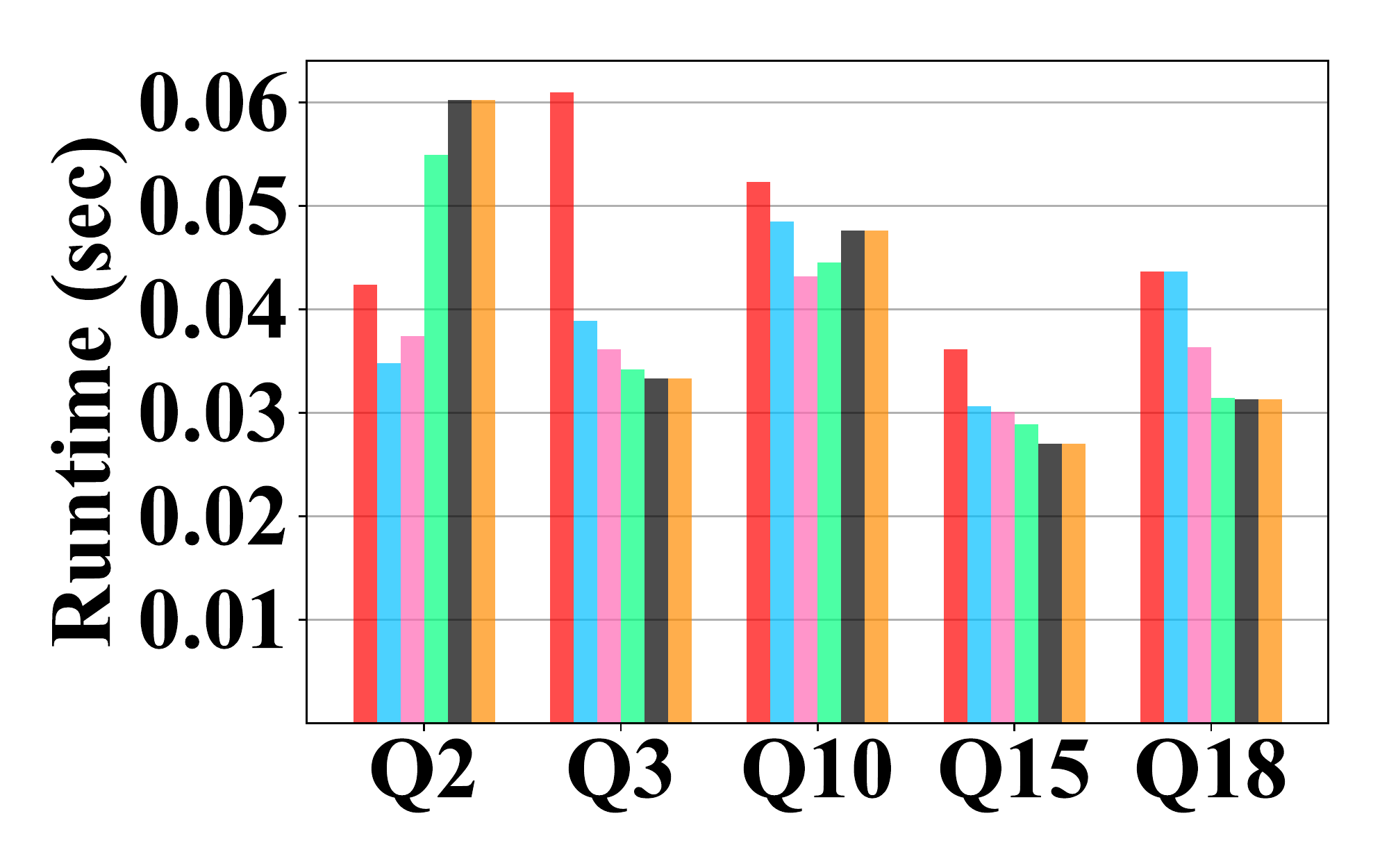}
      \trimfigspace[-5mm]
      \caption{MonetDB Use - 1GB (OR)}
      \label{fig:monetdb-tpch-1gb}
    \end{subfigure}
    \begin{subfigure}{0.245\linewidth}
      \includegraphics[width=1\linewidth,trim=20pt 25pt 0 0pt, clip]{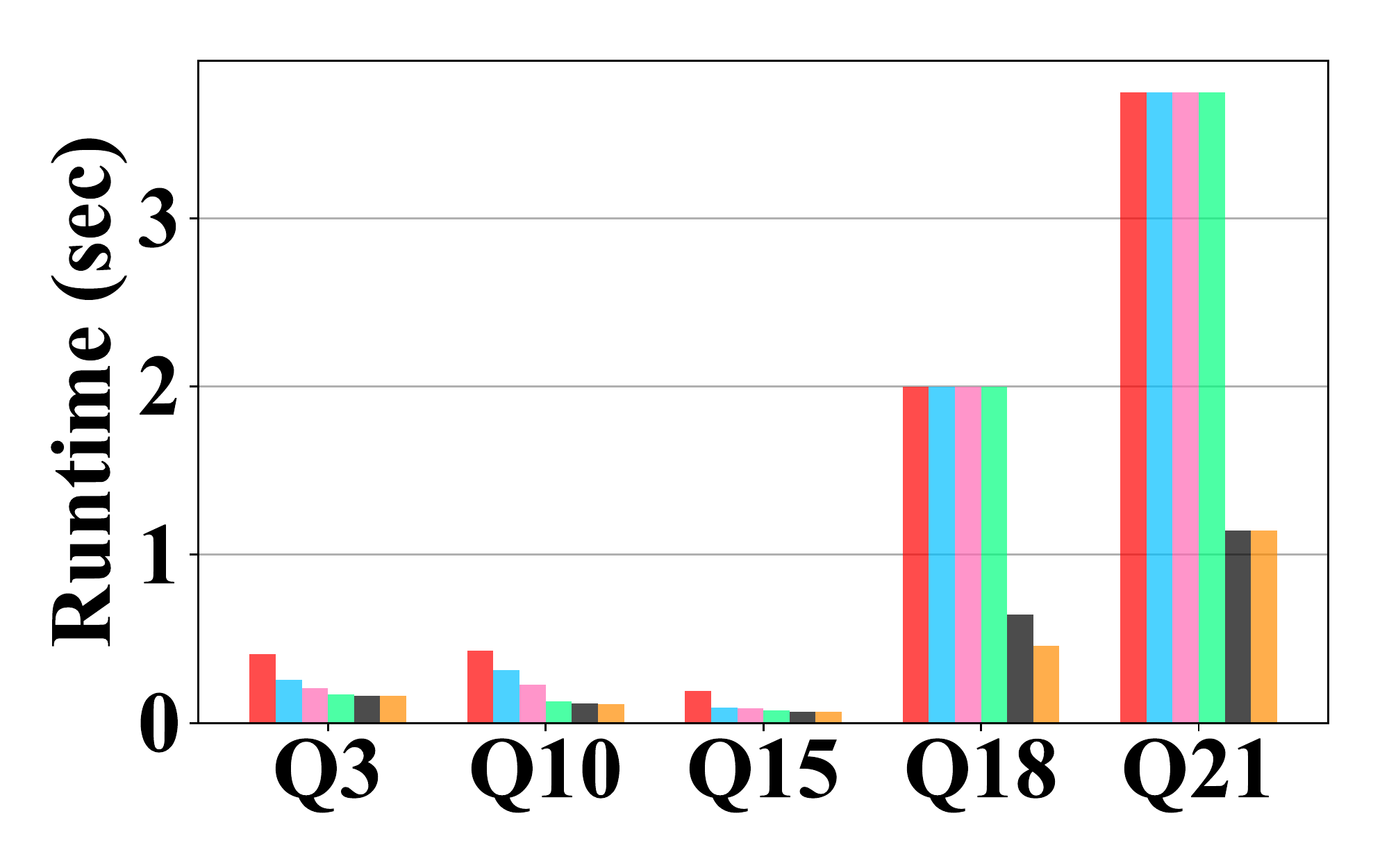}
      \trimfigspace[-5mm]
      \caption{MonetDB Use - 10GB (OR)}
      \label{fig:monetdb-tpch-10gb}
    \end{subfigure}
    \begin{subfigure}{0.245\linewidth}
      \includegraphics[width=1\linewidth,trim=20pt 25pt 0 0pt, clip]{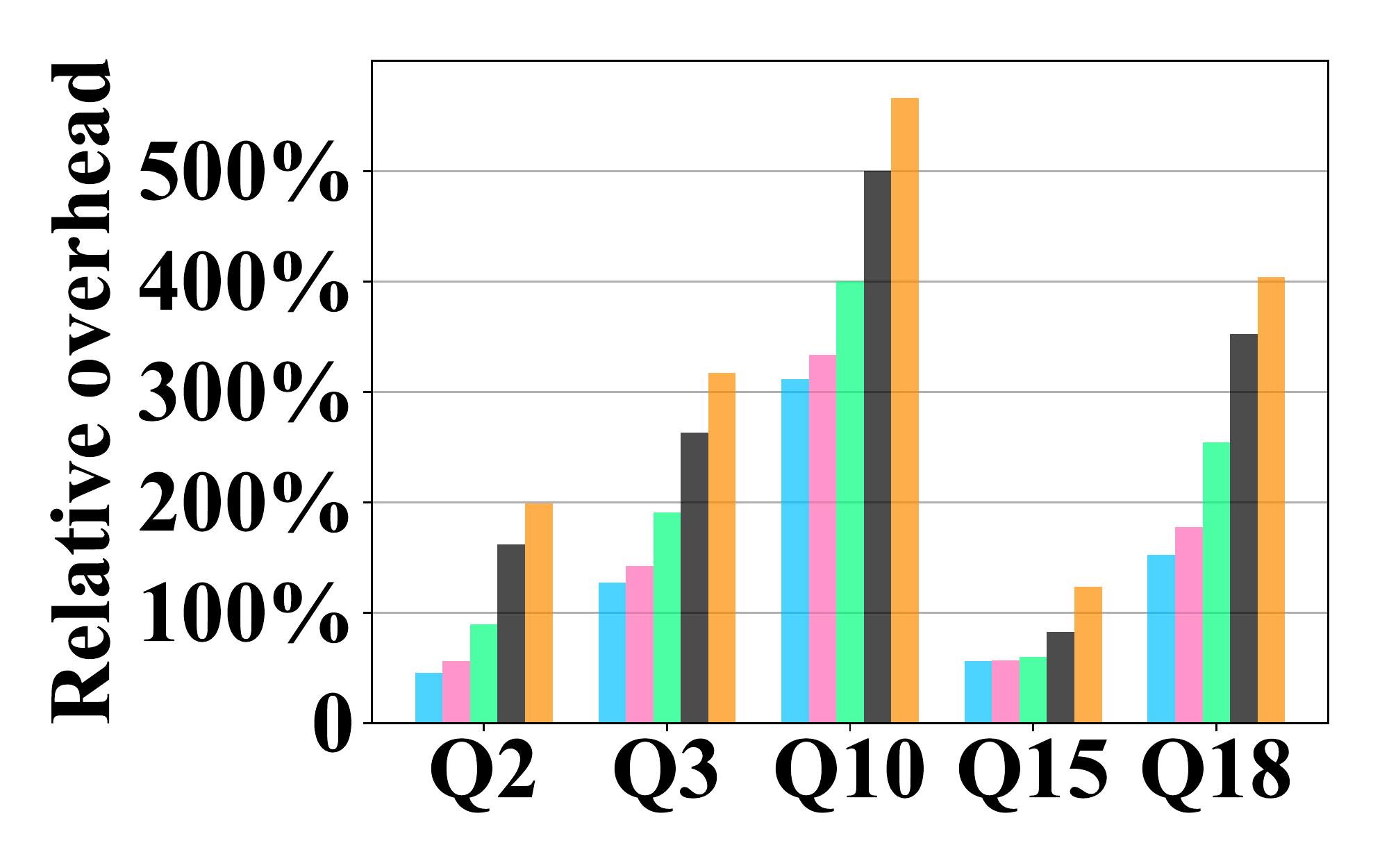}
      \trimfigspace[-5mm]
      \caption{MonetDB Capture - 1GB}
      \label{fig:monetdb-tpch-overhead-1gb}
    \end{subfigure}
    \begin{subfigure}{0.245\linewidth}
      \includegraphics[width=1\linewidth,trim=20pt 25pt 0 0pt, clip]{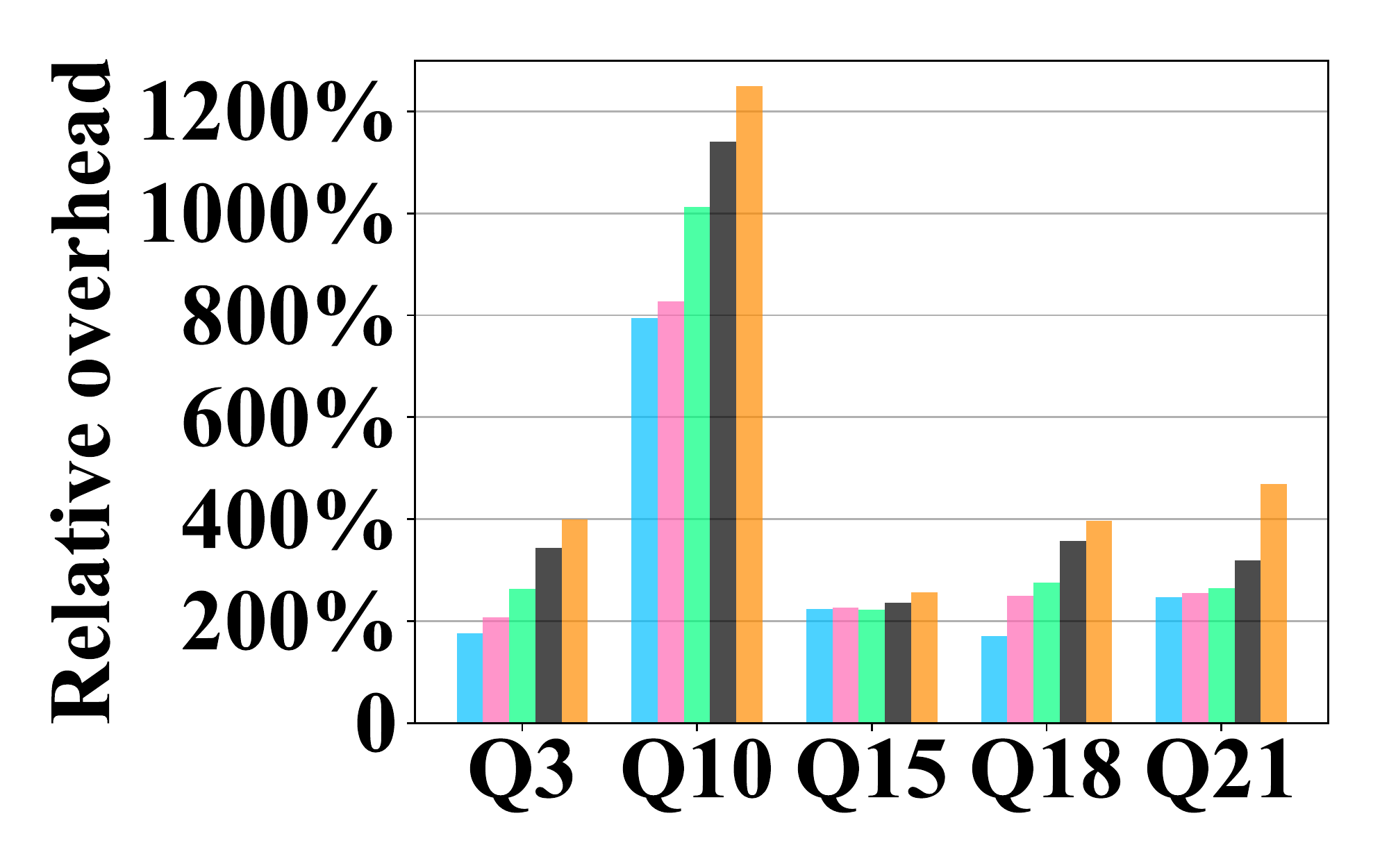}
      \trimfigspace[-5mm]
      \caption{MonetDB Capture - 10GB}
      \label{fig:monetdb-tpch-overhead-10gb}
    \end{subfigure}
    \trimfigspace
    \caption{Performance of provenance sketch capture and use for TPC-H and Stack Overflow queries.}
    \label{fig-exp-runtimes-usage}
  \end{minipage}
\end{figure*}
}
\iftechreport{
\begin{figure*}
  \centering
  \begin{minipage}{0.48\linewidth}
    \includegraphics[width=1\linewidth,trim=0pt 5pt 0 0pt, clip]{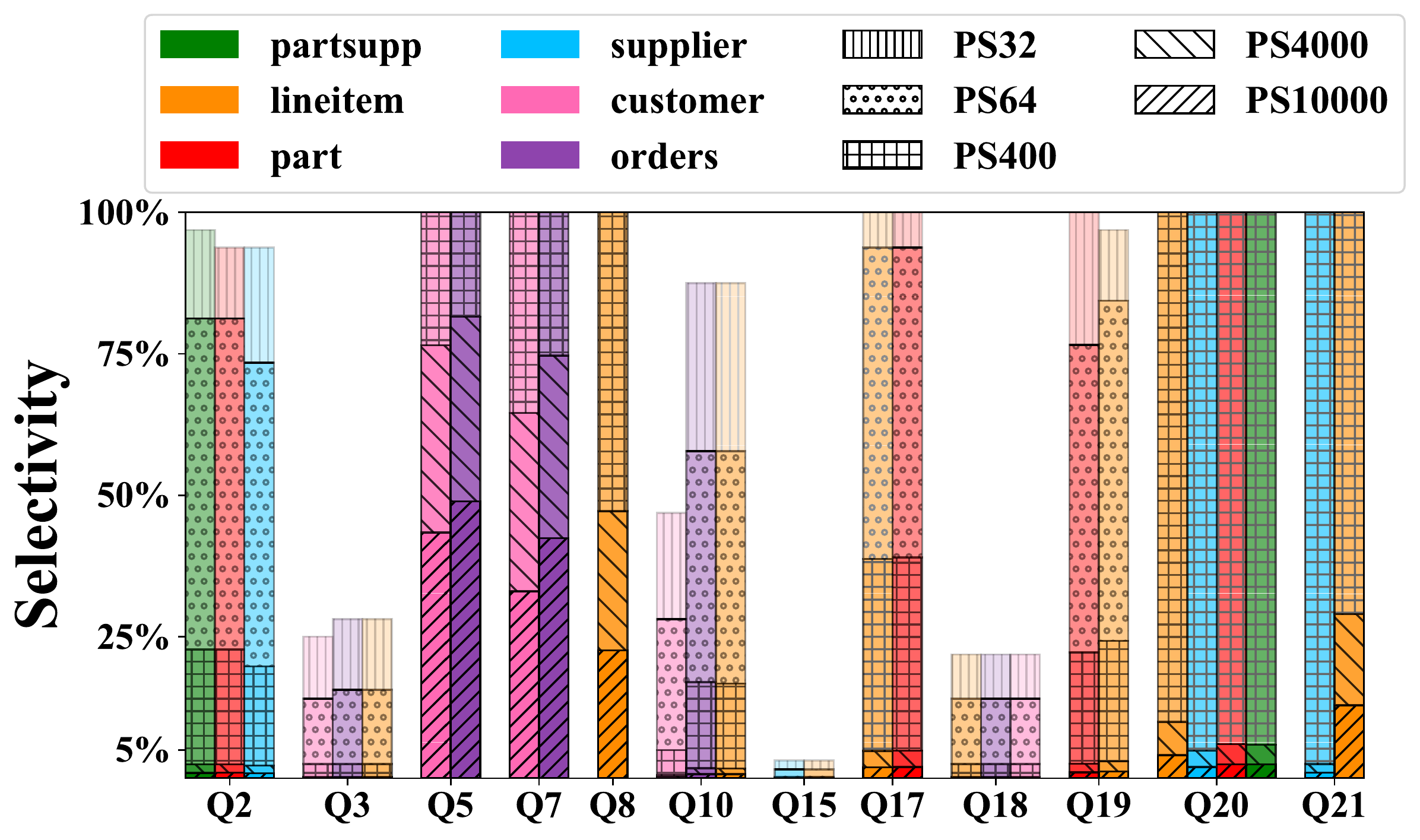}
    \trimfigspace
      	\caption{Selectivity of range-partition provenance sketches for the TPC-H 1GB.}
      	\label{fig:sel-tpch-1gb}
      \end{minipage}
        \begin{minipage}{0.46\linewidth}
    \begin{subfigure}{0.49\linewidth}
      \centering
      \includegraphics[width=1\linewidth,trim=0pt 25pt 0pt 25pt, clip]{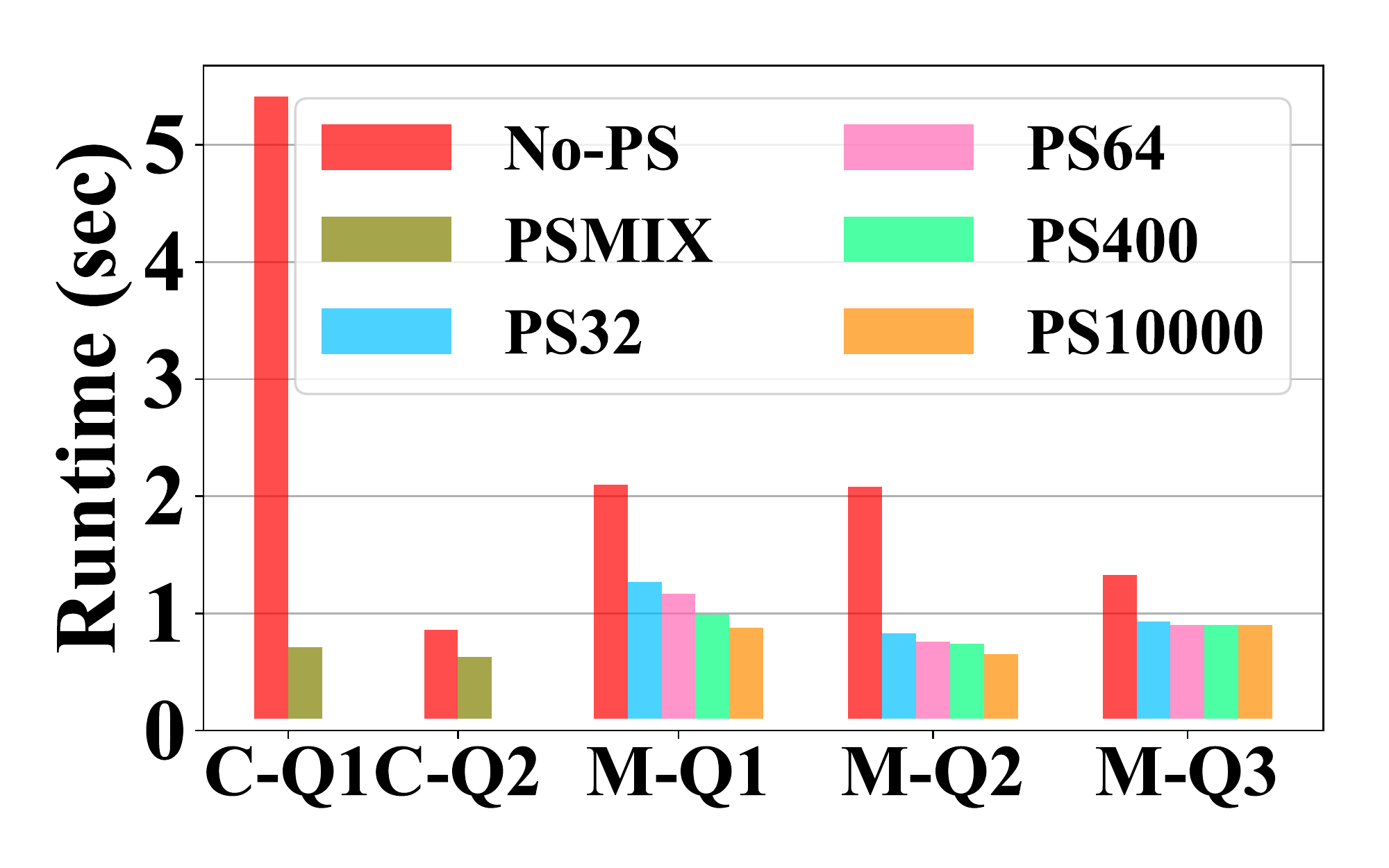}
      \caption{Crimes and Movies - Use (OR)}
      \label{fig:realdb-use}
    \end{subfigure}
    \begin{subfigure}{0.49\linewidth}
      \centering
      \includegraphics[width=1\linewidth,trim=0pt 25pt 0pt 25pt, clip]{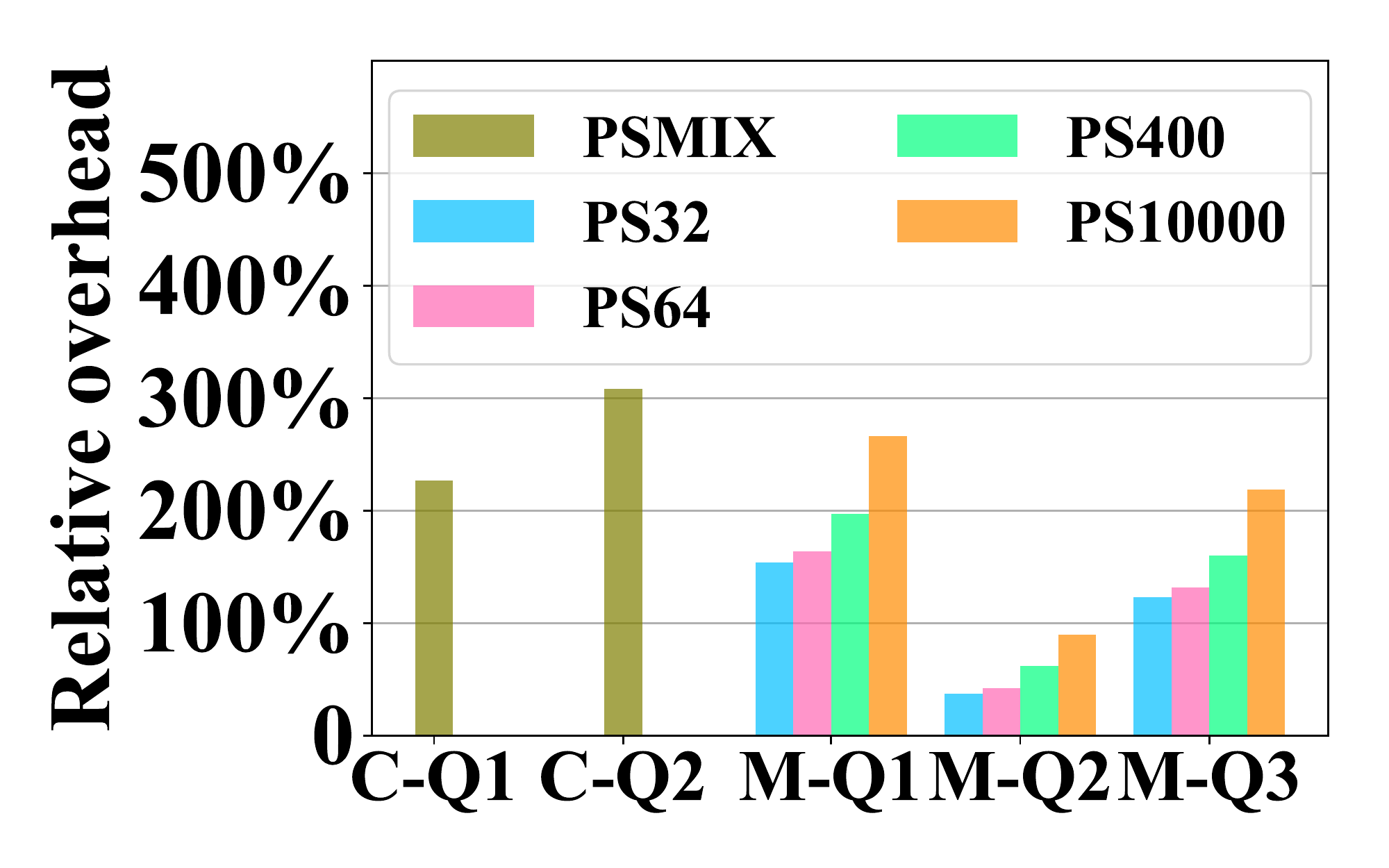}
      \caption{Crimes and Movies - Capture}
      \label{fig:realdb-cap}
    \end{subfigure}
        \begin{subfigure}{0.49\linewidth}
      \includegraphics[width=1\linewidth,trim=10pt 22pt 10pt 0pt, clip]{figs/realdb/stack_overflow_runtime.pdf}
      \caption{SOF - Use (BS)}
      \label{fig:stack-use}
    \end{subfigure}
    \begin{subfigure}{0.49\linewidth}
      \includegraphics[width=1\linewidth,trim=10pt 22pt 10pt 0pt, clip]{figs/realdb/stack_overflow_capture.pdf}
      \caption{SOF - Capture}
      \label{fig:stack-cap}
    \end{subfigure}
    \trimfigspace
    \caption{Real world data}
    \label{fig:real-word}
   \end{minipage}

  \begin{minipage}{1.0\linewidth}
    \centering
    \includegraphics[width=0.8\linewidth]{figs/pslegend.pdf}
  \end{minipage}

  \begin{minipage}[t]{1.0\linewidth}
    \begin{subfigure}{0.49\linewidth}
      \centering
      \includegraphics[height=3cm,trim=25pt 25pt 0 24pt, clip]{figs/binary_search/version1/post_tpch_1GB_runtime.pdf}
      \trimfigspace
      \caption{Postgres Use - 1GB (BS)}
      \label{fig:tpch-post-1gb}
    \end{subfigure}
    \begin{subfigure}{0.49\linewidth}
      \centering
      \includegraphics[height=3cm,trim=25pt 25pt 0 24pt, clip]{figs/binary_search/version1/post_tpch_1GB_capture.pdf}
      \trimfigspace
      \caption{Postgres Capture - 1GB}
      \label{fig:cap-tpch-post-1gb}
    \end{subfigure}
  \end{minipage}

  \begin{minipage}{1.0\linewidth}
    \begin{subfigure}{0.331\linewidth}
      \centering
      \includegraphics[height=3cm,trim=25pt 25pt 0 28pt, clip]{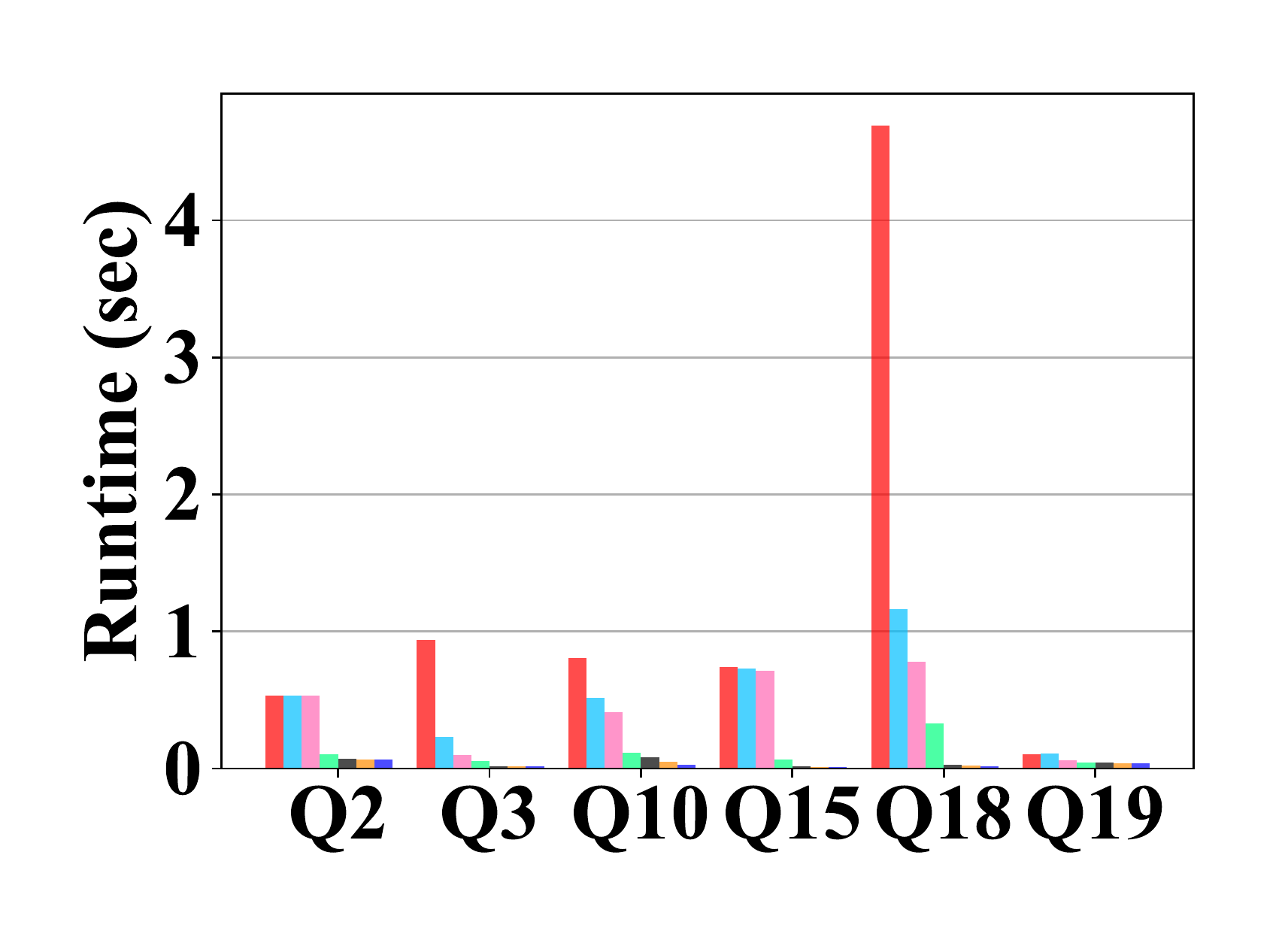}
      \trimfigspace
      \caption{Runtime - Postgres - 1GB (OR)}
      \label{fig:use-tpch-post-1gb-or}
    \end{subfigure}
    \begin{subfigure}{0.331\linewidth}
      \includegraphics[height=3cm,trim=25pt 25pt 0 28pt, clip]{figs/binary_search/version1/post_tpch_10GB_runtime.pdf}
      \trimfigspace
      \caption{Postgres Use - 10GB (BS)}
      \label{fig:tpch-post-10gb}
    \end{subfigure}
    \begin{subfigure}{0.331\linewidth}
      \includegraphics[height=3cm,trim=25pt 25pt 0 28pt, clip]{figs/binary_search/version1/post_tpch_10GB_capture.pdf}
      \trimfigspace
      \caption{Postgres Capture - 10GB}
      \label{fig:cap-tpch-post-10gb}
    \end{subfigure}
  \end{minipage}

  \begin{minipage}{1.0\linewidth}
    \begin{subfigure}{0.245\linewidth}
      \includegraphics[width=1\linewidth,trim=20pt 25pt 0 0pt, clip]{figs/monetdb/monetdb_tpch_1GB_runtime.pdf}
      \trimfigspace[-5mm]
      \caption{MonetDB Use - 1GB (OR)}
      \label{fig:monetdb-tpch-1gb}
    \end{subfigure}
    \begin{subfigure}{0.245\linewidth}
      \includegraphics[width=1\linewidth,trim=20pt 25pt 0 0pt, clip]{figs/monetdb/monetdb_tpch_10GB_runtime.pdf}
      \trimfigspace[-5mm]
      \caption{MonetDB Use - 10GB (OR)}
      \label{fig:monetdb-tpch-10gb}
    \end{subfigure}
    \begin{subfigure}{0.245\linewidth}
      \includegraphics[width=1\linewidth,trim=20pt 25pt 0 0pt, clip]{figs/monetdb/monetdb_tpch_1GB_capture.pdf}
      \trimfigspace[-5mm]
      \caption{MonetDB Capture - 1GB}
      \label{fig:monetdb-tpch-overhead-1gb}
    \end{subfigure}
    \begin{subfigure}{0.245\linewidth}
      \includegraphics[width=1\linewidth,trim=20pt 25pt 0 0pt, clip]{figs/monetdb/monetdb_tpch_10GB_capture.pdf}
      \trimfigspace[-5mm]
      \caption{MonetDB Capture - 10GB}
      \label{fig:monetdb-tpch-overhead-10gb}
    \end{subfigure}
    \trimfigspace
    \caption{Performance of provenance sketch capture and use for TPC-H queries.}
    \label{fig-exp-runtimes-usage}
  \end{minipage}

\end{figure*}
}

\ifnottechreport{
\subsection{Capture Optimizations}
\label{sec:cap-opt}
In preliminary experiments, we have evaluated the effectiveness of the optimization for provenance capture presented in \Cref{tab:opt_capture}. These experiments demonstrated that using binary search to determine which partition an input tuple belongs to significantly outperforms the use of case expressions for this purpose. Furthermore, delaying creating sets of sketches and avoiding copying when constructing sketches further improves performance. Thus, we enable these optimizations for all remaining experiments. For additional details about these experiments see \cite{techreport}.
}
\iftechreport{
\subsection{Capture Optimizations}
\label{sec:cap-opt}
We first evaluate the effectiveness of the optimization for provenance capture presented in \Cref{tab:opt_capture}.
\Cref{fig:case-opt,fig:bitor-opt} show capture runtime varying $\card{\parti}$, i.e., the number of fragments of the partition based on which we are creating the sketch. We use \pss{i} to denote a partition with $i$ fragments.
\parttitle{Creating Singleton Sketches}
We considered two approaches for creating singleton sketches for tuples by determining which fragment of a range-partition the tuple belongs to. Either the membership of the tuple is tested using a list of \textit{case} expressions or we use a UDF to perform \textit{binary search} over the ranges of the partition.
We use the crimes dataset and Postgres in this experiment. As expected, \textit{binary search} significantly outperforms \textit{case} for larger number of fragments, e.g., about 2 orders of magnitude for \pss{10K}.

\parttitle{Merging Sketches}:
For operators like aggregation we need to union sketches which corresponds to computing the bitwise-or of sketches. 
Recall that we presented two optimizations for the operation in \Cref{tab:opt_capture}:   \textit{delay} and \textit{No-copy}. 
For this experiment, we union the singleton sketches for all tuples from the movie ratings datasets.\BG{Was not introduced} The results are shown in  \Cref{fig:bitor-opt}.
The \textit{delay} optimization significantly improves performance for larger number of fragments, e.g., from $\sim$0.5 seconds to $\sim$0.2 seconds for \pss{10K}.  \textit{No-copy} further improves this to $\sim$0.16 seconds. 
Thus, we enable these optimizations for all remaining experiments.
}
\iftechreport{
\begin{figure*}[t]
  \centering
  \begin{minipage}{0.19\linewidth}
      \begin{subfigure}{1\linewidth}
      \includegraphics[width=1\linewidth,trim=10pt 22pt 10pt 0pt, clip]{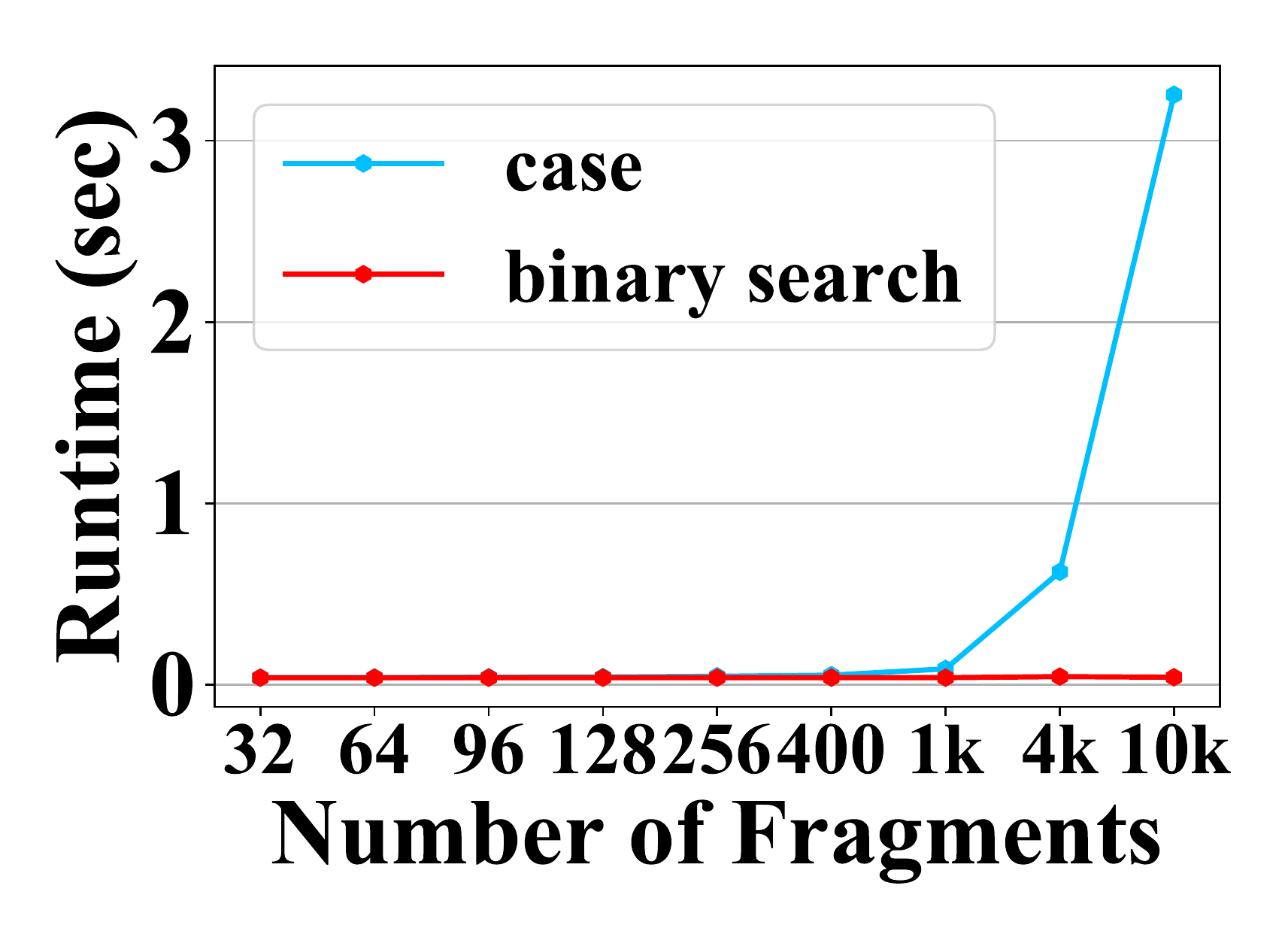}
      \caption{Creating Singletons}
      \label{fig:case-opt}
    \end{subfigure}
    \begin{subfigure}{1\linewidth}
      \includegraphics[width=1\linewidth,trim=10pt 22pt 10pt 0pt, clip]{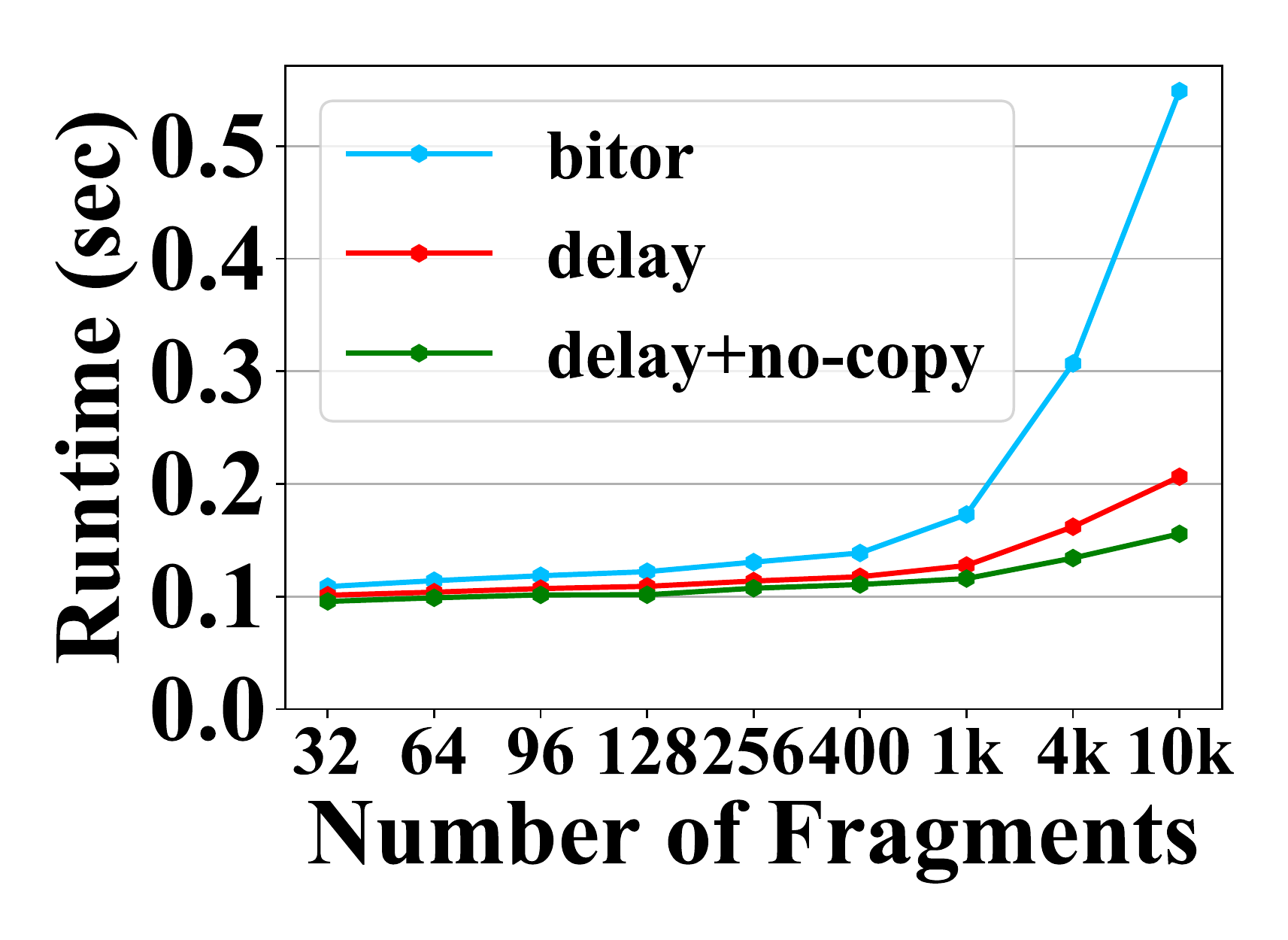}
      \caption{Merging Sketches}
      \label{fig:bitor-opt}
    \end{subfigure}
        \trimfigspace
        \caption{Optimizations}
    \label{fig:cap-opt}
  \end{minipage}
  \begin{minipage}{0.79\linewidth}
    \begin{subfigure}{0.24\linewidth}
      \includegraphics[width=1\linewidth,trim=10pt 22pt 10pt 0pt, clip]{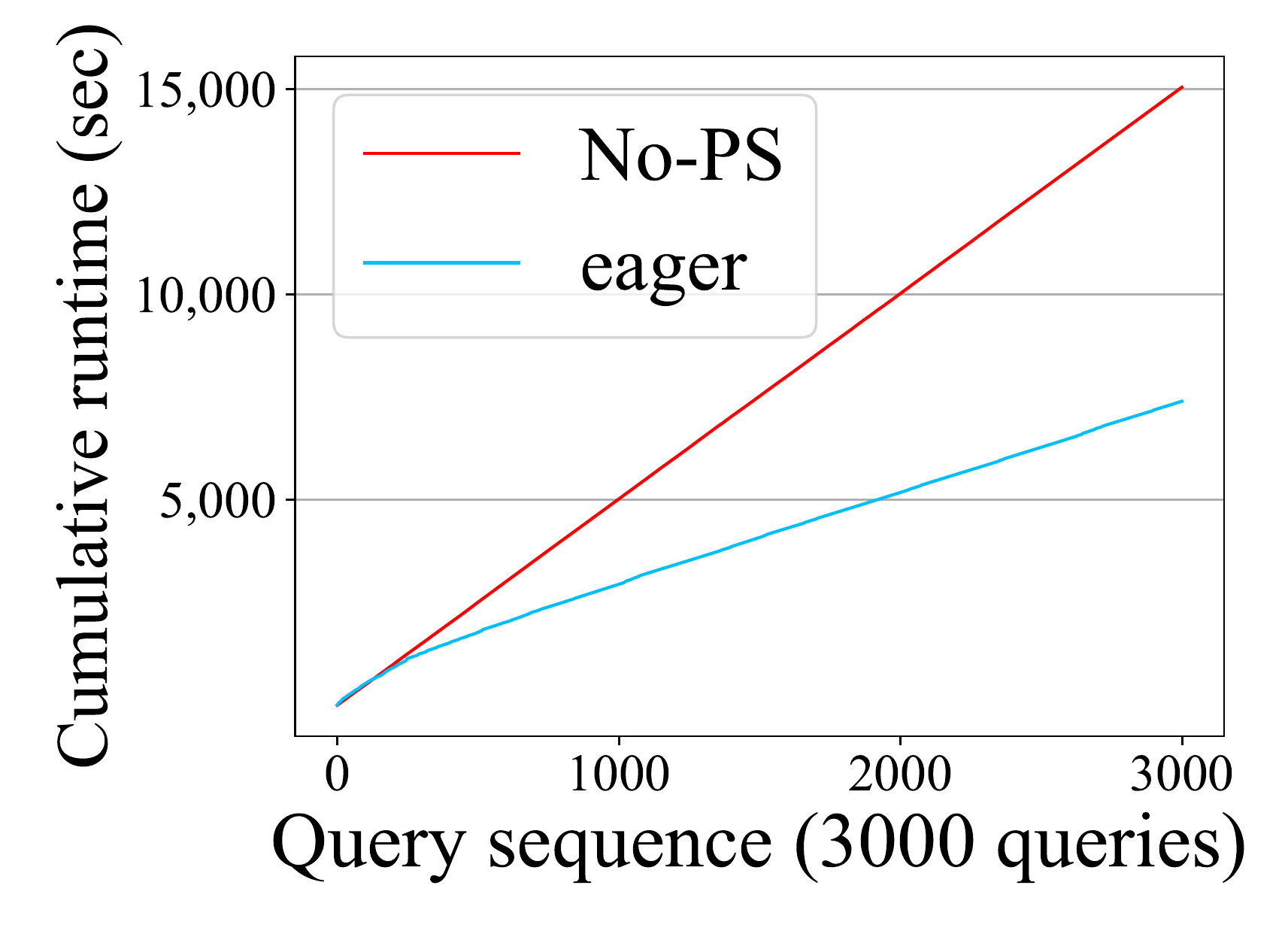}
      \caption{Crimes - Mix templates }
      \label{fig:ete-crimes-mix}
    \end{subfigure}
    \begin{subfigure}{0.24\linewidth}
      \includegraphics[width=1\linewidth,trim=10pt 22pt 10pt 0pt, clip]{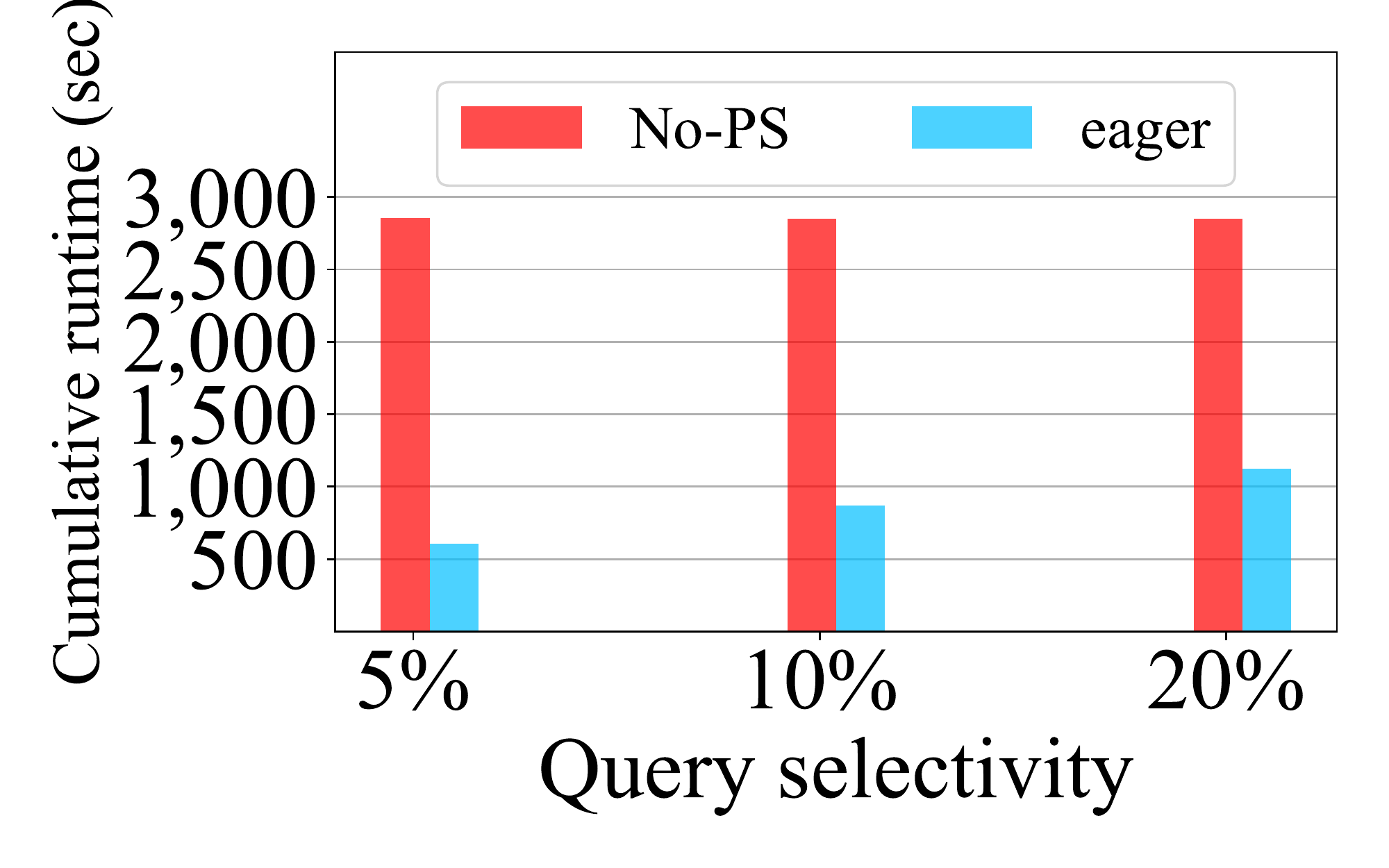}
      \caption{Crimes - Selectivity}
      \label{fig:ete-crimes-sel}
    \end{subfigure}
        \begin{subfigure}{0.24\linewidth}
      \includegraphics[width=1\linewidth,trim=10pt 22pt 10pt 0pt, clip]{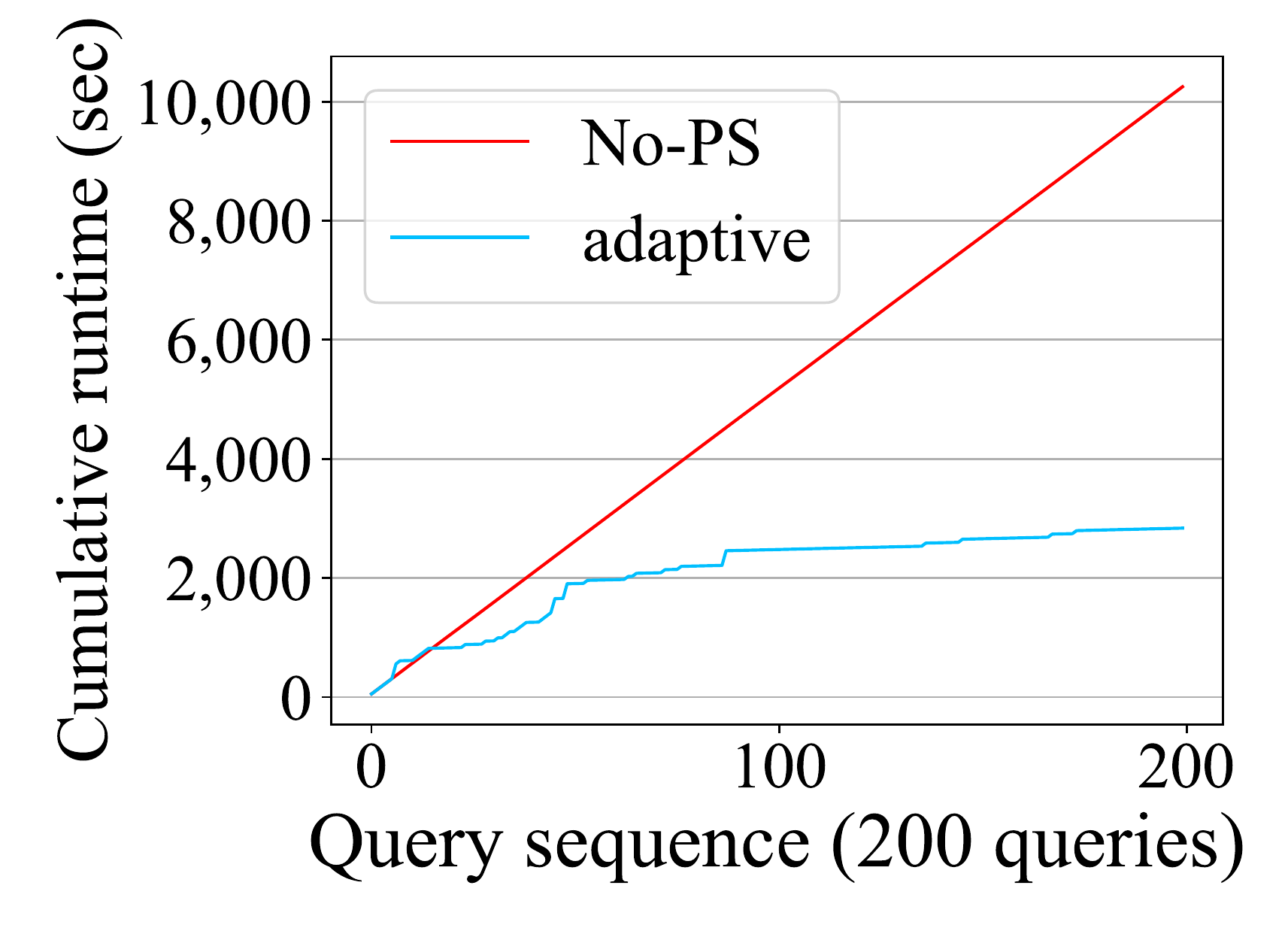}
      \caption{SOF ($SDV = 1k$)}
      \label{fig:ete-stack-q3-s1k}
    \end{subfigure}
        \begin{subfigure}{0.24\linewidth}
      \includegraphics[width=1\linewidth,trim=10pt 22pt 10pt 0pt, clip]{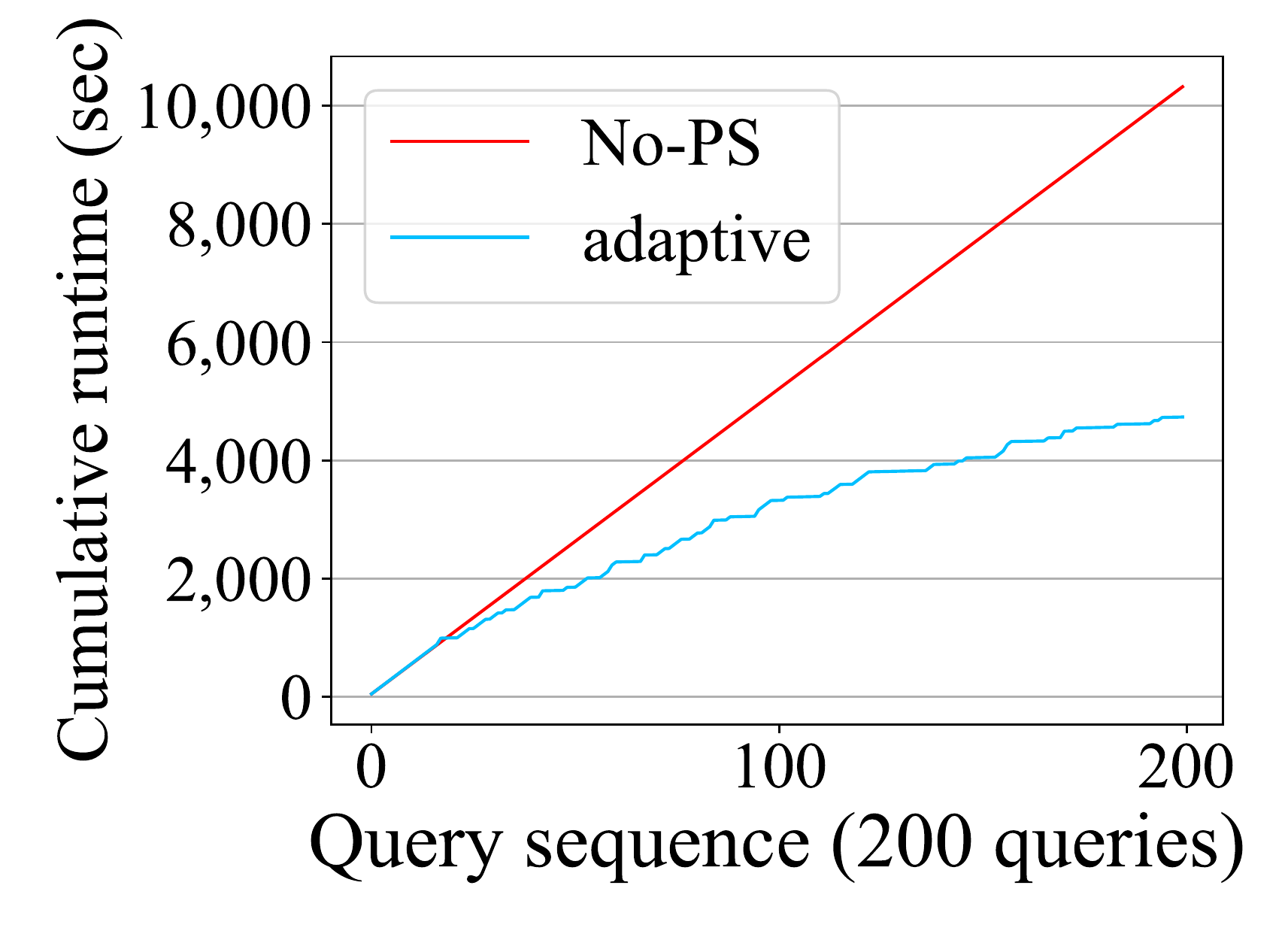}
      \caption{SOF ($SDV = 5k$)}
      \label{fig:ete-stack-q3-s5k}
    \end{subfigure}
        \begin{subfigure}{0.24\linewidth}
      \includegraphics[width=1\linewidth,trim=10pt 22pt 10pt 0pt, clip]{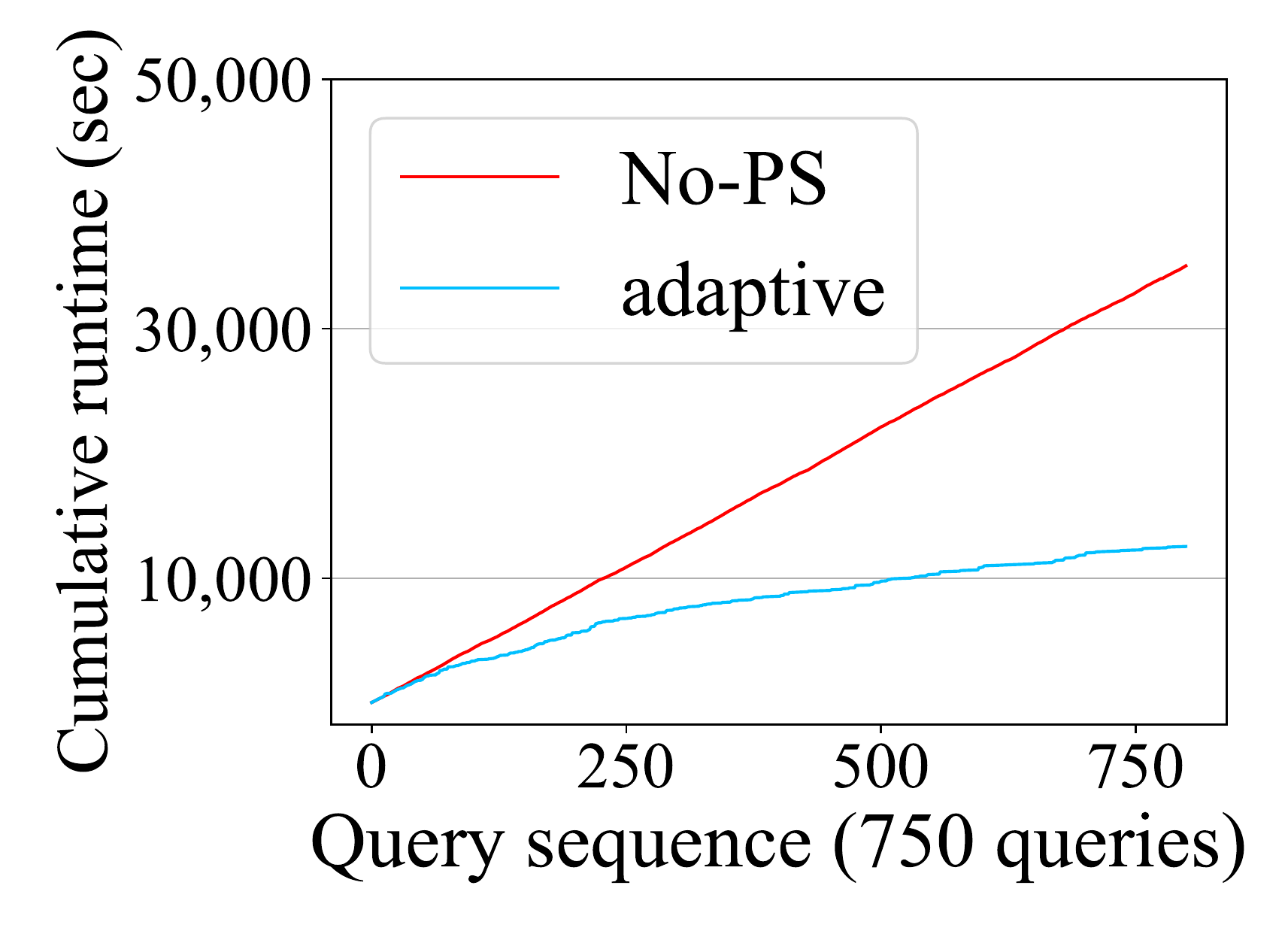}
      \caption{SOF - Mix templates}
      \label{fig:ete-stack-mix}
    \end{subfigure}
    \begin{subfigure}{0.24\linewidth}
      \includegraphics[width=1\linewidth,trim=10pt 22pt 10pt 0pt, clip]{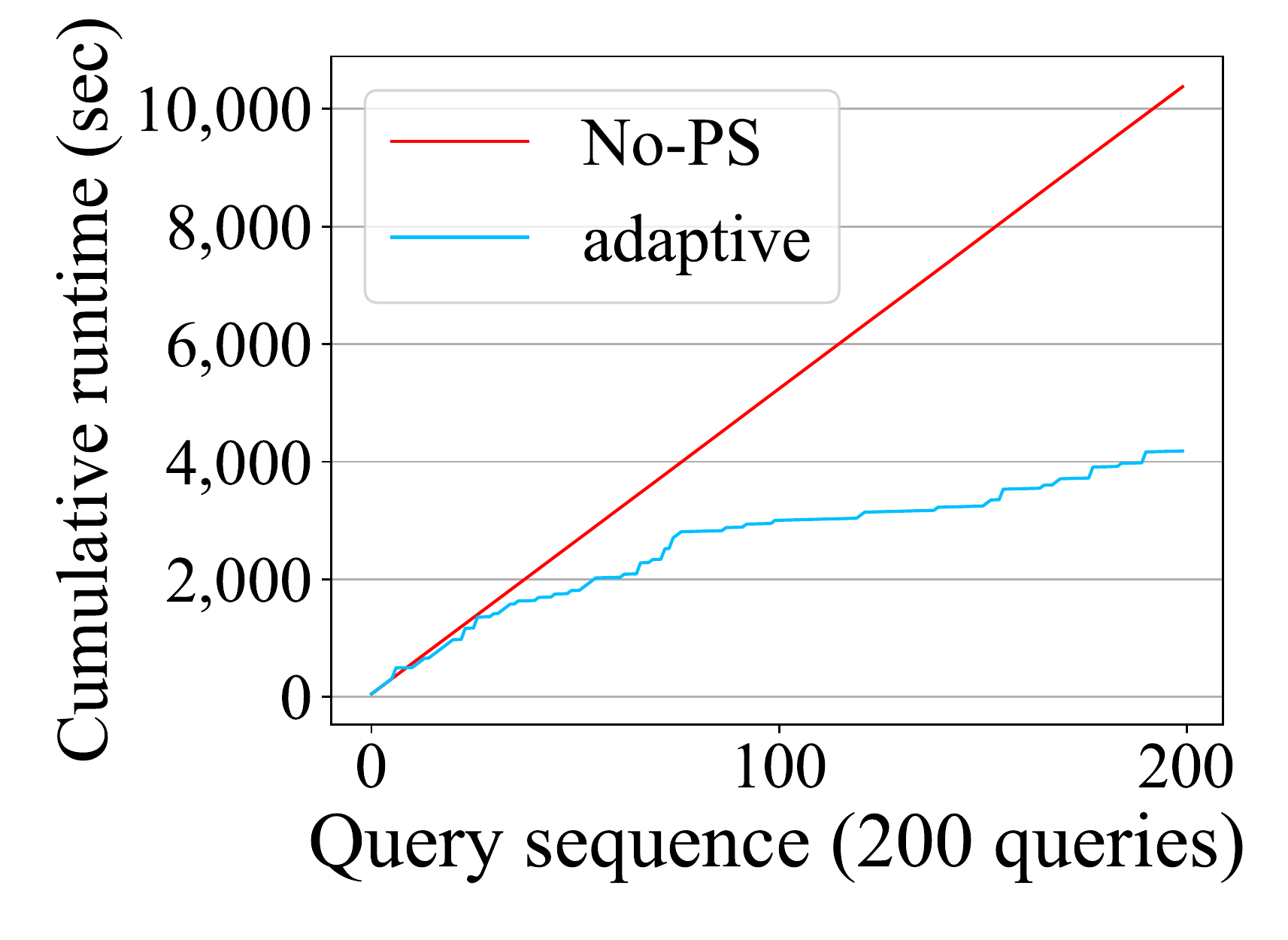}
      \caption{SOF ($sel = 0.7\%$)}
      \label{fig:ete-stack-sel07}
    \end{subfigure}
        \begin{subfigure}{0.24\linewidth}
      \includegraphics[width=1\linewidth,trim=10pt 22pt 10pt 0pt, clip]{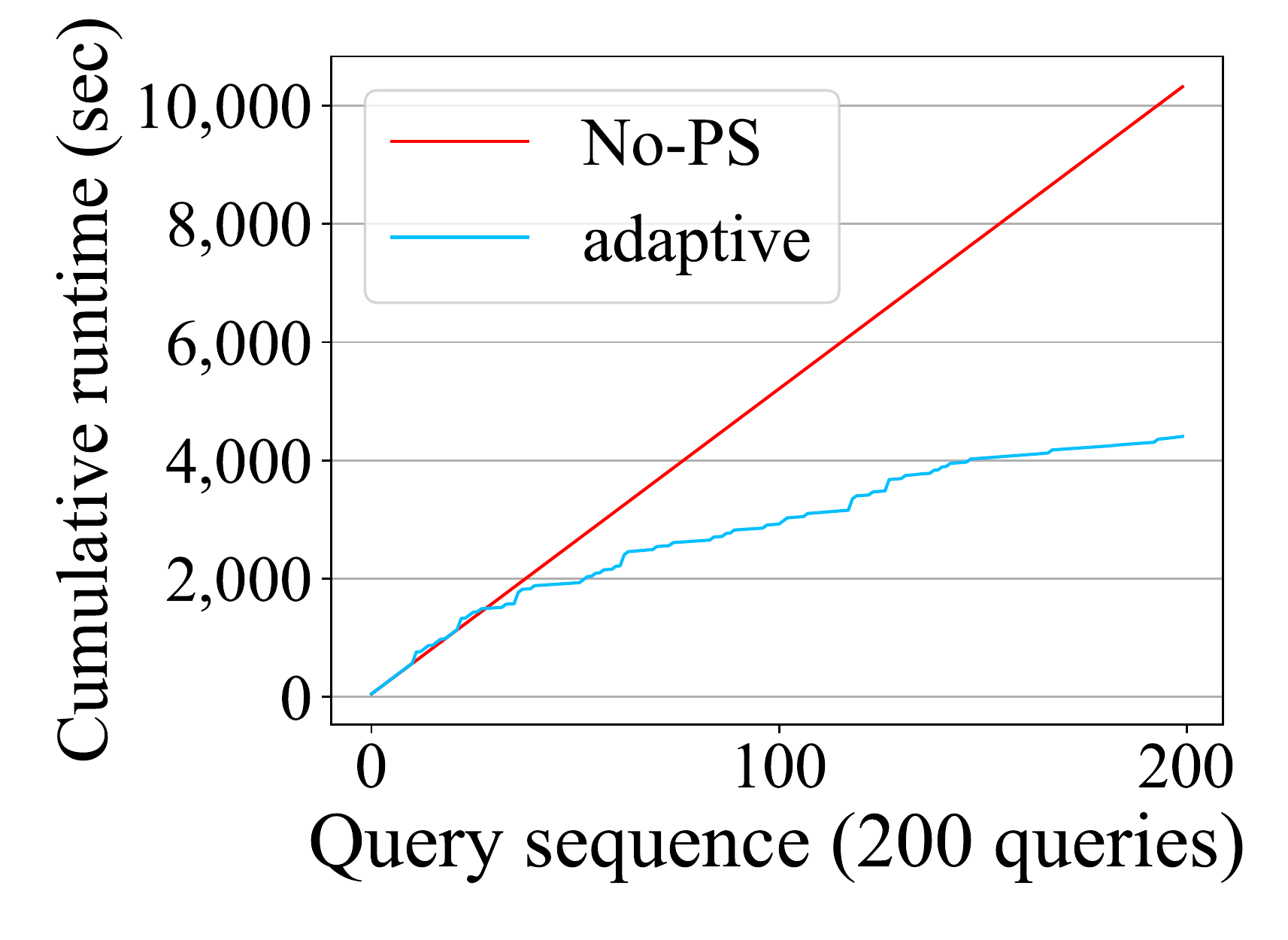}
      \caption{SOF ($sel = 2\%$)}
      \label{fig:ete-stack-sel2}
    \end{subfigure}
        \begin{subfigure}{0.24\linewidth}
      \includegraphics[width=1\linewidth,trim=10pt 22pt 10pt 0pt, clip]{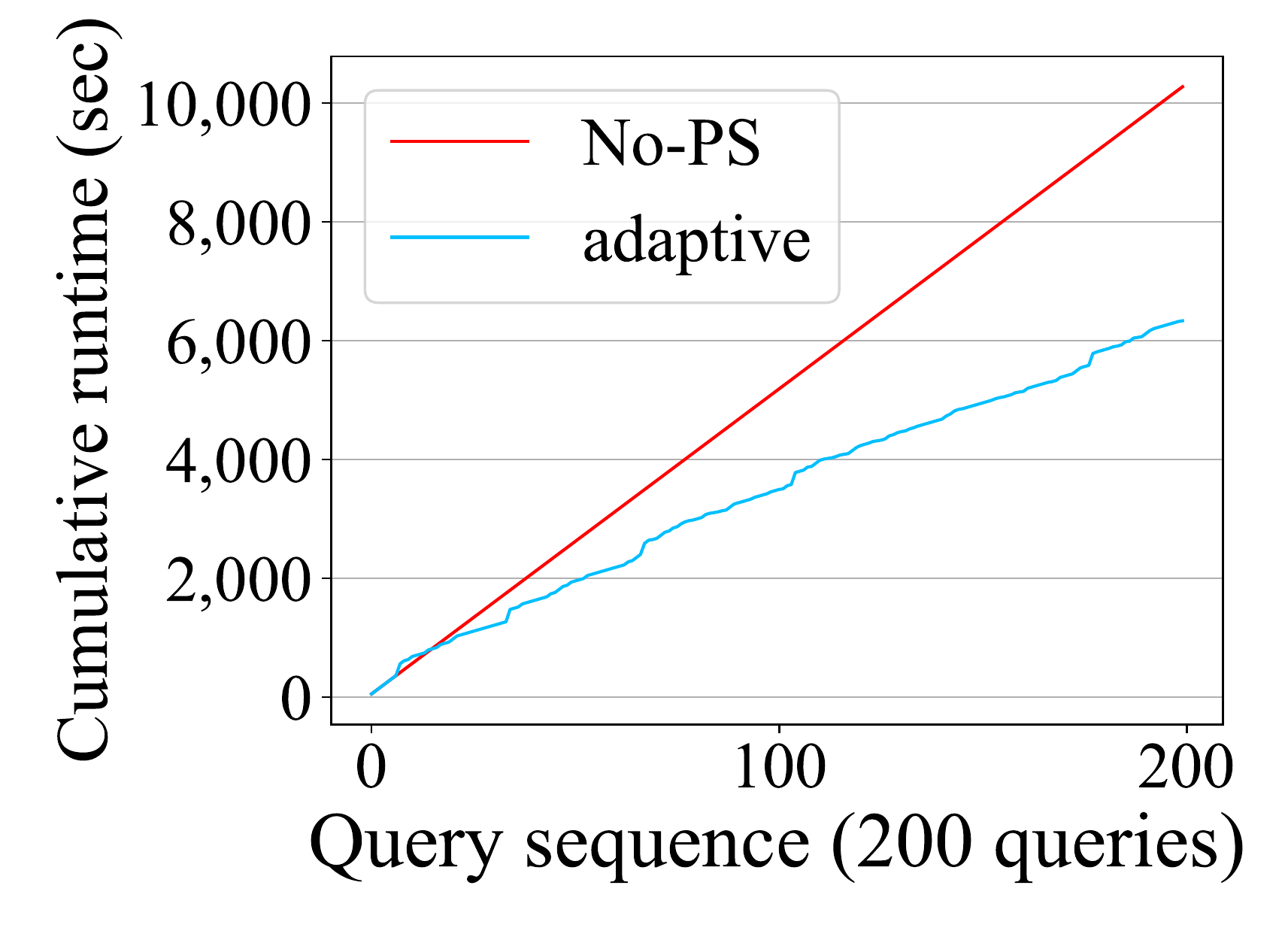}
      \caption{SOF ($sel = 5\%$)}
      \label{fig:ete-stack-sel5}
    \end{subfigure}
    \trimfigspace
      \caption{End-to-end Experiments}
    \label{fig:ete}
   \end{minipage}
  \end{figure*}
\subsection{TPC-H}\label{sec:exp-tpc-h}

%

Because of the TPC-H benchmark's artificial data distribution, e.g., many columns are uniformly distributed, this stresses our approach since there are essentially no meaningful correlations that we can exploit.
\XN{no need maybe:
We use range-based provenance sketches since in preliminary experiments these sketches have shown to perform better or at least comparable to other sketch types in terms of skipping potential (size of the over-approximation of provenance) and the selection conditions we produce for such sketches can be exploited successfully by a vanilla database system (e.g., using indexes or zone maps).}
As explained in \Cref{sec:prov-sketch}\BG{make sure to describe this there}, we use the one-dimensional equi-depth histograms maintained as statistics by the DBMS to determine the ranges of a partition for sketches.
For consistency we generate provenance sketches on the primary keys (PK) of relations. However, for cases where using the PK is unsafe (see \Cref{sec:safety-check}), we build sketches over the query's group-by attributes. In addition, we evaluated how the number of fragments of a provenance sketch affect the selectively of the sketch (the fraction of input data covered by the sketch). We vary the number of fragments from 32 to 10000. We show the selectivity for each query and table accessed by a query in~\cite{techreport}. For many queries we already achieve selectivities of a few percent for \pss{4000}.
For queries that are not shown in the following either the provenance is already too large for this queries to benefit from \pbds (e.g., query Q1's provenance consists of over 95\% of its input) or the query's selection conditions are already restrictive leaving no further room for improvement.
\BG{In addition to presenting the results for capture and use of provenance sketches, we also collect the provenance sketch size for each table accessed by the queries to verify our results.}
\XN{duplicated:
We limit the use and capture experiments to queries where provenance sketches can be effective, e.g., if all input data of a query belongs to the provenance then provenance sketches cannot improve performance.} \BG{At last, we study from our result (capture and reuse) to see when we would begin to gain the benefit (recall we need to pay the first run of the query to get the provenance sketch firstly) and which number of fragments we need to create in provenance sketch for different user cases.}

\iftechreport{
\begin{figure}[t]
\begin{adjustbox}{max width=1\linewidth}
\begin{tabular}{c|c|c|c|c||c|c|c|c|}
\cline{2-9}
                                     & \multicolumn{4}{c||}{\textbf{1GB}} & \multicolumn{4}{c|}{\textbf{10GB}}                                                                                                                    \\ \hline
\multicolumn{1}{|c||}{\cthead Query} & \cthead  No-PS                       & \cthead \pss{4000} & \cthead \pss{10000} & \cthead \pss{100000} & \cthead No-PS       & \cthead \pss{4000} & \cthead \pss{10000} & \cthead \pss{100000} \\ \hline
\multicolumn{1}{|c||}{Q2}            & {[}1, 10)                          &                    & {[}10, 73)          & {[}73, $\infty$)     & {[}1, 3)          & {[}3, 6)           & {[}6, 215)          & {[}215, $\infty$)    \\ \hline
\multicolumn{1}{|c||}{Q3}            & {[}1, 2)                           & {[}2, $\infty$)    &                     &                      & {[}1, 2)          & {[}2, 17)          & {[}17, 1629)        & {[}1629, $\infty$)   \\ \hline
\multicolumn{1}{|c||}{Q5}            & {[}1, 3)                           &                    &                     & {[}3, $\infty$)      & \cellcolor{black} & \cellcolor{black}                  & \cellcolor{black}                   & \cellcolor{black}                    \\ \hline
\multicolumn{1}{|c||}{Q7}            & {[}1, 2)                           &                    &                     & {[}2, $\infty$)      & \cellcolor{black}                 & \cellcolor{black}                  & \cellcolor{black}                   & \cellcolor{black}                    \\ \hline
\multicolumn{1}{|c||}{Q8}            & {[}1, 4)                           &                    & {[}4, 5)            & {[}5, $\infty$)      & \cellcolor{black}                 & \cellcolor{black}                  & \cellcolor{black}                   & \cellcolor{black}                    \\ \hline
\multicolumn{1}{|c||}{Q10}           & {[}1, 2)                           & {[}2, 46)          & {[}46, 667)         & {[}667, $\infty$)    & {[}1, 2)          & {[}2, 3)           & {[}3, 462)          & {[}462, $\infty$)    \\ \hline
\multicolumn{1}{|c||}{Q17}           & {[}1, 2)                           &                    & {[}2, 799)          & {[}799, $\infty$)    & \cellcolor{black}                 & \cellcolor{black}                  & \cellcolor{black}                   & \cellcolor{black}                    \\ \hline
\multicolumn{1}{|c||}{Q18}           & {[}1, 2)                           & {[}2, 3)           & {[}3, 1968)         & {[}1968, $\infty$)   & {[}1, 2)          &                    & {[}2, $\infty$)     &                     \\ \hline
\multicolumn{1}{|c||}{Q19}           & {[}1, 3)                           &                    & {[}3, 195)          & {[}195, $\infty$)    & {[}1, 7)          &                    & {[}7, 8)            & {[}8, $\infty$)      \\ \hline
\multicolumn{1}{|c||}{Q20}           & {[}1, 3)                           & {[}3, 11)          & {[}11, 387)         & {[}387, $\infty$)    & {[}1, 5)          & {[}5, 8)           & {[}8, 540)          & {[}540, $\infty$)    \\ \hline
\multicolumn{1}{|c||}{Q21}           & {[}1, 15)                          &                    & {[}15, 507)         & {[}507, $\infty$)    & {[}1, 6)          & {[}6, 26)          & {[}26, $\infty$)    &                     \\ \hline
\end{tabular}
\end{adjustbox}
\trimfigspace
\caption{Optimal \#fragments varying \#repetitions}
\label{fig:option}
\end{figure}
}
\iftechreport{
\parttitle{Provenance Sketch Selectivity}
\Cref{fig:sel-tpch-1gb}  shows the percentage of data covered by provenance sketches for each relation accessed by a query when varying the number of fragments from 32 to 10000. We use colors to denote relations and patterns to denote number of fragments. For consistency we generate provenance sketches on the primary key attributes of a relation. However, for cases where using the PK would be  unsafe (see \Cref{sec:safety-check}) build the sketch over the query's group-by attributes. 
For queries that are not shown here either the provenance is already too large for this queries to benefit from \pbds or the query's selection conditions are already restrictive leaving no further room for improvement. 
For example, query Q1's provenance consists of over 95\% of its input (the lineitem table).
Note that about half of the TPC-H queries are quite selective in terms of provenance and this selectivity can be exploited by sketches even when using only a moderate number of fragments. For many queries we already achieve selectivities of a few percent for \pss{4000}.}

\parttitle{Postgres - Capture \& Reuse}
Next, we evaluate whether these input size reductions lead to significant performance improvements.
\Cref{fig:tpch-post-1gb,fig:tpch-post-10gb} show the runtime of TPC-H queries using captured provenance sketches  (\textit{\ps}) and without  \pbds (\textit{\nor}).
For this experiment we created zone maps for all tables.\footnote{Zone maps are called brin indexes in Postgres.} Furthermore, we create indexes on 
PK and FK columns.
Thus, the database system has plenty of index structures to choose from when evaluating queries.
Unless stated otherwise, queries apply the binary search (\textit{BS}) method to test whether a tuple belongs to a provenance sketch 
(\Cref{sec:ps-reuse}). 



\iftechreport{\Cref{fig:tpch-post-1gb} shows runtimes for SF1.  Q3 is a top-10 query that returns the 10 orders with the highest revenue. It is highly selective on the \pk of orders and customer (at most 10 customers have submitted these orders).
}
\ifnottechreport{
\Cref{fig:tpch-post-1gb} shows runtimes for SF1. Q3 is a top-10 query that returns the 10 orders with the highest revenue. It is highly selective on the \pk of orders and customer.
}
Since we 
use equi-depth histograms to determine partition ranges, each fragment contains approximately the same number of rows.
Thus, the runtime of the query is roughly linear in the number of rows contained in the 10 fragments of the provenance sketch, e.g., $\sim\frac{1}{40}$ the runtime without \pbds for \pss{400}. 
We observe similar behavior for queries Q10 and Q18 which are top-20 and top-100 queries, respectively.
Q19 is an interesting case since it consists of an aggregation over a complex selection condition. This demonstrates that \pbds can sometimes unearth additional ways to exploit selection conditions that the DBMS was unable to detect.
\iftechreport{As shown in \Cref{fig:sel-tpch-1gb}, for queries Q5, Q7, Q8, Q20 and Q21 we need larger numbers of fragments to be able to benefit from \pbds. This trend is also reflected in query performance \Cref{fig:tpch-post-1gb}.}
\ifnottechreport{For queries Q5, Q7, Q8, Q20 and Q21 we need larger numbers of fragments to be able to benefit from \pbds we measured in sketch selectivity. This trend is also reflected in query performance.}
While queries Q2 and Q17 are selective in terms of relevance, their selection conditions are already quite restrictive leaving little room for further improvement.
\Cref{fig:tpch-post-10gb} shows runtimes for SF10. Observe that the runtime of queries Q2, Q3, Q10, Q20 and Q21 show similar behavior for SF10 (10GB) as for SF1.



\BG{The runtime of the query is roughly linear in the data size of 10 fragments, e.g., $\sim\frac{1}{40}$ of the runtime without \pbds ~for \pss{400} (a partitioning with 400 fragments).
We observe similar behavior for queries Q10 and Q18 which are top-20 and top-100 queries respectively. However, Q18 only returns 10 result tuples when running on TPC-H 1GB  dataset even though it is a top-100 query, this is why even \pss{32} is also selective (See \Cref{fig:sel-tpch-1gb}). Q19 has the similar trend, however, the selectivity of \pss{32} is almost equal to 1, which confirms that the runtime of Q19 on \pss{32} is same with the runtime without \pbds (\nor) in \Cref{fig:tpch-post-1gb}.
Q5, Q7, Q8, Q20 and Q21 have similar selectivity trends, all of them are only selective when the number of fragments larger or equal to 4000 and the performance in \Cref{fig:tpch-post-1gb} confirms it.
Since the selection conditions of Q2 are already selective enough, even though Q2 is selective, only the provenance sketch with high selectivity could improve the query performance, e.g., \pss{10000} improves performance by 13\% which costs 0.46 seconds.  Q17 has similar trends with Q2.}
\iftechreport{
Binary search is typically more efficient when the number of fragments in the provenance sketch is large. 
However, for very selective provenance sketches, using a B-tree index will be more efficient.
\Cref{fig:use-tpch-post-1gb-or} shows the runtime of queries with selective provenance sketches over a SF1 instance when translating the sketch into a disjunctive condition (\textit{OR}). We only show queries whose runtime is improved compared to using binary search. The most significant improvements are for Q15 which did not benefit from \pbds for binary search and Q2 whose runtime is reduced to 0.1 seconds for \pss{400} and 0.066 seconds for \pss{10000}.}

\BG{For the same query Q2,  \pss{400} costs 0.1 seconds and \ps 10000 costs 0.066 seconds since using index is quicker than the query's selection conditions. Similarly the performance of Q3, Q10, Q18 and Q19 was improved again and also Q15 even thought it can not be sped up when using binary search. However, Q10 on \pss{10000} costs 0.075 seconds higher than the cost 0.054 seconds using binary search since large number of \lstinline!OR! clauses would be expensive. 
Q5, Q7, Q8, Q17, Q20 and Q21 in \Cref{fig:tpch-post-1gb} which are not shown in \Cref{fig:use-tpch-post-1gb-or}, since they can only get benefit from when the number of fragments larger or equal to 4000 which has large \ps ~size which would result in large number of \lstinline!OR! clause to make the query cost high.}

\Cref{fig:cap-tpch-post-1gb,fig:cap-tpch-post-10gb} show the overhead of capturing provenance sketches relative to executing the queries without any instrumentation for SF1 and SF10. Note that for some queries the overhead is less than 20\% while it is always less than 100\% for partition sizes up to 10000 fragment.  The overhead increases slightly in the number of fragments 
since larger number of fragments result in larger bitvectors 
and require more computation per input tuple to determine which fragment a tuple belongs to. Here we benefit from using binary search instead of a linear sequence of \lstinline!CASE! expressions.
For 100000 fragments the overhead is typically between 20\% and 700\% with an outlier (Q20) for the 10GB database which has $\sim$7500\% overhead.
\ifnottechreport{
Furthermore, we analyze whether the overhead of capture can be amortized by using
provenance sketches. We show the result in~\cite{techreport} that the overhead of
capturing a sketch is often amortized by using the sketch only once or
twice. }
\iftechreport{
\parttitle{Amortizing Capture Cost}
We now analyze whether the overhead of capture can be amortized by using
provenance sketches.  \Cref{fig:option} shows for each TPC-H query and a given
number of repetitions ($\runs$) of this query, the partition size (if any) that
minimizes total query execution cost. Note that these numbers are for
Postgres. For each partition size (and no partitioning) we show the interval of
repetitions for which this option is optimal. For example, consider query Q10
and the 1GB instance, [1,2) in column \nor means that if we only need to run the
query once, then the optimal choice is to not create any provenance sketch.  Let
$C_{\nor}$, $C_{cap}$ and $C_{reuse}$ represent the cost of running the query,
capturing the provenance sketch for the query, and running the instrumented
query which uses the sketch, respectively. Then the cost of evaluating the query
$runs$ times without \pbds is $C_{\nor} * \runs$. When using a sketch we have to
create the sketch and then evaluate the query $\runs$ times using the sketch:
$C_{cap} + C_{use} * \runs$. Based on these formulas we can determine which
option is optimal for $\runs$ repetitions. We do not show \pss{32}, \pss{64},
and \pss{400} since these options are dominated by other options for all values
of $\runs$.  Note that use of provenance sketches for \pbds often results in
performance improvements of several orders of magnitude. Thus, the overhead of
capturing a sketch is often amortized by using the sketch only once or
twice. Cells which are blacked out are queries for which \pbds is not beneficial
for this dataset size.}

\parttitle{MonetDB}
We also run \pbds on MonetDB to test our approach on an operator-at-a-time columnar main-memory system without indexes that is optimized for minimizing cost per tuple.
\Cref{fig:monetdb-tpch-1gb,fig:monetdb-tpch-10gb} show the runtime for using sketches.
Even though we cannot exploit any physical design, there are still several queries for which \pbds is beneficial. However, for 1GB the overhead of evaluating \lstinline!WHERE! clause conditions can outweight the benefits of reducing data size for larger number of fragments for queries Q2 and Q10.
%
%
\Cref{fig:monetdb-tpch-overhead-1gb,fig:monetdb-tpch-overhead-10gb} shows the relative overhead of provenance capture which exhibits similar trends as for Postgres but is usually higher.
We omit \pss{100000} since it did not result in additional improvement.
\iftechreport{
\subsection{Real World Datasets}
}
\ifnottechreport{
\subsection{Stack Overflow Dataset}
}
\label{sec:real-world-datasets}

\BG{Maybe the discussion on combining attributes has to go}
\iftechreport{
\parttitle{Crimes}
C-Q1 is a top-5 query grouping on 
geographical attributes while C-Q2 is two-level aggregation where the inner aggregation groups on geographical attribute \textit{block}. We consider provenance sketches over a combination of all group-by attributes (denoted as \ps{MIX}). 
Since these are strongly correlated geographical attributes with a low number of distinct values, 
i.e., a one-to-one correspondence between fragments and group-by values. We show the runtime of \pbds and capture in \Cref{fig:realdb-use,fig:realdb-cap}, respectively. \pbds improves performance by 88.5\% for C-Q1 and 30.3\% for C-Q2. The capture overhead is larger in this experiment than for the TPC-H queries since these queries do not use any selection conditions and thus a singleton sketch has to be produced for every input row.}
\BG{Selectivity of provenance is $\sim$2\%}

\iftechreport{
\parttitle{Movies}
Similar to the crime dataset we build provenance sketches over the group-by attributes.
The main difference is that the number of distinct values in the group-by attribute (\texttt{movieid}) is quite large. 
The runtime of M-Q1, M-Q2 and M-Q3 is improved by 61\% , 72.2\%  and  35\%, respectively for \pss{10000}. The capture overhead ranges between a factor of $\sim 0.37$ and $\sim 3.08$ (none of these queries contain any selection conditions).
}

\iftechreport{
\parttitle{Stack Overflow (SOF)}
Compared with crimes and movies dataset, this is a very large dataset, thus we only consider 1000 and 10000 fragments. }
\ifnottechreport{
Since this is a large dataset, we only consider 1000 and 10000 fragments.
}
\Cref{fig:stack-use} shows that \pbds is quite effective 
improving query performance by 96.9\% to 98.85\% for \pss{10000}. The capture overhead ranges between a factor of $\sim -0.14$ and $\sim 1.2$ (\Cref{fig:stack-cap}).
The negative overhead is caused by Postgres uses parallel pre-aggregation for the capture query, but not for the query without capture. We show results for additional real world datasets in~\cite{techreport}.

\ifnottechreport{
\begin{figure}[t]
  \centering
  \begin{minipage}{1\linewidth}
         \begin{subfigure}{0.32\linewidth}
      \includegraphics[width=1\linewidth,trim=50pt 50pt 10pt 0pt, clip]{figs/end_to_end_adapt/stack_overflow_end.pdf}
      \trimfigspace[-0.6cm]
      \caption{Mix templates}
      \label{fig:ete-stack-mix}
    \end{subfigure}
        \begin{subfigure}{0.32\linewidth}
      \includegraphics[width=1\linewidth,trim=50pt 50pt 10pt 0pt, clip]{figs/end_to_end_adapt/stack_q3_s1k_results.pdf}
     \trimfigspace[-0.6cm]
      \caption{$SDV = 1k$}
      \label{fig:ete-stack-q3-s1k}
    \end{subfigure}
        \begin{subfigure}{0.32\linewidth}
      \includegraphics[width=1\linewidth,trim=50pt 50pt 10pt 0pt, clip]{figs/end_to_end_adapt/stack_q3_s10k_results.pdf}
      \trimfigspace[-0.6cm]
      \caption{$SDV = 5k$}
      \label{fig:ete-stack-q3-s5k}
    \end{subfigure}
    \begin{subfigure}{0.32\linewidth}
      \includegraphics[width=1\linewidth,trim=50pt 50pt 10pt 0pt, clip]{figs/end_to_end_adapt/stack_q2_07p_results.pdf}
      \trimfigspace[-0.6cm]
      \caption{$sel = 0.7\%$}
      \label{fig:ete-stack-sel07}
    \end{subfigure}
        \begin{subfigure}{0.33\linewidth}
      \includegraphics[width=1\linewidth,trim=50pt 50pt 10pt 0pt, clip]{figs/end_to_end_adapt/stack_q2_2p_results.pdf}
      \trimfigspace[-0.6cm]
      \caption{$sel = 2\%$}
      \label{fig:ete-stack-sel2}
    \end{subfigure}
        \begin{subfigure}{0.33\linewidth}
      \includegraphics[width=1\linewidth,trim=50pt 50pt 10pt 0pt, clip]{figs/end_to_end_adapt/stack_q2_5p_results.pdf}
      \trimfigspace[-0.6cm]
      \caption{$sel = 5\%$}
      \label{fig:ete-stack-sel5}
    \end{subfigure}
   \trimfigspace[-0.4cm]
   \caption{End-to-end experiments on stack overflow data. We report the cumulative runtime in sec (the x-axis show the number of queries that have been executed up to that point).}
    \label{fig:ete}
   \end{minipage}
  \end{figure}


}

\subsection{End-to-end Experiment}\label{sec:self-tuning}

We now evaluate \pbds in a self-tuning setting over workloads that consist of multiple instances of one or more parameterized queries. We designed strategies to decide for each incoming query whether we will capture a sketch, use a previously captured sketch, or just execute the query without any instrumentation. We use our safety tests to determine which attributes are safe for a parameterized query and the method described in~\Cref{sec:reuse-different} to determine whether one of the sketches we have captured can be used to answer an incoming query.
\ifnottechreport{Given a set of query templates, we generate workloads by randomly choosing for each query the template and parameter values.}
\iftechreport{
We generate each template-based query instance based on following steps:
1) rolling a dice to decide which template to use; 2) random choosing parameter values (the values satisfy normal distribution per template);
 3) for interval paramaters, e.g., $ a>\$1$ and $a<\$2$, we do step 2) two times, one for start point ($\$1$), one for the interval size $n$, then $\$2=\$1+n$.}
We evaluate the performance of our strategies 
varying query selectivity ($sel$) and normal distribution coefficients.
The templates used here are modified from the queries introduced in \Cref{sec:dataset} by changing \lstinline!LIMIT! to a \lstinline!HAVING!.
Furthermore, we measure the cost for determining safe attributes for creating provenance sketches and the cost of checking reusability of provenance sketches.
\ifnottechreport{\parttitle{Self-tuning Strategy}}
\iftechreport{\parttitle{Self-tuning Strategies}}
\iftechreport{\emph{Eager strategy}: } 
\ifnottechreport{We 
keep track of sketches we have captured 
by mapping pairs of parameterized queries and parameter bindings to the sketches we have created.  for these queries and parameter bindings.} 
\iftechreport{We 
keep track of sketches we have captured 
using a map between values of the query parameters and sketches.}
Since sketches are not effective for non-selective queries, we estimate the selectivity of queries and if it is above a threshold (75\% in our experiments), we execute the query without using a sketch. Otherwise, we check whether any of the sketches we have captured so far can be used. If this is the case, we instrument the query to use this sketch. If no such sketch exists, then we record what sketch could have been used for the query. To avoid paying overhead for sketches that are rarely used, we only create a new sketch once we have accumulated enough evidence that the sketch is needed (the number of times it could have been used is above a threshold).
We call this the \emph{adaptive strategy}. \ifnottechreport{In \cite{techreport} we also evaluate an \emph{eager strategy} that creates new sketches whenever a query cannot use any of the existing sketches.}


\BG{For a 
query with parameters  $[l,h)$, 
if the query is not selective (75\% or more of the data),  
then we execute the query without using a sketch. 
 Otherwise, we find the smallest interval $[l',h')$ that contains $[l,h)$ and for which we have captured a provenance sketch in the past. We then apply \pbds using this sketch. If no such sketch exists, then we instrument the query to capture a sketch for $[l,h)$. However, 
less often used skeches might result in unnecessary cost, 
we could delay the capture 
once we encountered the interval enough times (\emph{adaptive strategy}).
\iftechreport{We evaluate  \emph{eager} strategy over crimes dataset and \emph{adaptive} strategy over stack overflow dataset. }}

\iftechreport{
\parttitle{Crimes}
In ~\Cref{fig:ete-crimes-mix}, we mix four templates and each one contains up to five parameters. 
Recall \emph{eager} method creates sketch if no existing sketch could be used which might spend much time in the early stage to creating large number of sketches, thus
we start to gain from the $133_{th}$ queries. 
However, these costs would be amortized with following queries, 
e.g., 50.8\% performance improvement until the $3000_{th}$ query.
Furthermore, we learn the influence by varying the query selectivity. 
In \Cref{fig:ete-crimes-sel}, we generate different selectivity queries from single template.
As we imagined that high selectivity results in a lower performance improvement, however, we still improve performance by 60\% at 20\% selectivity. }

\iftechreport{
\parttitle{Stack Overflow (SOF)}}
\ifnottechreport{
\parttitle{Stack Overflow}
}
 \Cref{fig:ete-stack-mix} shows the result of a workload over the stack overflow dataset with three query templates. We use paramters such that queries have an average selectivity of $\sim$1\%. 
 Since \emph{adaptive} delays capturing sketches and we pay for creating sketches,  \emph{adaptive} and execution without \pbds (\emph{No-PS}) perform roughly the same. However, after $19$ query instances have been executed, \emph{adaptive} starts to accumulate benefits from using sketches. These benefits increase over time while more and more sketches become available for reuse. \emph{adaptive} ourperforms \emph{No-PS} by a factor of $\sim 3$ with respect to total workload execution time (800 queries).
We also evaluated how query selectivity affects performance. 
\Cref{fig:ete-stack-sel07}, ~\ref{fig:ete-stack-sel2} and~\ref{fig:ete-stack-sel5}, show results varying the average  query selectivity (0.7\%, 2\% and 5\%). For this experiment we use a single query template. 
\BG{What template?}
As expected we benefit less for higher selectivites.
Finally, we vary the standard deviation (\emph{SDV}) of the normal distribution we use to determine parameter values (1000 and 5000) and fix selectivity to 1\%. As expected (\Cref{fig:ete-stack-q3-s1k,fig:ete-stack-q3-s5k}), we accumulate benefits faster when parameter values are more clustered (\emph{SDV}=1000), because a smaller number of sketches is sufficient for covering most queries.


\parttitle{Safety and Reuse Check Overhead}
We separately measured the overhead  of safety and reuse checks (both are $\sim 20$ ms per check).
\emph{Safety checks}: If there are $n$ sets of columns we want to check is $n$, then the total cost is $0.02 * n$ seconds. Since we only need to evaluate safety one time per query template, this cost is negligible.
\emph{Reusability check}:
Given $k$ templates and 
$m$ sketches for the each template, we have to test which template a query corresponds to. This takes about 0.05 ms per template. Then we check for each sketch we have created for the query's template whether it can be used to answer the query. Thus, the total time requires to find a sketch to use is
$k \cdot 0.00005 + 0.02 \cdot m$ seconds. 
\section{Conclusions and Future Work}
\label{sec:conclusion}

We present provenance-based data skipping (PBDS), a novel technique that determines at runtime which data is relevant for answering a query and then exploits this information to speed-up future queries. The main enabler of our approach are provenance sketches which are concise over-approximations of what data is relevant for a query. 
We develop self-tuning techniques for reusing a provenance sketch captured for one query to answer a different query.  
PBDS results in significant performance improvements for important classes of queries such as top-k queries that are highly selective, but where it is not possible to determine statically what data is relevant.
In the future, we will investigate how to maintain provenance sketches under updates and extend our self-tuning techniques to support wider range of queries and more powerful strategies. 






\bibliographystyle{ACM-Reference-Format}
\bibliography{prov-data-skipping}

\end{document}
\endinput